\def\paragraph{\@startsection{paragraph}{4}  \z@\z@{-\fontdimen2\font}  {\normalfont\bfseries}}
\newcommand*\diff{\mathop{}\!\mathrm{d}}
\def\CV{\text{CV}}
\def\cvh{\mathrm{cvh}}
\renewcommand{\subset}{\subseteq}
\def\ba{\mathrm{ba}}
\def\dimU{n}
\def\prizes{\mathcal{P}}
\newlength{\twinwd}
\def\one{\mathbf{1}}
\def\B{\mathcal{B}}
\def\D{\mathcal{D}}
\def\K{\mathcal{K}} 
\def\H{\mathcal{H}}
\def\F{\mathcal{F}}
\def\Re{\mathbf{R}} \def\R{\mathbf{R}}
\def\E{\mathbf{E}}
\def\ep{\varepsilon}
\def\ta{\theta}
\def\al{\alpha}
\def\la{\lambda}
\def\da{\delta}
\def\phi{\varphi}
\def\sa{\sigma}
\def\one{\mathbf{1}}
\def\os{\emptyset}
\newcommand{\df}[1]{\textit{#1}}
\newcommand{\norm}[1]{\| #1 \|}
\newcommand{\ip}[1]{\langle #1 \rangle}
\newcommand{\abs}[1]{ \left | #1 \right | }
\def\nagents{m}
\def\smallskip{\vspace{.3cm}}
\def\medskip{\vspace{.8cm}}
\def\bigskip{\vspace{1.5cm}}
\newcommand{\argmax}{\operatornamewithlimits{argmax}}
\newtheorem{proposition}{Proposition}
\newtheorem{corollary}{Corollary}
\newtheorem{theorem}{Theorem}
\newtheorem{lemma}{Lemma}
\theoremstyle{remark}
\newtheorem{remark}{Remark}
\newtheorem*{example}{Example}
\theoremstyle{definition}
\newtheorem*{definition}{Definition}
\begin{document}

\title[Distributional Welfare Analysis]{Utilitarian Social Choice and Distributional Welfare Analysis}

\author[Echenique]{Federico Echenique}
\author[Valenzuela-Stookey]{Quitz\'{e} Valenzuela-Stookey}
\address{Department of Economics, UC Berkeley}
\email{fede@econ.berkeley.edu,quitze@berkeley.edu}
\thanks{We are deeply indebted to Matt Backus for early discussions on the use of welfare evaluations and discrete-choice models in IO, which inspired our work. Jinglin Yang provided superb research assistance.}

\begin{abstract}    
    The widely-used ``utilitarian'' approach to welfare analysis, based on the axiomatic foundation of \cite{harsanyi1955cardinal}, may overlook important distributional considerations. We therefore introduce a notion of utilitarianism that applies to \textit{social choice functions}, describing the actions of society, rather than social welfare functions describing society's preferences (as in Harsanyi). We characterize a representation of utilitarian social choice, and show that it provides a foundation for a family of \textit{distributional welfare measures} based on quantiles of the distribution of individual welfare effects, rather than averages. We illustrate this approach in the context of the automobile market in \cite{berry1995automobile}. 
\end{abstract}

\begin{titlepage}
\maketitle     \thispagestyle{empty}
\noindent\textsc{JEL Classification:} D60, D63, D71 \\
\noindent\textsc{Keywords:} Utilitarianism; discrete choice; compensating variation.
\end{titlepage}

 \clearpage\thispagestyle{empty}

 {\small
 \tableofcontents
 }
\setcounter{page}{0}
\clearpage

\section{Introduction}\label{sec:introduction}

How society’s choices should reflect the preferences of its members is a core problem in economics. It is often framed in terms of social preferences: given individuals' preferences over possible outcomes, what preferences over those same outcomes should society aim to maximize? \cite{arrow1950difficulty} famously showed that there is no perfect way to aggregate individual preferences into a collective one without giving up some intuitively desirable properties. Yet, because we still need a concrete standard to judge social choices, a vast literature has emerged to navigate the trade-offs exposed by Arrow’s theorem (see, for instance, \cite{sep-social-choice} for an overview).

For economists, a key reason to study social preferences is to serve as a foundation for welfare analysis. In practice, the dominant approach is some variant of utilitarianism: we specify a utility function that represents each individual's preferences and then evaluate social outcomes using a weighted average of individual utilities. This intuitively appealing criterion is, moreover, supported by seemingly strong axiomatic support. \cite{harsanyi1955cardinal} argues that the utilitarian representation of social preferences is implied by a normatively appealing Pareto principle: if every individual strictly prefers alternative A to alternative B, then society must also prefer A to B.\footnote{Harsanyi avoids Arrow’s impossibility result by restricting preferences to those representable by expected-utility theory and by dropping Arrow’s independence of irrelevant alternatives axiom.}

Nonetheless, there exists a long-running dissatisfaction with utilitarianism as a social-welfare metric (e.g.\ \citealt{rawls1971theory}), and within economics, there is growing interest in welfare analysis that takes inequalities into account \citep{kanbur1994optimal, roemer1998equality, saez2016generalized, card2018firms, chetty2020income, dworczak2021redistribution, alesina2023immigration, backus2024surplus}. This creates a clear tension: Harsanyi's characterization is normatively appealing, but many economists now want to move beyond utilitarianism to embed distributional justice directly into welfare criteria. Resolving this tension is difficult, however, because many proposed alternatives to utilitarianism lack the kind of axiomatic foundations needed to establish their underlying normative principles.

The objectives of this paper are twofold. First, we seek to understand the tension between Harsanyi's Pareto principle and the desire for distributional welfare measures. In particular, we show that a foundation for distributional welfare analysis can be derived by aggregating choices rather than preferences. Focusing on the choice that society makes from a set of alternatives, as opposed to the entire ranking over these alternatives, is one of the main avenues that the literature has explored to escape the conclusion of Arrow's theorem (e.g.\ \citealt{sen_collectivechoice1970, satterthwaite1975strategy, moulin1980on}). \emph{Our key conceptual innovation is to introduce a notion of utilitarianism in this setting by adapting Harsanyi's Pareto principle to apply to social choice functions rather than social preferences.} To do this, we adopt the model of random expected utility of \cite{gul2006random} rather than the deterministic expected utilities assumed by Harsanyi. This allows us to consider weighted averages of random utilities, with the corresponding aggregation of choices into a single choice rule for a utilitarian planner. 

In our main result (Theorem~\ref{thm:main}), we introduce a Pareto criterion tailored to random expected utility and random choice. Our Pareto principle is based on agreements regarding choice probabilities: whenever all agents agree that A should be chosen more often than B, then the planner must also choose A more often than B. More broadly, whenever agents unanimously agree on certain statements regarding the choice probabilities, the planner must also agree. We prove that this probabilistic Pareto criterion holds if and only if the planner behaves according to our version of utilitarianism.

Second, we provide a foundation for a class of \textit{distributional welfare measures} designed for applications where consumers face price changes. These measures are based on quantiles of the distribution of compensating variation in the population. We show that each such measure can be interpreted as the \textit{stochastic} compensating variation corresponding to a utilitarian social choice function of the form characterized in Theorem~\ref{thm:main}. This justifies, for instance, replacing the standard focus on the average compensating variation with the median. Crucially, our framework does not require the social planner to make interpersonal comparisons of preference intensity, thereby overcoming a central limitation of Harsanyi’s theorem as a basis for welfare analysis. From an applied perspective, our results show that considering the median alongside the mean is a principled way to introduce distributional considerations into welfare analysis. 

The paper is organized as follows. In \Cref{sec:motivation} we introduce some of the conceptual difficulties involved in distributional welfare analysis, using as a running example the problem of evaluating changes in prices in a consumer-choice setting. In \Cref{sec:model} we present the general model and notation. In \Cref{sec:collective} we formally present Harsanyi's characterization of utilitarian social preferences, and then introduce our notion of utilitarian social choice. We then characterize the class of utilitarian social choice functions as those which satisfy a suitable Pareto axiom. In \Cref{sec:distributional} we study the implications of the model for applied welfare analysis, including results on the identification of distributional welfare measures from demand data. In \Cref{sec:empirical} we illustrate the application of these results in empirical work by revisiting the welfare analysis of voluntary export restraints \citep{berry1999voluntary} from a distributional perspective. In \Cref{sec:further_discussion} we discuss additional properties of utilitarian social choice. \Cref{sec:alignment} presents an alternative characterization of utilitarian social choice. We elaborate on the connection to the literature in \Cref{sec:literature}.

\section{Motivating application: evaluating price changes}\label{sec:motivation}

Consider a discrete choice setting in which consumers choose among a finite set, $Z$, of goods which differ in their characteristics and prices.\footnote{The subsequent discussion applies to continuous choice as well. The discrete choice assumption only becomes relevant when we discuss estimation in \Cref{sec:estimation}. A similar discussion also applies to other policies, but here we focus on price changes for concreteness.} Suppose we want to answer the question: should society adopt a policy which shifts prices to $p'$ from a baseline of $p^0$? The standard approach is to first use data on prices and market shares to estimate a random utility model of individuals' preferences. That is, each consumer is assumed to solve
\begin{equation*}
 v(p,y|\varepsilon) := \max_{x \in Z} \ u(x, y - p_x |\varepsilon) 
\end{equation*}
 where $y$ is income, $p_x$ is the price of good $x$, and $\varepsilon$ is a vector of preference parameters. For a set of observable consumer types, $I$, the goal is to estimate the family of distributions $\{\hat{F}_i\}_{i\in I}$ such that $\varepsilon \sim \hat{F}_i$ in the population of consumers with type $i \in I$.\footnote{This population-level interpretation is often used interchangeably with the random-utility interpretation, whereby $F_i$ parameterizes a distribution over the preferences of an individual agent $i$.} To simplify the notation, assume that all consumers have the same income (this has no substantive impact on the current discussion). Given the estimated conditional distributions $\{\hat{F}_i\}_{i\in I}$, let $\hat{F}$ be the corresponding marginal distribution over preferences in the population as a whole, i.e. $\hat{F} = \sum_{i\in I} \sigma_i \hat{F}_i$, where $\sigma = (\sigma_i)_{i\in I}$ is the marginal distribution over types. 
 
 With the distribution $\hat{F}$ in hand, answering the policy question of interest entails somehow aggregating individuals' preferences. The early literature on applied welfare analysis identified three main alternatives \citep{hanemann1996welfare}.
\begin{enumerate}
    \item \textbf{\textit{Average indirect utility.}} Define the average indirect utility
    \begin{equation*}
        V(p,y) = \E_{\varepsilon \sim \hat{F}}[v(p,y|\varepsilon)]. 
    \end{equation*}
    Under this approach, we say that the policy shift from $p^0$ to $p'$ should be adopted if $V(p',y) > V(p^0,y)$. This is sometimes referred to as the Bergson-Samuelson approach \citep{bhattacharya2024nonparametric}.  In this setting it is equivalent to quantifying the gains using \textit{aggregate compensating variation}, defined as the value $\CV_{agg}(p')$ satisfying 
    \begin{equation*}
        V(p',y - \CV_{agg}(p')) = V(p^{0},y)
    \end{equation*}
    since $\CV_{agg}(p') \geq 0 \Leftrightarrow V(p',y) \geq V(p^{0},y)$ under the standard assumption that indirect utility is strictly increasing in money. 
    \item \textbf{\textit{Average compensating variation.}} An alternative approach is to first calculate the compensating variation for each preference type: let $\CV(p'|\varepsilon)$ be the value satisfying
    \begin{equation*}
        v(p',y - \CV(p'|\varepsilon)| \varepsilon) = v(p^{0},y|\varepsilon).
    \end{equation*}
    Under this approach, we say that the shift from $p^{0}$ to $p'$ should be adopted if $\CV_{avg}(p') := \E_{\varepsilon \sim \hat{F}}[\CV(p'|\varepsilon)] \geq 0$. 
    \item \textbf{\textit{Median compensating variation.}} Let $\CV_{med}(p')$ be the median $\CV(p'|\varepsilon)$ under distribution $\hat{F}$. Adopt the policy if $\CV_{med}(p') \geq 0$.\footnote{For a distribution with CDF $G$, the median is $\inf\{x : G(x) \geq 0.5\}$. Of course, we can also imagine using other features of the compensating-variation distribution to evaluate the policy, besides the mean and median. We could also replace compensating variation with equivalent variation in all three approaches; the conceptual point remains the same.}
\end{enumerate} 

There are important conceptual differences between these three approaches. The value $\CV_{agg}$ is the answer to the question: under the new prices, what is the maximum payment that could be taken from all consumers so that on average the population is no worse off, using $v$ as a cardinal (interpersonally comparable) measure of individuals' welfare? In contrast $\CV_{avg}$, while superficially similar, answers a very different question: if we could enforce payments conditional on $\varepsilon$, what is the maximum amount of money that could be extracted from the population without making any individual consumer worse off.\footnote{If individuals' utilities are linear in the numeraire, i.e.\ there are no wealth effects, then a consumer's compensating variation is just the difference in their value function, so $\CV_{agg}$ and $\CV_{avg}$ coincide.} Finally, $\CV_{med}$ is the smallest payment that could be taken from all consumers such that at least half of them are no better off. 

The first two measures, $V$ (or $\CV_{agg}$) and $\CV_{avg}$, share the appealing feature that they can be interpreted as the utilitarian aggregation of individuals' preferences over price vectors. That is, both measures induce a social preference, $\succsim$, over alternative price vectors represented by
\begin{equation*}
  p' \succsim p^0 \Leftrightarrow \int w(p'|\varepsilon) \hat{F}(d\varepsilon) \geq \int w(p^0|\varepsilon) \hat{F}(d\varepsilon), 
\end{equation*}
where $w(p'|\varepsilon)$ is a representation of the preferences over price vectors of an $\varepsilon$-consumer. In the case of $V$, $w(p'|\varepsilon)$ is simply the indirect utility function $v(p',y|\varepsilon)$, and for $\CV_{avg}$, $w(p'|\varepsilon) = \CV(p'|\varepsilon)$. For any $\varepsilon$,  $v(p',y|\varepsilon)$ and $\CV(p'|\varepsilon)$ are just different cardinal representations of the same ordinal preferences over price vectors. In a seminal paper, \cite{harsanyi1955cardinal} showed that generalized utilitarian social preferences, in which individual utilities are averaged using \textit{some} distribution, are the only ones which satisfy a normatively appealing \textit{Pareto property}: informally, this means that if all agents rank alternative $x$ above $y$ then so does the social preference. We provide a version of Harsanyi's justification of utilitarianism, stated as Theorem~\ref{thm:zhou} in \Cref{sec:collective}.

Harsanyi's justification provides a conceptual foundation for the welfare measures $V$ and $\CV_{avg}$, in that evaluating the proposed change to $p'$ according to either measure is consistent with the maximization of a normatively appealing social preference. But these foundations are lacking. We shall argue that $V$ and $\CV_{avg}$ cannot be fully justified using Theorem~\ref{thm:zhou}, and that they require additional assumptions that are harder to defend: see Section~\ref{sec:theparetoaxiom}, and the discussion at the end of Section~\ref{sec:harsanyijustification}. 

Each measure, $V$ (or $\CV_{agg}$), $\CV_{avg}$ and $\CV_{med}$, has its own set of limitations, which make them better suited for different tasks.\footnote{\label{footnote:scitovsky} A drawback of $\CV_{avg}$ is that while it is useful for comparing alternative price vectors to a common status-quo $p^0$, the ranking it induces over $p',p'' \neq p^0$ is not necessarily consistent with comparing $p'$ and $p''$ directly: It is possible that $\CV_{avg}(p'') \geq \CV_{avg}(p')$, and yet the average compensating variation when going from $p'$ to $p''$ is negative. In other words, the induced preference over the space of price vectors may contain cycles, a phenomenon sometimes referred to as Scitovsky reversals \citep{bhattacharya2024nonparametric}. This makes $\CV_{avg}$ better suited for evaluating an alternative against the status quo, rather than optimizing in policy space. The aggregate value function $V$ does not suffer from this limitation. On the other hand, $V$ is not invariant to monotone transformations of consumers' utility functions. It therefore cannot be identified directly from the data without some assumptions on interpersonal utility comparisons \citep{bhattacharya2024incorporating}.  In contrast, $\CV_{avg}$ can be calculated directly from Marshallian demand functions, which can be identified non-parametrically in many settings \citep{small1981applied, bhattacharya2015nonparametric, bhattacharya2018empirical}.}  $\CV_{avg}$ has become the default measure in empirical industrial organization, trade, and health-care applications for program evaluation of an intervention against the status quo, whereas variants on the Bergson-Samuelson approach are widely used in public finance for the purposes of characterizing optimal policies (e.g. \citealt{kleven2006marginal, saez2016generalized}). 

In contrast, median compensating variation has, perhaps surprisingly, received much less attention in the literature, despite possessing appealing properties. For one, unlike the mean, evaluations based on the median do not require the planner to take a stand on interpersonal comparisons of cardinal utility. Moreover, as a measure of central tendency for welfare the median has well-known advantages over the mean when the distribution is skewed (e.g. \citealt{kuznets1955economic}). Evaluations based on the mean are sensitive to extreme values, and can mask inequities. These concerns are likely to be particularly salient in the current context, as intuition suggests that the distribution of compensating variation may be skewed, particularly when high-wealth individuals have relatively inelastic demand.\footnote{The observation that median compensation might be a more suitable measure than the mean when the distribution is skewed goes back at least to \cite{hanemann1996welfare}, although previous work does not appear to have articulated the connection between skewness and wealth effects. This latter point is closely related to \cite{backus2024surplus}, who argue that mean compensating variation implies specific regressive weights on agents utility, which favor high-income individuals. Indeed, we are indebted to the authors of \cite{backus2024surplus} for early conversations which motivated our interest in utilitarian social choice.} Indeed, this intuition turns out to be correct, at least for the widely-studied case of Logit demand, where the demand for good $x \in Z$ in population $i \in I$ is given by
\begin{equation*}
    \hat{q}_x^i(p,y) = \dfrac{\exp\big(W^i_x(y - p_x)\big)}{\sum_{z\in Z} \exp\big(W^i_z(y - p_z)\big)}.
\end{equation*}
for some increasing functions $W^i_x$. 
\begin{proposition}\label{prop:logit_intro}
    Assume Logit demand, and suppose that under $p'$ the price for good $x$ increases, holding other prices fixed. If $\exp(W_z^i(\cdot))$ is convex for all $i \in I$ and $z \neq x$, and at the original prices the share of consumers purchasing good $x$ is sufficiently high, then $\CV_{med}(p') \leq \CV_{avg}(p')$.
\end{proposition}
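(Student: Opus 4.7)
The plan is to describe the law of $\CV(p'|\varepsilon)$ under $\hat F$ as a mixture with an atom at $0$, and to compare its mean and median by exploiting the Logit structure together with the concavity hypothesis.

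First, I would observe that since $p'$ and $p^{0}$ agree on every coordinate except $x$, for any type-$i$ consumer with shock $\varepsilon$ such that $z^{*}_{i}(\varepsilon):=\argmax_{z}[W^{i}_{z}(y-p^{0}_{z})+\varepsilon_{z}]\neq x$, the identity $v(p',y|\varepsilon)=v(p^{0},y|\varepsilon)$ holds, and so $\CV(p'|\varepsilon)=0$. Aggregating over $\hat F=\sum_{i}\sigma_{i}\hat F_{i}$, the induced distribution of $\CV$ has an atom at $0$ of mass $1-q$, where $q:=\sum_{i}\sigma_{i}\hat q^{i}_{x}(p^{0},y)$, and the remaining mass $q$ lies on the event $\{z^{*}_{i}=x\}$.

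Second, on $\{z^{*}_{i}=x\}$ I would further decompose: either (i) the compensated consumer still picks $x$ at $(p',y-\CV)$, in which case $W^{i}_{x}(y-\CV-p'_{x})=W^{i}_{x}(y-p^{0}_{x})$ forces $\CV=p^{0}_{x}-p'_{x}$, or (ii) the compensated consumer switches to some $z'\neq x$, in which case $\CV$ is the unique scalar solving $W^{i}_{z'}(y-\CV-p^{0}_{z'})+\varepsilon_{z'}=W^{i}_{x}(y-p^{0}_{x})+\varepsilon_{x}$. Under Type I EV shocks, the law of $v(p^{0},y|\varepsilon)$ conditional on $\{z^{*}_{i}=x\}$ is a shifted Gumbel independent of the identity of the maximizer, which lets me parametrize regime (ii) by a single scalar shock $\tau$ (a Gumbel difference) and compute both the probability of each sub-case and the conditional density of $\CV$ in regime (ii). Reassembling these yields $\CV_{avg}=q\,\E[\CV\mid z^{*}_{i}=x]$, and $\CV_{med}$ is then read off the three-part CDF (atom at $0$, point mass at $p^{0}_{x}-p'_{x}$ from regime (i), and a density on the intermediate interval from regime (ii)).

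The main obstacle is the final step: turning concavity of each $\exp(W^{i}_{z}(\cdot))$ into the claimed inequality. Concavity of $\exp(W^{i}_{z})$ gives concavity of $\xi\mapsto\sum_{z}\exp(W^{i}_{z}(\xi))$ and hence concavity of the aggregator $y\mapsto V^{i}(p,y):=\log\sum_{z}\exp(W^{i}_{z}(y-p_{z}))$ in $y$ (since $\log$ is concave and nondecreasing on positives, composition with a concave positive function stays concave). This structural concavity induces a definite one-sided curvature of the scalar map $\tau\mapsto h(\tau):=\CV$ built in step 2, and the desired comparison between $\CV_{avg}=\E[h(\tau)]$ and $\CV_{med}=h(\mathrm{median}(\tau))$ then follows from Jensen's inequality, using that the Gumbel-difference shock $\tau$ is symmetric so $\E[\tau]=\mathrm{median}(\tau)$. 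The delicate part is keeping careful track of the atom at $0$ and the point mass from stayers at $p^{0}_{x}-p'_{x}$, both of which can swing the mean-median comparison if not combined properly with the continuous switcher piece.
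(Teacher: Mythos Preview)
Your plan diverges sharply from the paper's route and, as written, does not close. The paper never unpacks the Gumbel shocks or decomposes into stayers and switchers. Instead it (i) writes the CDF of $\CV$ directly in terms of demand (their Corollary~\ref{cor:single_price_cdf}): for a single price increase of good $Z$,
\[
G_{\hat\pi_\alpha}(a)=1-Q^{\alpha}_Z\bigl(p'_{-Z},\,p^0_Z+a,\,y+a\bigr),\qquad 0\le a<\Delta p_Z,
\]
and (ii) observes that under Logit the numerator of $\hat q^i_Z(p'_{-Z},p^0_Z+a,y+a)$ is the constant $\exp\!\bigl(W^i_Z(y-p^0_Z)\bigr)$, while the denominator equals that constant plus $\sum_{z\ne Z}\exp\!\bigl(W^i_z(y+a-p'_z)\bigr)$, which is increasing and concave in $a$ by the hypothesis on $\exp(W^i_z)$. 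Since $a\mapsto c/f(a)$ is convex when $f$ is positive, increasing and concave, each $\hat q^i_Z$ (and hence $Q^\alpha_Z$) is convex in $a$, so $G$ is concave; the mean--median comparison then follows from the elementary fact that a concave CDF forces mean $\ge$ median (their Proposition~\ref{prop:single_price_median}). The concavity hypothesis is used only through this reciprocal trick, not through any log-sum-exp argument.

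The genuine gap in your proposal is the Jensen step. First, $\CV_{avg}$ and $\CV_{med}$ are moments of the \emph{full} mixture---atom at $0$, the continuous ``switcher'' piece, and the point mass at $\Delta p_x$---whereas your $h(\tau)$ at best parametrizes the switcher piece alone; you cannot compare $\E[h(\tau)]$ with $h(\mathrm{median}(\tau))$ and conclude anything about the median of the mixture. Second, even restricted to switchers, there is no single scalar $\tau$: which $z'$ the compensated consumer selects depends on all the $\varepsilon_z$, and conditional on the event $\{z^\ast=x\}\cap\{\text{switch to }z'\}$ the relevant shock difference is not symmetric, so $\E[\tau]=\mathrm{median}(\tau)$ fails. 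Third, Jensen gives you $\E[h(\tau)]\gtrless h(\E[\tau])$; to turn that into a mean--median statement you also need $h$ monotone and $\tau$ symmetric, neither of which you have established in the conditioned problem. You flag the atoms as ``delicate'' but provide no mechanism to absorb them. In short, the Gumbel decomposition is not wrong as bookkeeping, but it does not deliver the inequality; the argument that does is the convexity of $a\mapsto \hat q^i_Z(p'_{-Z},p^0_Z+a,y+a)$, which you never reach.
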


\Cref{prop:logit_intro} (which is a special case of \Cref{prop:logit} proved below) shows that for Logit demand the median compensating can be systematically below the mean. In particular, this result applies to the widely used Cobb-Douglas formulation, as in \cite*{berry1995automobile}. In \Cref{sec:empirical}, we provide an empirical illustration applied to the analysis of non-tariff trade policy in \cite*{berry1999voluntary}.

These considerations suggest that the median compensating variation may be an appropriate welfare measure in many applied settings. However, unlike $\CV_{agg}$ and $V$, we show that the social preference induced by the median compensating variation violates Harsanyi's axiom, and thus does not admit a utilitarian representation (\Cref{lem:median_negative}). This naturally raises the question: what is the behavioral content of the social preferences induced by $\CV_{med}$? Can this measure be justified via some normatively appealing social preference? We now turn to providing such a foundation.

\section{General model}\label{sec:model}

\paragraph{Utilities over policies.} We have until now focused on welfare comparisons between two policies; but it is clear how to extend these comparisons to a more general problem of choosing among a menu of different policies. For our purposes, what is important is that there is a primitive notion of individual and collective welfare associated to each policy. In particular, for our main proposal in Theorem~\ref{thm:main}, we shall need individual agents' choice behavior.

In \Cref{sec:distributional} we return to the discussion in \Cref{sec:motivation}. We argue that our findings are relevant for the welfare comparisons among a given policy and its alternatives, as is the focus of applied welfare economics.

\paragraph{Notation and preliminary definitions.}
We write $\Delta(S,\Sigma)$ for the set of (countably additive) probability measures on a measure space $(S,\Sigma)$. If $S$ is a topological space, we denote by $\B_S$ the collection of Borel sets on $S$. When $A$ is a subset of a linear space, we denote by $\cvh(A)$ the \df{convex hull} of $A$, defined as the intersection of all convex sets that contain $A$. If $x,y\in\Re^n$, we denote the inner product $\sum_{i=1}^n x_i y_i$ by  $x\cdot y$. 
\paragraph{Preferences over policies.} Let $\prizes$ be a finite set of $\dimU+1$ \df{policies}, and let $\Delta(\prizes,2^\prizes)$ be the set of all \df{lotteries} on $\prizes$. In \Cref{sec:motivation}, we identified policies in $\prizes$ with their consequences for prices. Choosing among policies is then equivalent to choosing among price vectors. Here we consider policies in the abstract, and we allow for  random choices over policies. An agent who is choosing among lotteries is then choosing an element of  $X\coloneqq \Delta(\prizes,2^\prizes)$, which we identify with the $\dimU$-dimensional \df{simplex} $\{x\in\Re^{\dimU+1}:x\geq 0 \text{ and } \sum_{i=1}^{\dimU+1}x_i=1\}$. Expected utility preferences are given by a function $u:\prizes\to\Re$, the von Neumann-Morgenstern (vNM) utility, which we may view as a vector $u\in \Re^{\dimU +1}$, so that lottery $x$ is preferred to $y$ if and only if $u\cdot x\geq u\cdot y$. Since a vNM utility is only unique up to a positive affine transformation, we may without loss of generality focus on utilities, $u$, which are normalized so that the utility of the $\dimU + 1$-st prize, $u_{\dimU+1}$, is zero. Let then $U\coloneqq \{u\in\Re^{\dimU+1}:u_{\dimU+1}=0\}$. Given $u\in U$, a lottery $x\in X$ has expected utility $u\cdot x\coloneqq \sum_{i=1}^{\dimU}u_ix_i$. 

We follow Harsanyi in assuming that choices over lotteries may be random and working with expected utility preferences. In Section~\ref{sec:further_discussion}, we argue that there are good reasons to allow for lotteries over policies. For now, let us reiterate Harsanyi's argument. He justifies the assumption through a hypothetical risky choice ``behind a veil of ignorance.'' A choice between societal outcomes, policies, made in an original position, before each agent knows the place that they occupy in society. Such a choice is made in the  presence of risk, and lends itself to expected utility analysis. A crucial advantage is that the vNM utilities are cardinally meaningful. In the words of \cite{harsanyi53}: ``the analysis of impersonal value judgments concerning social welfare seems to suggest a close affinity between the cardinal utility concept of welfare economics and the cardinal utility concept of the theory of choices involving risk.''\footnote{A similar point is made by \cite{zeckhauser1969majority} in arguing for lotteries over social outcomes.}

\paragraph{Random expected utility.} A \df{decision problem} is a finite set $D\subseteq X$ of lotteries on $\prizes$. Let $\D$ denote the set of all decision problems. A \df{stochastic choice rule} is a function $\rho:\D\to \Delta(X,\B)$ with  $\rho(\cvh(D),D)=1$. We can interpret $\rho$ as representing either the (potentially stochastic) choices of an individual, or the proportion of individuals in a population who choose each of the alternatives in $D$, where each individual's choice is deterministic. Unless otherwise noted, we maintain the stochastic-choice interpretation for the purpose of discussion. 

Our model of discrete choice is the random expected utility (REU) theory of \cite{gul2006random}. It can be thought of as a random coefficients discrete choice model. To formulate a random expected utility, we need the pertinent measurability structure. Let $N(D,x)\coloneqq\{u\in U:u\cdot x\geq u\cdot y \ \ \forall \ y\in \cvh(D)\}$ denote the set of all utility indexes that are maximized at $x\in D$, and $N^+(D,x)\coloneqq\{u\in U:u\cdot x > u\cdot y \forall y\in  \cvh(D)\setminus \{x\}\}$ those that achieve a unique maximum at $x$ in $D$. Now let $\F$ be the algebra, and $\F^*$ the $\sa$-algebra, generated by the sets $N(D,x)$.

A \df{random expected utility} is a countably additive probability measure $\pi\in \Delta(U,\F^*)$. A random expected utility $\pi$ is \df{regular} if $\pi(N(D,x))=\pi(N^+(D,x))$ for all $(D,x)$, with $x\in D$. That is, indifferences occur with zero probability. Such $\pi$ generates a stochastic choice rule
\[
\rho_{\pi}(x,D) \coloneqq \pi(N(D,x))
= \pi(\{u : u\cdot x\geq u\cdot y  \text{ for all }y\in \cvh(D) \}),
\] for $x\in D$. 

We assume below that agents have regular random expected utilities.\footnote{Under the population interpretation of random utility, this means that there are no atoms in the distribution of preferences in the population.} However, we do not want to impose this condition exogenously on the planner. For one thing, we want to leave open the possibility that the planner has a deterministic utility, which is a special case of REU but violates regularity. Without regularity, indifferences can occur with positive probability, and so some tie-breaking rule is needed to define the choice rule. For our purposes, it is not necessary to make the tie-breaking rule explicit. Therefore, we stretch the use of notation and represent the choice rule for non-regular $\pi$ using $\rho_{\pi}$ as well.

\section{Collective choice}\label{sec:collective}

\subsection{Harsanyi's justification of utilitarianism}\label{sec:harsanyijustification}

We first present a suitably adapted version of Harsanyi's approach to utilitarianism. It applies to the population interpretation of the discrete choice models.  Consider a unit-mass non-atomic population of individuals indexed by $i \in [0,1]$, and a function $i\mapsto u_i \in \Re^n$ that assigns each individual agent a vNM utility over lotteries. A planner also has a vNM utility $u$. The standard practice in applied welfare analysis presumes knowledge of the population distribution over vNM utilities $u_i$, and, for purposes of policy evaluation, takes the planner's $u$ to be the expected value of these individual utilities.

To be more specific, we need to make a few additional assumptions. Suppose that the mapping $i\mapsto u_i$ is continuous. Let $\mathcal{B}$ denote the Borel $\sa$-algebra on $[0,1]$, and suppose that $\mu\in\Delta([0,1],\mathcal{B})$ describes the distribution of preferences in the population. The utilitarian approach, represented by both the Bergson-Samuelson and average-compensating-variation measures discussed in \Cref{sec:motivation}, is to take
\begin{equation}\label{eq:expU}
u = \int_I u_i \diff \mu(i)
\end{equation}

We first review the rationale behind using~\eqref{eq:expU}. This specification is consistent with Harsanyi's (\citeyear{harsanyi1955cardinal}) theorem,  but we will see that there are some fundamental problems in providing a justification for~\eqref{eq:expU}, even in a setting where Harsanyi's analysis is viable.

\citeauthor{harsanyi1955cardinal} assumes a finite population of agents. We apply a  generalization of his theorem to infinite populations due to \cite{zhou1997harsanyi}. The generalization requires one additional, arguably mild, assumption: Suppose that each $u_i$ can be normalized so that there are two lotteries $x_0$ and $x_1$ with $u_i\cdot x_0=0<u_i\cdot x_1$ for all $i$. This assumption entails the existence of a common direction of utility improvement (we may think of this as an objective notion of monotonicity).\footnote{A version of Harsanyi's theorem is possible without this assumption, and even without any topological assumptions on the space of agents, but it is even further from~\eqref{eq:expU} than Theorem~\ref{thm:zhou} below.}

We say that the planner's utility $u$ has the \df{Pareto property} if, whenever $x\cdot u_i\geq y\cdot u_i$ for all $i\in I$, then  $x\cdot u\geq y\cdot u$. In words, when all individual agents rank $x$ (weakly) over $y$, then so does the planner. Zhou's version of Harsanyi's theorem is:

\begin{theorem}[\citealt{zhou1997harsanyi}]\label{thm:zhou}The planner's utility $u$ has the Pareto property if and only if there exists $\la\in\Delta([0,1],\mathcal{B})$ and a scalar $\kappa>0$ such that \[
u = \kappa \int_I u_i \diff \la(i).
  \]
  \end{theorem}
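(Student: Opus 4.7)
\emph{Plan.} The ``if'' direction is immediate from linearity of the integral: if $u=\kappa\int u_i\,d\lambda(i)$, then $u\cdot x-u\cdot y = \kappa\int(u_i\cdot x-u_i\cdot y)\,d\lambda(i)$ is non-negative whenever the integrand is. The interesting direction is ``only if,'' which I would prove by a separating-hyperplane / bipolar-cone argument in the finite-dimensional space $U$.

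\emph{Main argument.} First, extend the Pareto condition from differences of lotteries to all of $U$. Since the simplex $X$ has non-empty relative interior in its affine hull, every direction in $U$ is attained (up to a positive scalar) by some difference $x-y$ with $x,y\in X$, so the hypothesis rephrases as: for every $z\in U$, if $u_i\cdot z\geq 0$ for all $i\in[0,1]$, then $u\cdot z\geq 0$. Next, let $K_0=\{u_i:i\in[0,1]\}\subset U$. Continuity of $i\mapsto u_i$ on the compact set $[0,1]$ makes $K_0$ compact, hence $C:=\co(K_0)$ is compact by Carath\'eodory. The hypothesis $u_i\cdot x_1>0$ for all $i$ gives $v\cdot x_1>0$ for every $v\in C$, so $0\notin C$, and it follows that the cone $K:=\{\kappa v:\kappa\geq 0,\,v\in C\}$ is a closed convex cone. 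The extended Pareto property says exactly that the polar $K^{\circ}=\{z\in U:u_i\cdot z\geq 0\ \forall i\}$ lies inside $\{z:u\cdot z\geq 0\}$, so by the bipolar theorem (equivalently, by separating $u$ from $K$ by a hyperplane) we conclude $u\in K$. Write $u=\kappa v$ with $\kappa\geq 0$ and $v\in C$; applying Carath\'eodory again gives $v=\sum_{j=1}^{n+1}\alpha_j u_{i_j}$, and then $\lambda:=\sum_{j}\alpha_j\delta_{i_j}$ is a Borel probability measure on $[0,1]$ for which $u=\kappa\int u_i\,d\lambda(i)$.

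\emph{Main obstacle.} The delicate point is upgrading $\kappa\geq 0$ to $\kappa>0$. The separation argument only places $u$ in the closed cone $K$, which is consistent with $u=0$ and $\kappa=0$. The assumption $u_i\cdot x_0=0<u_i\cdot x_1$ for all $i$ should do the work here: by continuity and compactness, $\inf_i u_i\cdot(x_1-x_0)>0$, and one would like to leverage this to conclude $u\cdot(x_1-x_0)>0$, hence $u\neq 0$. Under the weak-Pareto hypothesis alone this is not automatic, so the argument either requires a strict-Pareto variant or a perturbation of $z=x_1-x_0$ within the interior of $K^{\circ}$, extracting strictness at the limit. Making this step rigorous, and checking it against the compactness of $K_0$, is where I would expect the proof to require the most care.
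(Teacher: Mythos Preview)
The paper does not supply its own proof of this theorem: it is stated with attribution to \cite{zhou1997harsanyi} and used as a black box, so there is nothing in the paper to compare your argument against directly.

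That said, two remarks are worth making. First, your separation/bipolar approach is the standard route and is essentially correct; the paper itself notes in its literature discussion that ``most proofs use a separation argument that was absent in \cite{harsanyi1955cardinal},'' while Zhou's own proof proceeds instead ``along the lines of Harsanyi's original argument.'' So your method is the conventional one, though not Zhou's.

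Second, your identification of the $\kappa>0$ step as the main obstacle is exactly right, and in fact the issue is sharper than you indicate: under the \emph{weak} Pareto property as stated in the paper (weak inequalities on both sides), the planner utility $u=0$ satisfies the hypothesis trivially, yet $u=0$ cannot be written as $\kappa\int u_i\,d\lambda$ with $\kappa>0$ since $u_i\cdot x_1>0$ for all $i$ forces the integral to be nonzero. So the theorem as literally stated requires either an implicit non-triviality assumption on $u$, a strict-Pareto strengthening, or $\kappa\geq 0$ in the conclusion. Your perturbation idea within the interior of $K^\circ$ would work under a Pareto-indifference or strict-Pareto axiom, but not under weak Pareto alone; this is a genuine gap in the statement rather than in your reasoning.
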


Now, there are a few important differences between the use of \eqref{eq:expU} and the representation provided by Theorem~\ref{thm:zhou}. The scalar $\kappa$ is a difference, but it does not matter since we may as well take the planner's utility to be $\frac{1}{\kappa}u$ without changing the planner's preferences. The important difference lies in the use of $\la$ instead of $\mu$: these measures may not be related in any way.

To fix ideas, consider a simple special case in which $I$ is finite, with $\abs{I}=\nagents$ and $\mu$ being uniform. Then $\eqref{eq:expU}$ becomes $u=\frac{1}{\nagents}\sum_{i\in I}u_i$, while Theorem~\ref{thm:zhou} says that there exist non-negative weights $\la(i)$ with $u=\sum_{i\in I}\la(i)u_i$. Theorem~\ref{thm:zhou} cannot guarantee an equal weight on each utility function, \emph{nor could it without attaching cardinal meaning to the scale of utilities $u_i$}.\footnote{An equal-weight representation is possible under different primitives, but with an endogenously chosen vNM utility for each agent. See the foundation for utilitarianism in \cite{harsanyi53,harsanyi78,Harsanyi1979} (and the clarifying discussion by \cite{Weymark1991}).} In the discrete-choice framework, we only have access to choice data from each individual agent. So even if we imagine that there is a role for the scale of utility, perhaps a notion of preference intensity, the choice theoretic starting point of the theory presumes no access to information on preference intensity.

\subsection{From utilities to choices}

We now turn to the main methodological proposal in our paper, which is achieved by aggregating choices instead of utilities. We have emphasized how, in aggregating utilities, we are forced to make interpersonal cardinal utility comparisons. Here we depart from \cite{harsanyi1955cardinal} by studying utilitarian \textit{social choice} rather than social preferences. Aggregating social choices instead of preferences has a long history in economics: for example,  \cite{senQuasiTran1969,sen_collectivechoice1970} recreates a version of Arrow's theorem when the planner's choice behavior is modeled instead of an aggregate preference. In our model, thanks to the linear structure afforded by expected utility theory, it makes sense to linearly aggregate agents' choices; a natural choice-based analogue to utilitarianism. We may linearly aggregate choices without having to measure utilities on a common scale.

Let $\pi_i$, $1\leq i\leq m$, describe a society of $m$ agents, each endowed with a regular random expected utility. Consider a planner with random expected utility $\pi$. Importantly, $\pi$ may not be regular; it could, in principle, assign probability one to a single vNM utility. Let $\rho_{\pi_i}$ and $\rho_{\pi}$ be the agents' and the planner's stochastic choice functions. Our definition of utilitarianism entails the linear aggregation of agents' stochastic choice functions: The planner $\pi$ is a \df{weighted utilitarian for} $\pi_1,\ldots,\pi_m$ if there exist weights $\al_1,\ldots,\al_m\geq 0$, $\sum_{i=1}^m\al_i=1$, such that $\pi = \sum_{i=1}^m\al_i \pi_i$.

We introduce a Pareto axiom that, like our notion of utilitarianism, is based on choices. Suppose that every agent, facing a menu $D$ that includes lotteries $x$ and $y$, chooses $x$ more often than $y$. Then our Pareto criterion demands that the planner must also choose $x$ more often than $y$. Going further, suppose that every agent is more than twice as likely to pick $x$ as $y$ from $D$. Our Pareto axiom then requires the planner to reflect that strength of agreement: the planner must be at least twice as likely to choose $x$ as $y$. Agents can also agree on subtler, comparative statements. For instance, they might all agree that the gap between the probabilities of choosing $x$ and $y$ is smaller than the gap between the probabilities of choosing $z$ and $y$. Our axiom says: whenever all agents agree on any such linear constraint on choice probabilities from $D$, the planner’s choices must obey the same constraint.

Formally, $\pi$ \df{respects Pareto} in relation to $\pi_1,\ldots,\pi_m$ if, for any decision problem $D$ and any function $c:D\to\Re$, if $\E_{\rho_i}c\geq 0$ for every agent $i$, then the planner’s choice probabilities $\rho$ must also satisfy $\E_{\rho}c\geq 0$.

A local relaxation of utilitarianism turns out to be interesting. The planner $\pi$ is a \df{local behavioral utilitarian for} $\pi_1,\ldots,\pi_m$ if, for every decision problem $D$ there are weights $\al^D_1,\ldots,\al^D_m\geq 0$, $\sum_{i=1}^m\al^D_i=1$, so that $\rho_\pi(x,D) = \sum_{i=1}^m\al^D_i \rho_{\pi_i}(x,D)$ for all $x\in D$. It should be easy to see, by a separating-hyperplane argument, that local behavioral utilitarianism and respecting Pareto are equivalent. It would also seem that the local property is substantially weaker than what we have called (global) utilitarianism. Our main theory will actually show that they are equivalent.

A fourth property of interest is based on an arbitrary choice rule $\rho$. We can say that the \df{agreement} between $\rho$ and a random expected utility $\hat \pi$ is the probability, given by
\[
\sum_{x\in D}\rho(x,D)\rho_{\hat \pi}(x,D),
\] that they choose the same element of $D$ if they are choosing independently. 

Our main result is a characterization of utilitarian social choice. 

\begin{theorem}\label{thm:main} The following statements are equivalent:
  \begin{enumerate}
  \item $\pi$ is a weighted utilitarian for $\pi_1,\ldots,\pi_m$.
  \item $\pi$ is a local behavioral utilitarian for $\pi_1,\ldots,\pi_m$.
  \item $\pi$ respects Pareto in relation to $\pi_1,\ldots,\pi_m$.
  \item For any stochastic choice $\rho$, and any $D\in\D$, the agreement between $\pi$ and $\rho$ is not more than the largest and not less than the smallest agreement between $\rho$ and $\pi_i$, $1\leq i\leq m$.
    \end{enumerate}
  \end{theorem}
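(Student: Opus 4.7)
My plan is to establish the equivalences in the cycle (1) $\Rightarrow$ (2) $\Rightarrow$ (3) $\Leftrightarrow$ (4) and close the loop with (2) $\Rightarrow$ (1), which I view as the main work. The forward directions are short. For (1) $\Rightarrow$ (2), regularity of each $\pi_i$ gives $\rho_{\pi_i}(x,D) = \pi_i(N(D,x))$, and linearity of the map $\pi'\mapsto \pi'(N(D,x))$ yields $\rho_\pi(x,D) = \sum_i \alpha_i \rho_{\pi_i}(x,D)$ with constant weights (noting that a convex combination of regular measures is regular, so no tie-breaking issue arises for $\pi$). For (2) $\Rightarrow$ (3), integrate: $\E_{\rho_\pi(\cdot,D)} c = \sum_i \alpha_i^D \E_{\rho_{\pi_i}(\cdot,D)} c$ is a convex combination, hence lies between the min and max of the $\E_{\rho_{\pi_i}(\cdot,D)} c$. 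For (3) $\Rightarrow$ (2), apply the finite-dimensional separating hyperplane theorem in $\R^{|D|}$: if $\rho_\pi(\cdot,D) \notin \cvh\{\rho_{\pi_i}(\cdot,D)\}_i$, a separating $c:D\to\R$ produces $\E_{\rho_\pi}c > \theta \geq \E_{\rho_{\pi_i}}c$ for all $i$, contradicting the Pareto bound. The equivalence (3) $\Leftrightarrow$ (4) uses the same separation logic restricted to probability vectors: (3) $\Rightarrow$ (4) is immediate on setting $c(x) = \rho(x,D)$, while for (4) $\Rightarrow$ (3) I shift a given $c$ by a constant and rescale so that it becomes a nonnegative function summing to one on $D$ (this preserves the strict separation because $\rho_\pi$ and the $\rho_{\pi_i}$ are all probabilities), then realize the resulting probability on $D$ as a stochastic choice rule (extended arbitrarily on other decision problems).

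For the main step (2) $\Rightarrow$ (1), I plan to define, for each $D$,
\[
A(D) := \bigl\{\alpha \in \Delta^{m-1} : \rho_\pi(\cdot,D) = \textstyle\sum_i \alpha_i\,\rho_{\pi_i}(\cdot,D)\bigr\},
\]
which by (2) is a nonempty, closed, convex subset of the compact simplex. The goal is to show $\bigcap_{D \in \D} A(D) \neq \emptyset$; by the finite intersection property this reduces to proving that for any $D_1, \ldots, D_k \in \D$ there exists a single $\alpha \in \bigcap_{j=1}^k A(D_j)$. Once such a common $\alpha$ is in hand, let $\tilde\pi := \sum_i \alpha_i \pi_i$; this is a regular random expected utility with $\rho_{\tilde\pi}(\cdot,D) = \rho_\pi(\cdot,D)$ for every $D$, and the uniqueness theorem of \cite{gul2006random} for regular REUs then delivers $\pi = \tilde\pi$, which is (1). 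Regularity of $\pi$ itself is forced by (2) together with regularity of the $\pi_i$: because the equality $\rho_\pi(\cdot,D) = \sum_i \alpha_i^D \rho_{\pi_i}(\cdot,D)$ holds as probabilities on $D$ with no tie-breaking slack on the right-hand side, $\pi$ must assign zero mass to each indifference set $N(D,x)\setminus N^+(D,x)$.

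The heart of the argument, and the main obstacle, is the finite intersection property for the $A(D_j)$. The natural strategy is to combine $D_1,\ldots,D_k$ into one sufficiently rich decision problem $D^*$ whose choice behavior encodes each $D_j$'s, and then show that any $\alpha^* \in A(D^*)$ lies in $\bigcap_j A(D_j)$. Taking $D^* = D_1 \cup \cdots \cup D_k$ gives the decomposition
\[
N(D_j,x) \;=\; N(D^*,x) \,\sqcup\, \bigsqcup_{y \in D^*\setminus D_j} \bigl[N(D_j,x) \cap N(D^*,y)\bigr]
\]
(up to $\pi_i$-null tie sets), but the intersection cells $N(D_j,x) \cap N(D^*,y)$ are not themselves of the form $N(\tilde D, \tilde x)$, so (2) does not immediately constrain the $\pi$-measure of these cells. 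To bridge the gap I would either (i) enrich $D^*$ by taking small $\varepsilon$-mixtures of each $D_j$ with well-separated ``anchor'' lotteries, so that for typical $u$ the argmax in $D^*$ first picks the anchor matching $u$'s preferences over anchors and then the corresponding element of $D_j$, realizing the intersection cells as $N$-sets of the enriched problem; or (ii) bypass the combinatorics by selecting a single ``generic'' $D^*$ on which the vectors $\{\rho_{\pi_i}(\cdot,D^*)\}_i$ are affinely independent (so that $\alpha^*$ is uniquely determined), defining $\tilde\pi$ from $\alpha^*$, and then showing via the per-$D$ convex-combination structure that $\rho_{\tilde\pi} = \rho_\pi$ on every $D$ before invoking Gul--Pesendorfer uniqueness. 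Upgrading the local weights $\alpha^D$ to a single global weight is where the real work lies.
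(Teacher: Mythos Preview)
Your treatment of the elementary directions is fine and agrees with the paper: (1)$\Rightarrow$(2) by linearity of $\pi'\mapsto\pi'(N(D,x))$, (2)$\Leftrightarrow$(3) by finite-dimensional separation in $\R^{|D|}$, and (3)$\Leftrightarrow$(4) by affinely rescaling $c$ into a probability vector on $D$.

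For the hard direction you take a genuinely different route from the paper. The paper argues the contrapositive directly at the level of measures: it embeds $\pi,\pi_1,\ldots,\pi_m$ in a Banach space of charges $V(U,\F,\mu)$, separates $\pi$ from $\cvh\{\pi_i\}$ there (using the Rao--Rao description of the dual to approximate the functional by a step function $\sum_j c_j\mathbf{1}_{F_j}$ on disjoint open polyhedral cones), and then builds one decision problem $D$ whose normal cones approximate the $F_j$ by discretizing the unit sphere. No appeal to Gul--Pesendorfer uniqueness is needed. Your strategy---finite intersection property for the sets $A(D)$ followed by Gul--Pesendorfer uniqueness---is more elementary in spirit and would avoid the infinite-dimensional separation, which is attractive.

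The gap is that neither of your proposed constructions actually delivers the finite intersection property. Option~(ii) is not salvageable: affine independence of $\{\rho_{\pi_i}(\cdot,D^*)\}_i$ only pins down $\alpha^*$ on $D^*$; there is no linear map uniform in regular $\hat\pi$ that recovers $\rho_{\hat\pi}(\cdot,D)$ from $\rho_{\hat\pi}(\cdot,D^*)$ for a fixed finite $D^*$, so ``the per-$D$ convex-combination structure'' gives you nothing about $\alpha^*$ on other $D$. Option~(i) as you describe it also fails: with one anchor $a_j$ per block, the enriched problem $\{(1-\varepsilon)a_j+\varepsilon x:x\in D_j\}$ has normal cones approximately $N(\{a_l\}_l,a_j)\cap N(D_j,x)$, so each $D_j$-partition is visible only on the $a_j$-anchor cell and you cannot reconstruct $\rho_{\hat\pi}(\cdot,D_j)$ from the enriched choice probabilities. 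The construction that does work is the weighted Minkowski average
\[
D^*=\Bigl\{\textstyle\sum_{j=1}^k\varepsilon_j x_j : x_j\in D_j\Bigr\},\qquad \varepsilon_j>0,\ \textstyle\sum_j\varepsilon_j=1,
\]
for which $N\bigl(D^*,\sum_j\varepsilon_j x_j\bigr)=\bigcap_j N(D_j,x_j)$ exactly; summing over $(x_l)_{l\neq j}$ then gives $\rho_{\hat\pi}(x_j,D_j)=\sum_{(x_l)_{l\neq j}}\rho_{\hat\pi}(\sum_l\varepsilon_l x_l,D^*)$ for every regular $\hat\pi$, whence $A(D^*)\subseteq\bigcap_j A(D_j)$. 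With this in hand your finite-intersection/uniqueness program goes through (after first arguing, as the paper also does separately, that (2) forces $\pi$ to be regular---your one-line justification for this step is too quick).
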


The main insight of \Cref{thm:main} is that the Pareto property for social choice implies that the planner is a weighted utilitarian. In particular, since each $\pi_i$ is regular, this implies that $\pi$ is as well, so the planner's choice rule must be stochastic. The role of randomness in social choice is discussed further in \Cref{sec:further_discussion}.  

The proof of Theorem~\ref{thm:main} is in \Cref{proof:main}. Here is a high-level summary of the main ideas in the proof. The property of being a weighted utilitarian is equivalent to $\pi$ belonging to the convex hull of $\pi_1,\ldots,\pi_\nagents$; so by a separation argument, we can say that $\pi$ is a weighted utilitarian if and only if there is no continuous linear functional that separates the planner's $\pi$ from each of the individual $\pi_i$. By the measurability assumptions of \cite{gul2006random}, we may approximate such a linear functional by means of a linear combination of step functions, each of them being the indicator function of a convex cone. Now, connecting these cones with the set of vNM indexes that choose some $x$ out of a decision problem $D$, we observe that the evaluation of a step function at each $\pi$ and $\pi_i$ coincides with the probability of choosing $x$ from $D$ (this step needs, for technical reasons, a second approximation argument). Finally, the separation property translates into being a local behavioral utilitarian. 

Perhaps the most basic insight in the proof is that the global property of being a weighted utilitarian is equivalent to a local property holding at a particular decision problem $D$: the reason for this equivalence is that the linear functional in the separating argument can be approximated through a simple function built from a finite set of disjoint cones, and these may be ``glued'' into a single decision problem $D$ (more descriptively, attached to $D$ as the normal cones at different extreme points of $D$).

The proof in \Cref{proof:main} deals with two mathematical difficulties. The linear spaces we work with have no tractable topological duals. We work with the space of charges, and use an approach proposed by \cite{rao1983theory} that allows us to approximate the separating linear functional. As far as we know, we are the first to apply this technique in economics. Second, we believe that any collection of disjoint pointed cones can be obtained from a single decision problem, but we have not been able to prove this. So there is a second approximation used in getting the relevant decision problem. This second approximation requires the monotone continuity property of countably additive probabilities, hence our use of countably additive random expected utility.

Conditions (3) and (4) in \Cref{thm:main} are both intuitively related to the alignment between the planner's choice rule and those of the agents. We formalize this intuition in \Cref{sec:alignment}, which implies an alternative characterization of weighted utilitarianism. 

 \subsection{Harsanyi's utilitarianism and the Pareto axiom for social choice}\label{sec:theparetoaxiom}

 The approaches to welfare comparisons based on Harsanyi, as outlined in \Cref{sec:motivation}, violate the ``respecting Pareto'' principle in Theorem~\ref{thm:main}. Suppose that  $x \in X$ is evaluated according to \df{expected utilitarian welfare} (EUW) 
 \begin{equation*}
     V(x) = \sum_{i =1}^m \alpha_i \E_{\pi_i} u \cdot x 
 \end{equation*}
 for some weights $\alpha_1, \dots, \alpha_m \geq 0$, $\sum_{i=1}^m \alpha_i = 1$. This is essentially the Bergson-Samuelson approach. In other words, social preferences have an expected utility representation, $V(x) = v \cdot x$ where $v := \sum_{i =1}^m  \alpha_i\E_{\pi_i}u$. This is clearly a special (degenerate) case of REU.\footnote{It is not regular, but this distinction is not important here.}

 Now we argue that this criterion violates our axiom. Consider the following simple example. Two agents, $\{1,2\}$, face a binary choice between alternatives $D = \{a,b\}$. Adopt the population interpretation, so that an ``agent'' here refers to a unit mass population of individuals. Normalize the utility of alternative $b$ to zero. In population 1, a fraction $0.9$ of the agents have utility $1$ for alternative $a$, and the remainder have utility $-1$. In population $2$ a fraction $0.7$ have utility $-1$ for alternative $a$, and the remainder have utility $1$. That is, population $1$ tends to favor alternative $a$ and population $2$ alternative $b$, but the tendency is more pronounced in population 1. 

 An EUW social choice rule which places equal weight on all individuals, $\rho^*$, will select alternative $a$ with probability 1. In contrast, an equally-weighted utilitarian rule, $\rho$, will select alternative $a$ with probability $.6$. The EUW social choice rule evidently does not respect Pareto. For example, consider $c(a) = 1$ and $c(b) = 2$. Then
 \begin{equation*}
     E_{\rho^*(\cdot,D)} c = 1 \leq E_{\rho_{\pi_1}(\cdot,D)} c = 1.1 \leq E_{\rho(\cdot,D)} c  = 1.4 \leq E_{\rho_{\pi_2}(\cdot,D)} c = 1.7
 \end{equation*}
 Why should we care? The EUW rule inherits the intuitive appeal of \cite{harsanyi1955cardinal}, since under the population interpretation there is in fact no stochasticity in individual choice. Intuitively, it chooses the alternative that ``maximizes the average utility of the population'': the average utility of alternative $a$ is $.9 + .3 - (.1 + .7) = .4$ (while that of $b$ is normalized to zero). 

 However, this interpretation of EUW implicitly takes a stand on interpersonal utility comparisons. Suppose, for example, that preferences in population 1 are as before, but those in population 2 are in fact more intense: in this group a fraction $0.7$ have utility $-3$ for alternative $a$, and the remainder have utility $3$. Then the average utility of $a$ is $0.9 + 0.3\cdot 3 - (0.1 + 0.7 \cdot 3) = -0.4$, so alternative $b$ maximizes the average utility in the population. Since all we have done is apply an affine transformation to the utility function, such preference intensity can never be identified from choice data; instead, the choice of utility scale is entirely at the discretion of the social planner or analyst. 

If the planner believes that preference intensity has real meaning, then they face a dilemma. On the one hand, they may wish to use a social choice rule that is responsive to preference intensity. However, this is impossible due to the inherent lack of identification. On the other hand, the planner can simply impose a utility scale for the purposes of aggregating preferences, for example, by normalizing utilities to lie between $-1$ and $1$ and aggregating with equal weights as above. In this case, however, we cannot claim that the EUW rule ``maximizes average utility'' in any meaningful sense. 

Faced with the identification problem, there is only one real way out: adopt a social choice rule that stays sound even when we are unsure about preference intensities. Weighted utilitarianism delivers exactly this kind of robustness.\footnote{In fact, weighted utilitarianism is the only social choice rule that is robust to uncertainty about preference intensity and satisfies an additional condition that says, roughly, that the social choice rule must outperform a dictatorship. We omit the details, as this result is not central to our analysis. The main point is that regardless of the true preference intensities, a weighted-utilitarian rule that places strictly positive weights on every group is guaranteed to achieve a fraction of the optimal utilitarian welfare that is bounded away from zero. In contrast, the EUM rule can do arbitrarily poorly in this regard.} The planner's choice of weights might still be informed by some judgment of individuals' preference intensities. But, unlike the EUM rule, the \emph{interpretation} of the weighted-utilitarian rule depends only on the weights placed on each group and not on assumptions regarding relative preference intensities. As we show in the subsequent section, this feature of weighted-utilitarian social choice translates into welfare assessments based on quantiles of individual-level effects rather than averages. 

\section{Distributional welfare analysis}\label{sec:distributional}

We now discuss the implications of our characterization for applied welfare analysis. In particular, we return to the connection between our characterization and the discussion of welfare in \Cref{sec:motivation}. To formalize this connection, we first need to adapt the welfare measures discussed in \Cref{sec:motivation} to lotteries over price vectors: in the setting of our general model, price vectors play the role of policies, and the objects of choice are lotteries over price vectors. 

Preferences over lotteries determine, in particular, a preference over deterministic policies. Conversely, a preference over deterministic policies will have one or more \df{extensions} to a preference over lotteries. The social preferences induced by $\CV_{agg}$ and $\CV_{avg}$, which were defined over deterministic policy changes, admit a natural extension to the set, $\Delta(P)$, of lotteries over price vectors. In the case of aggregate compensating variation, for $x',x'' \in \Delta(P)$ we say that $x'' \succsim_{agg} x'$ if and only if
\begin{equation*}
    \E_{p \sim x''} \left[ \int v(p,y|\varepsilon) \hat{F}(d\varepsilon) \right] \geq \E_{p \sim x'} \left[ \int v(p,y|\varepsilon) \hat{F}(d\varepsilon) \right],
\end{equation*}
and for average compensating variation, $x'' \succsim_{avg} x'$ if and only if
\begin{equation*}
    \E_{p \sim x''} \left[ \int \CV(p,p^0|\varepsilon) \hat{F}(d\varepsilon) \right] \geq \E_{p \sim x'} \left[ \int \CV(p,p^0|\varepsilon) \hat{F}(d\varepsilon) \right].
\end{equation*}
In both cases, the representation takes the standard utilitarian form, as in Equation~\eqref{eq:expU}, and can be justified by appealing to Harsanyi's axioms, as in Theorem~\ref{thm:zhou}. Of course, the general problems with such a justification, related to obtaining the needed weights as discussed above, continue to be valid. Note that  the restrictions of $\succsim_{agg}$ and $\succsim_{avg}$ to the set of degenerate lotteries coincide with the original measures from \Cref{sec:introduction}, as desired. 

It is less clear how to extend the preferences defined by $\CV_{med}$ to non-degenerate lotteries over price vectors. At a minimum, however, it seems natural that such an extension should satisfy the following consistency condition. 

\begin{definition}
    The extension $\succsim_{med}$ is \textit{weakly consistent} if for any binary lottery $x \in \Delta(P)$ supported on $\{p^1,p^2\}$, if $\CV_{med}(p^1) < 0$ and $\CV_{med}(p^2) < 0$ then $p^0 \succ_{med} x$.\footnote{Where $p^0 \succ_{med} x$ if $p^0 \succsim_{med} x$ and $\neg(x \succsim_{med} p_0)$. We follow the usual convention of identifying a price vector $p$ with the lottery that is degenerate on $p$.} 
\end{definition}

In other words, if median compensating variation is negative for both possible realizations of the alternative price vector, then the extension $\succsim_{med}$ should rank $p^0$ strictly above this binary lottery. Surprisingly, weak consistency is incompatible with the Pareto property. 

\begin{proposition}\label{lem:median_negative}
    If $\succsim_{med}$ satisfies weak consistency, then it violates the Pareto property.
\end{proposition}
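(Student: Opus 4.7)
The plan is to exhibit an explicit counterexample: a two-type population, two alternative policies $p^1, p^2$, and a binary lottery $L$ supported on $\{p^1, p^2\}$ for which weak consistency forces $p^0 \succ_{med} L$ while every individual agent strictly prefers $L$ to $p^0$. The Pareto property would then demand $L \succsim_{med} p^0$, yielding the desired contradiction.

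Concretely, I will take two types $A$ and $B$, each of mass $\tfrac{1}{2}$, with quasi-linear utility, so that $\CV(p|\varepsilon) = u_\varepsilon(p) - u_\varepsilon(p^0)$. I then pick three policies $p^0, p^1, p^2$ and utilities
\begin{equation*}
u_A(p^0) = u_B(p^0) = 0, \quad u_A(p^1) = u_B(p^2) = 1, \quad u_A(p^2) = u_B(p^1) = -\tfrac{1}{2}.
\end{equation*}
The CV distribution for $p^1$ then puts mass $\tfrac{1}{2}$ at $-\tfrac{1}{2}$ (type $B$) and mass $\tfrac{1}{2}$ at $1$ (type $A$), so under the paper's (lower-)median convention $\CV_{med}(p^1) = -\tfrac{1}{2} < 0$; by symmetry $\CV_{med}(p^2) < 0$. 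Weak consistency applied to $L := \tfrac{1}{2} p^1 + \tfrac{1}{2} p^2$ therefore yields $p^0 \succ_{med} L$. Yet the expected utility of $L$ for each type $i \in \{A, B\}$ is $\tfrac{1}{2}(1) + \tfrac{1}{2}(-\tfrac{1}{2}) = \tfrac{1}{4} > 0 = u_i(p^0)$, so every agent strictly prefers $L$ to $p^0$, and the Pareto property would force $L \succsim_{med} p^0$.

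The main subtlety, which I expect to require the most care, is the interaction between the equal-mass split and the definition of the median: the contradiction exploits that at the $50$/$50$ atomic boundary the paper's convention delivers a strictly negative median. This reliance is intrinsic rather than an artefact of the specific construction. If one instead required the CV distribution to be absolutely continuous, the strict inequalities $\CV_{med}(p^i) < 0$ for $i = 1, 2$ would force $\mu(\{f_i < 0\}) > \tfrac{1}{2}$ for $f_i := \CV(\cdot|p^i)$, so the intersection $\{f_1 < 0\} \cap \{f_2 < 0\}$ would have positive measure, and on that set $L$ could not weakly Pareto-dominate $p^0$. Any counterexample must therefore live at an atomic, equal-mass boundary at which the median convention produces a strictly negative value from a $50$/$50$ split between a negative and a positive outcome.
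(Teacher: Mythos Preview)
Your proof is correct and takes essentially the same approach as the paper's: both construct a two-type, quasilinear, symmetric example with a $50$/$50$ lottery, differing only in the specific CV values (the paper uses $(2,-1)$ where you use $(1,-\tfrac{1}{2})$), and both rely on the lower-median convention to extract a strictly negative median from an equal-mass two-point distribution. Your closing meta-commentary on why the construction must sit at an atomic boundary is extra relative to the paper and slightly imprecise (absolute continuity of the CV distribution does not by itself force $\mu(\{f_i<0\})>\tfrac{1}{2}$; one also needs mass between the median and $0$), but this does not affect the validity of the main argument.
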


The proof of \Cref{lem:median_negative} can be found in \Cref{proof:median_negative}.

Given the message of \Cref{lem:median_negative}, the justification for using $\CV_{med}$ is unclear. 
It turns out, however, that weighted-utilitarian social \textit{choice} (rather than preference)  provides a foundation for this measure; we just need to define the appropriate notion of compensating variation for stochastic choice rules. Let $P$ be the set of price vectors, $Y = \R$ the set of income levels, and $\Delta(P\times Y)$ the set of lotteries over prices and incomes.\footnote{Formally, our analysis of stochastic choice applies to lotteries over a finite set of prizes, whereas here the set of prizes is infinite. To address this discrepancy we could discretize the set of price vectors and income levels. However for the sake of clarity we work directly with $\Delta(P\times Y)$ here.} Let $\mathcal{D}^{PY}$ be the corresponding set of decision problems. In a stochastic choice setting, the natural notion of indifference between two lotteries is that the decision maker is equally likely to choose either one when faced with a binary decision problem. Using this notion we can extend the definition of compensating variation to stochastic choice rules. 

\begin{definition}
    Given a stochastic choice rule, $\rho$, on $\mathcal{D}^{PY}$, define the \textit{stochastic compensating variation}, $\CV^{\rho,\frac{1}{2}}$, for a shift from $p^0$ to $p'$ by
    \begin{equation*}
        \CV^{\rho,\frac{1}{2}}(p^0,p') = \inf\left\{x : \rho\big(\{(p^0,y), (p',y-x) \}\big) \geq \frac{1}{2} \right\}.
    \end{equation*}
\end{definition}

In words, under monotonicity of choice probabilities in money, $\CV^{\rho,\frac{1}{2}}(p^0,p')$ is the income reduction $x$ such that when faced with the decision problem consisting of the two degenerate lotteries $(p^0,y)$ and $(p',y - x)$, the alternatives are chosen with equal probability. If we adopt the population interpretation of stochastic choice, we may say that after the reduction $\CV^{\rho,\frac{1}{2}}(p^0,p')$, half of the population of agents prefer $p'$ over $p^0$, while the other half has the opposite preference. Thus $\CV^{\rho,\frac{1}{2}}(p^0,p')$ can be thought of as a democratic notion of compensating variation.  

Return now to the question of evaluating the policy shift from $p^0$ to $p'$. Recall that the population consists of a unit mass of consumers, categorized into a finite set of types, $I$, based on observable characteristics. Let $\sigma = (\sigma_i)_{i\in I}$, where $\sigma_i$ is the fraction of the population with type $i$. The distribution over preferences for this population is estimated to be $\hat{F}_i$. Denote by $\hat{\pi}_{\alpha}$ the distribution over preferences for a weighted utilitarian social planner with weights $\alpha_i$, i.e. $\hat{\pi}_{\alpha}(\varepsilon) = \sum_{i\in I} \alpha_i \hat{F}_i(\varepsilon)$. Let $\rho^{\alpha}$ be the corresponding stochastic choice rule. Observe that when $\sigma_i=\alpha_i$ for all $i\in I$, then the social planner weights types according to their representation in society, meaning that all individuals are weighted equally. It turns out that the median compensating variation in the population is also the stochastic compensating variation of a weighted utilitarian social planner who weights all consumers equally. 

\begin{proposition}\label{prop:foundation}
$\CV^{\rho^{\sigma},\frac{1}{2}}(p^0,p') = \CV_{med}(p^0,p')$. 
\end{proposition}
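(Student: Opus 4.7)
The plan is to unpack both sides of the claimed equality as infima of sets defined in terms of $\hat F:=\sum_{i\in I}\sigma_i\hat F_i$, and then show these sets agree via the defining relation of $\CV(p^0,p'|\varepsilon)$ and monotonicity of $v$ in income. The bridge between the two sides is the observation that, by construction, $\hat\pi_\sigma=\hat F$, so the planner's stochastic choice on binary decision problems is literally the population frequency of preference among $\varepsilon$-agents.

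Concretely, I would first show that for any income reduction $x$,
\[
\rho^\sigma\bigl((p^0,y),\{(p^0,y),(p',y-x)\}\bigr)=\Pr_{\varepsilon\sim\hat F}\bigl[v(p^0,y|\varepsilon)\geq v(p',y-x|\varepsilon)\bigr].
\]
This is immediate from the REU representation of $\rho^{\hat\pi_\sigma}$ applied to the two-element decision problem, together with $\hat\pi_\sigma=\hat F$; the regularity assumption on each $\hat F_i$ ensures ties occur with probability zero so any tie-breaking convention is irrelevant. Second, I would invoke the defining identity $v(p',y-\CV(p^0,p'|\varepsilon)\mid\varepsilon)=v(p^0,y|\varepsilon)$ together with the maintained assumption that $v(p,\cdot|\varepsilon)$ is strictly increasing in income, to deduce that
\[
v(p^0,y|\varepsilon)\geq v(p',y-x|\varepsilon)\quad\Longleftrightarrow\quad \CV(p^0,p'|\varepsilon)\leq x.
\]
Combining the two displays gives $\rho^\sigma\bigl((p^0,y),\{(p^0,y),(p',y-x)\}\bigr)=G(x)$, where $G$ is the CDF of $\CV(p^0,p'|\varepsilon)$ under $\hat F$.

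Third, I would match the two infima. By the stochastic-CV definition,
\[
\CV^{\rho^\sigma,\tfrac12}(p^0,p')=\inf\bigl\{x:G(x)\geq\tfrac12\bigr\},
\]
and by the definition of median applied to the $\CV$-distribution under $\hat F$ (interpreting the footnote in the standard way as $\inf\{x:G(x)\geq 1/2\}$), this is exactly $\CV_{med}(p^0,p')$.

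The whole argument is short because \Cref{prop:foundation} is essentially a definitional reconciliation, so no serious obstacle is expected. The only points requiring care are (i) a brief justification that $\rho^\sigma$ is indeed the choice rule induced by $\hat F$ rather than by some convex combination whose weights differ from $\sigma$ — this is immediate from the construction $\hat\pi_\sigma(\varepsilon)=\sum_i\sigma_i\hat F_i(\varepsilon)=\hat F$ — and (ii) a sentence confirming that the strict monotonicity of indirect utility in income (the same assumption used in \Cref{sec:introduction} to justify $\CV_{agg}\geq 0\Leftrightarrow V(p',y)\geq V(p^0,y)$) is what makes the equivalence $\{v(p^0,y|\varepsilon)\geq v(p',y-x|\varepsilon)\}=\{\CV(p^0,p'|\varepsilon)\leq x\}$ hold pointwise in $\varepsilon$. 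Once these are stated, the two infima coincide as sets, not just as numbers, and the proposition follows.
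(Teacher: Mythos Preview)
Your proposal is correct and follows essentially the same approach as the paper: both identify the planner's binary choice probability $\rho^{\sigma}\bigl((p^0,y),\{(p^0,y),(p',y-x)\}\bigr)$ with $\int \one\{v(p^0,y|\varepsilon)\geq v(p',y-x|\varepsilon)\}\,\hat F(d\varepsilon)$ via $\hat\pi_\sigma=\hat F$, and then use the defining relation for $\CV(p^0,p'|\varepsilon)$ together with monotonicity in income to equate this with the CDF of individual compensating variation at $x$, so that the two infima coincide. If anything, your write-up is slightly more explicit than the paper's in spelling out the pointwise equivalence $\{v(p^0,y|\varepsilon)\geq v(p',y-x|\varepsilon)\}=\{\CV(p^0,p'|\varepsilon)\leq x\}$.
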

The proof of \Cref{prop:foundation} is in \Cref{proof:foundation}.

In other words, the social preferences over alternative price vectors, relative to the baseline of $p^0$, induced by $\CV_{med}$ are exactly those of a weighted utilitarian social planner who weights consumer types according to their share of the population.

It is straightforward to extend this argument to provide a foundation for a wider class of distributional welfare measures. Given a distribution $\pi$ over preferences, let $G_{\pi}(\cdot|p^0,p')$ be the CDF of the distribution of compensating variation, and let $T_{\pi}(\tau|p^0,p')$ be the $\tau$-percentile of this distribution.\footnote{That is, $T_{\pi}(\tau|p^0,p') = \inf\{ z \in \R : G_{\pi}(z|p^0,p') \geq \tau\}$.}

\begin{definition}
    Given a pair $(\tau, \alpha)$; where $\tau \in [0,1]$ and $\alpha \in \Delta(I)$; define the \textit{distributional compensating variation} $\CV_{\alpha,\tau}(p^0,p') := T_{\hat{\pi}_{\alpha}}(\tau|p^0,p^1)$. 
\end{definition}

That is, $\CV_{\alpha,\tau}(p^0,p')$ is the $\tau$-percentile of the compensating variation distribution after we distort the distribution of preferences by $\alpha$ (so with this notation $\CV_{med}(p^0,p') =  \CV_{\sigma,\frac{1}{2}}(p^0,p')$). This notion corresponds to a modification of stochastic compensating variation for the planner's preferences.

\begin{definition}
    Given a stochastic choice rule, $\rho$, on $\mathcal{D}^{PY}$ define
    \begin{equation*}
        \CV^{\rho,\tau}(p^0,p') = \inf\left\{x : \rho\big((p^0,y),\{(p^0,y), (p',y-x) \}\big) \geq \tau \right\}.
    \end{equation*}
\end{definition}

In words, $\CV^{\rho,\tau}(p^0,p')$ is the smallest payment under prices $p'$ that can be extracted such that the probability of choosing $(p^0,y)$ over $(p', y-x)$ is at least $\tau$. As $\CV^{\rho,\tau}(p^0,p')$ is increasing in $\tau$, setting $\tau < \frac{1}{2}$ is analogous to a supermajority requirement for moving away from the status quo. By essentially the same proof as \Cref{prop:foundation}, we have the following. 

\begin{proposition}\label{prop:foundation_general}
    $\CV^{\rho^{\alpha},\tau}(p^0,p') = \CV_{\alpha,\tau}(p^0,p')$.
\end{proposition}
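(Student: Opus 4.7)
The plan is to unpack both sides of the identity and show they reduce to the same quantile of the individual-level compensating-variation distribution, just as in \Cref{prop:foundation}. The argument is essentially a direct extension of the $\tau=\tfrac12$, $\alpha=\sigma$ case, so I expect no deep new ideas are needed beyond carefully tracking the roles of $\tau$ and $\alpha$.

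First, I would rewrite the choice probability appearing in the definition of $\CV^{\rho^{\alpha},\tau}$. Since $\rho^{\alpha}$ is the stochastic choice rule generated by the REU $\hat{\pi}_{\alpha} = \sum_{i\in I}\alpha_i \hat{F}_i$, and the decision problem contains only two alternatives, regularity of each $\hat{F}_i$ (hence of $\hat{\pi}_\alpha$) gives
\begin{equation*}
\rho^{\alpha}\bigl((p^0,y),\{(p^0,y),(p',y-x)\}\bigr)
= \Pr_{\varepsilon\sim\hat{\pi}_{\alpha}}\bigl[v(p^0,y|\varepsilon)\geq v(p',y-x|\varepsilon)\bigr].
\end{equation*}

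Next, I would translate the utility inequality into an inequality on individual compensating variations. By the defining identity $v(p',y-\CV(p^0,p'|\varepsilon)|\varepsilon)=v(p^0,y|\varepsilon)$ and strict monotonicity of $v$ in income, the event $v(p^0,y|\varepsilon)\geq v(p',y-x|\varepsilon)$ is equivalent to $\CV(p^0,p'|\varepsilon)\leq x$. Hence the choice probability above equals $G_{\hat{\pi}_{\alpha}}(x\mid p^0,p')$, the CDF of the compensating-variation distribution induced by $\hat{\pi}_{\alpha}$ evaluated at $x$.

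Finally, I would substitute into the definition of $\CV^{\rho^{\alpha},\tau}$ to get
\begin{equation*}
\CV^{\rho^{\alpha},\tau}(p^0,p') = \inf\bigl\{x : G_{\hat{\pi}_{\alpha}}(x\mid p^0,p')\geq \tau\bigr\} = T_{\hat{\pi}_{\alpha}}(\tau\mid p^0,p') = \CV_{\alpha,\tau}(p^0,p'),
\end{equation*}
which is exactly the claim. The only mild technical point to verify carefully is that the infimum on the left and the quantile on the right genuinely coincide at boundary values of $\tau$ (e.g.\ $\tau\in\{0,1\}$ or when $G_{\hat{\pi}_\alpha}$ has flat spots or atoms at $x$); this reduces to the standard fact that the generalized inverse of a CDF is its quantile function, which follows from right-continuity of $G_{\hat{\pi}_{\alpha}}(\cdot\mid p^0,p')$. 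That step, rather than anything conceptually novel, is the only place where care is needed.
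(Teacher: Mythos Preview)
Your argument is correct and matches the paper's approach exactly: the paper simply states that \Cref{prop:foundation_general} follows ``by essentially the same proof as \Cref{prop:foundation},'' and your write-up is precisely that proof with $\sigma$ and $\tfrac12$ replaced by $\alpha$ and $\tau$. If anything, your version is a bit more explicit about identifying the choice probability with $G_{\hat\pi_\alpha}(x\mid p^0,p')$ via the monotonicity of $v$ in income, which is a welcome clarification.
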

The measure $\CV_{\alpha, \tau}$ thus provides a principled way to incorporate distributional considerations into welfare analysis. The interpretation of the weights $\alpha$ is discussed below. 

We emphasize that distributional welfare measures are indeed agnostic about interpersonal utility comparisons. For clarity, we focus this discussion on $CV_{med}$. Under this measure, prices $p'$ are deemed preferable to $p^0$ if at least half the population prefers $p'$ to $p^0$. The median does not take a stand on preference intensities the way that the mean does. It is important here to note the distinction between taking the median of welfare changes as opposed to levels. Comparing two price regimes based on median welfare levels would imply, in particular, that permuting the welfare levels of individuals under prices $p'$ does not affect the social ranking of $p'$ and $p_0$. This is a form of interpersonal comparison of preference intensity. However $CV_{med}$ is the median of a welfare \textit{change}. This measure implies that making person $x$ better off is just as good, from a social perspective, as making person $y$ better off. This is an interpersonal comparison of ordinal, but not cardinal, preferences.

Before proceeding with the discussion of the welfare measures we first briefly comment on the issue of Scitovsky reversals \citep{graaff1967theoretical}. While these are not central to the current discussion, the point is worth clarifying in order to avoid confusion.

\begin{remark}\label{remark:scitovsky}  We focus here on comparing alternative price vectors, $p'$, to a baseline $p^0$; each of the three measures discussed in \Cref{sec:motivation} induces a cardinal representation of preferences over alternative price vectors, but we are primarily interested in separating the price vectors that are better than $p^0$ from those that are worse than $p^0$. We could also use these measures to compare two price vectors $p',p'' \neq p^0$. However the ranking induced by $\CV_{avg}$ over such $p',p''$ is not necessarily consistent with comparing $p'$ and $p''$ directly using average compensating variation. That is, $CV_{avg}(p'') \geq \CV_{avg}(p')$ does not necessarily imply that 
    \begin{equation*}
       \int \CV(p',p''|\varepsilon)\hat{F}(d\varepsilon) \geq 0.
   \end{equation*}
   This discrepancy arises because of wealth effects, in the presence of which, as is well-known, $\CV(p',p''|\varepsilon)$ may not be equal to $\CV(p^0,p''|\varepsilon) - \CV(p^0,p'|\varepsilon)$. Similarly, $\CV_{med}(p'') \geq \CV_{med}(p')$ does not necessarily imply that $\text{Median}_{\varepsilon\sim \hat{F}}[ \CV(p',p''|\varepsilon)] \geq 0$. On the other had, it is easy to see that comparisons using $\CV_{agg}$ do not suffer from this discrepancy.
\end{remark}

\subsection{Interpreting the weights}

It is straightforward to define an $\alpha$-weighted version of $\CV_{avg}$ by taking the expectation of compensating variation under the distribution $\hat{\pi}_{\alpha}$. If consumers have quasi-linear utility, so that payoffs are denoted in monetary terms, then the weights $\alpha$ can be interpreted as the planner's rate of substitution between monetary transfers to different types of agents (see \cite{backus2024surplus} for further discussion). That is, $\frac{\alpha_i}{\alpha_j}$ is the maximum amount of money the planner would be willing to extract from a type-$j$ consumer in order to transfer one unit to a type-$i$ consumer. However with wealth effects the interpretation is less straightforward, unless we restrict attention to local perturbations as in \cite{saez2016generalized}.

On the other hand, the weights for $\alpha$-weighted distributional compensating variation have a straightforward interpretation, whether or not there are wealth effects. We can think of the unweighted median compensating variation, $\CV_{med}$, as counting the number of consumers who are made better off by the price change, after a suitable monetary payment has been extracted, and $\CV_{med}$ is the smallest payment such that no more than half of the consumers are strictly better off. Similarly, the measure $\CV_{\alpha,\tau}$ counts the number of consumers who are made better off by the price change, but does not necessarily give them equal weight: making a type-$i$ consumer strictly better off is counted the same as making $\frac{\alpha_i}{\alpha_j}\frac{\sigma_j}{\sigma_i}$ type-$j$ consumers strictly better off. For example, consider the set of consumers whose compensating variation is strictly above $\CV_{med}(p^0,p')$, i.e. the set of consumers made strictly better off if a payment of $\CV_{med}(p^0,p')$ is extracted after the price change. Suppose that all these consumers are of type $j$. Then if we set $\alpha_i < \sigma_i$, the distributional comparative advantage measure $\CV_{\alpha, \frac{1}{2}}(p^0,p')$ will be less that $\CV_{med}(p^0,p^1)$. 

From an applied perspective, this interpretation makes the exercise of choosing the weights relatively straightforward. The analyst choosing the weights needs to answer the question: it is just as good to make one type-$i$ consumer better off as how many type-$j$ consumers? The answer to this question pins down $\frac{\alpha_i}{\alpha_j}\frac{\sigma_j}{\sigma_i}$. Note however that the welfare analysis in this case is not overly sensitive to the choice of weights; starting from $\alpha$, any re-shuffling of weights among the consumers whose compensating variation is above $\CV_{\alpha,\tau}$ does not change the value of the welfare measure, and similarly for those below. Thus compared to weighted average compensating variation, the welfare conclusions are relatively robust to the choice of weights.\footnote{The robustness to such perturbations is a defining feature of quantile-based aggregation. It is closely related to the Pivotal Monotonicity condition of \cite{rostek2010quantile}, which is a key axiom in behaviorally characterizing quantile maximization in the context of choice over uncertain acts.}

\subsection{Properties and estimation of distributional compensating variation}\label{sec:estimation}

Let $p^0$ be a status-quo price vector, and $p'$ an alternative price that would result from a policy change. Some goods may increase in price, while other decrease, in going from $p^0$ to $p'$.  Without loss of generality, we may label the goods so that 
\begin{equation}\label{eq:price_order}
    p'_Z - p^0_Z \geq p'_{Z-1} - p^0_{Z-1} \geq \dots \geq p'_1 - p^0_1.
\end{equation}
The demand function for type $i$ is defined as
\begin{equation*}
    \hat{q}^i_x(p,y) = \int 1\{ u(x, y - p_x|\varepsilon) > \max_{z \neq x} u(z,y - p_z| \varepsilon)\} F_i(d\varepsilon).
\end{equation*} In words, $\hat{q}^i_x(p,y)$ is the probability of choosing good $x$ when prices are $p$ and income $y$ or, using the preferred population interpretation in the IO literature, $\hat{q}^i_x(p,y)$ is the market share of good $x$. 

Given weights $\alpha$, define the \textit{$\alpha$-weighted aggregate demand} by $ Q^{\alpha}_x(p,y) := \sum_{i\in I} \alpha_i  \hat{q}^i_x(p,y)$. We are now in a position to calculate the distribution of the compensating variation in the population.

\begin{proposition}\label{prop:CV_cdf}
Let $\Delta p_x = p'_x - p^0_x$, where prices are labeled as in \cref{eq:price_order}. Then
\begin{equation*}
    G_{\hat{\pi}_\alpha}(a|p^0,p') =
    \begin{cases}
        1, \quad &\text{if } a \geq -\Delta p_1 \\
        1- \sum_{x = 1}^j &Q^{\alpha}_x(p'_1,\dots ,p'_j, p^0_{j+1} - a, p^0_{j+2} - a, \dots, p^0_{Z} - a, y-a)\\
          & \text{if } -\Delta p_j > a \geq -\Delta p_{j+1}, a < 0, \ 1 \leq j \leq Z-1\\
        1- \sum_{x = 1}^j &Q^{\alpha}_x(p'_1 + a,\dots ,p'_j + a, p^0_{j+1}, p^0_{j+2}, \dots, p^0_{Z}, y)\\
          & \text{if } -\Delta p_j > a \geq -\Delta p_{j+1}, a \geq 0, \ 1 \leq j \leq Z-1\\
        0, &\text{if } a < -\Delta p_Z.
    \end{cases}
\end{equation*}
\end{proposition}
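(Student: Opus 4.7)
My plan is to express the event $\{\CV(p^0, p'|\varepsilon) \leq a\}$ as a choice event in an auxiliary discrete-choice problem, so that its $\hat{\pi}_\alpha$-probability becomes a sum of $\alpha$-weighted demands $Q^\alpha_x$. By the defining relation $v(p', y - \CV|\varepsilon) = v(p^0, y|\varepsilon)$ and the monotonicity of $v$ in income, $\CV \leq a$ is equivalent to
\[
\max_{x \in Z} u(x, y - p^0_x|\varepsilon) \;\geq\; \max_{x \in Z} u(x, y - a - p'_x|\varepsilon).
\]
I would then form the augmented menu of $2|Z|$ alternatives in which each good $x$ appears once with leftover income $y - p^0_x$ and once with leftover income $y - a - p'_x$. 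By regularity of each $\hat{F}_i$, ties among these options occur with probability zero, so for each good only the cheaper of the two copies can ever be the agent's $\argmax$. The augmented menu therefore reduces to a genuine $|Z|$-option decision problem in which good $x$ appears at the smaller of $p^0_x$ and $p'_x + a$, and the CDF event becomes the event that the $\argmax$ lies on a specific side of the induced partition of goods.

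The comparison $p^0_x$ versus $p'_x + a$ is the same as $\Delta p_x$ versus $-a$, so the ordering~\eqref{eq:price_order} partitions the goods into a prefix and a suffix, with the split point determined by the index $j$ defined via $\Delta p_j \leq a < \Delta p_{j+1}$. The two displayed parameterizations in the statement -- prices $(p'_1, \ldots, p'_j, p^0_{j+1} + a, \ldots, p^0_Z + a)$ at income $y + a$ for $a \geq 0$, and $(p'_1 - a, \ldots, p'_j - a, p^0_{j+1}, \ldots, p^0_Z)$ at income $y$ for $a < 0$ -- are the two natural income translations that normalize every good to a common income level, and are equivalent under the invariance $u(x, (y + c) - (p_x + c)|\varepsilon) = u(x, y - p_x|\varepsilon)$. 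By linearity of $\hat{\pi}_\alpha = \sum_i \alpha_i \hat{F}_i$, the $\hat{\pi}_\alpha$-probability of selecting good $x$ in the reduced decision problem is exactly $Q^\alpha_x$ at the displayed arguments, and summing over $x = 1, \ldots, j$ yields the stated formula. The boundary cases $a < \Delta p_1$ and $a \geq \Delta p_Z$ are immediate: in the first the sum is empty and the event is almost surely empty, so $G(a) = 0$; in the second the sum is over all goods and equals $1$.

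The main obstacle is careful bookkeeping: aligning the split index $j$ defined by $\Delta p_j \leq a < \Delta p_{j+1}$ with the dominance condition in the reduced menu, keeping signs straight across the two sub-cases $a \geq 0$ and $a < 0$, and verifying that the two parameterizations indeed describe the same reduced decision problem. Once the augmented-menu reduction of the first paragraph is in place, no further probabilistic machinery is required; the conclusion reduces to a clean good-by-good comparison of indirect utilities.
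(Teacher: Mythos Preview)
Your approach is sound and considerably more explicit than the paper's. The paper's own proof is two sentences: it invokes Theorem~1 of \cite{bhattacharya2018empirical} to obtain, for each observable type $i$, the identical formula with the type-$i$ demand $\hat q^i_x$ in place of $Q^\alpha_x$, and then observes that taking the $\alpha$-weighted sum across types yields the statement (since $\hat\pi_\alpha = \sum_i \alpha_i \hat F_i$ and $Q^\alpha_x = \sum_i \alpha_i \hat q^i_x$). In other words, the paper treats the hard part---the augmented-menu reduction you describe---as a black box imported from the cited reference.

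What you have written is effectively a self-contained re-derivation of that cited result, followed by the same linearity step. The menu-duplication device (list each good twice, once with leftover $y-p^0_x$ and once with leftover $y-a-p'_x$; discard the dominated copy of each good; then read off the CDF event as ``the argmax lands among the surviving copies from the $p^0$-side'') is precisely the mechanism underlying Bhattacharya's theorem. You are also right that the two displayed parameterizations describe the \emph{same} reduced decision problem; the paper separates $a\ge 0$ from $a<0$ only so that the arguments fed into $Q^\alpha$ (prices and income) stay in the domain on which demand is defined, not because the underlying choice problems differ.

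One caution on the bookkeeping you already flag: be careful which side of the augmented menu corresponds to the CDF event. To match the displayed formula, the retained copy for $x\le j$ must have leftover $(y+a)-p'_x$ and the retained copy for $x>j$ must have leftover $y-p^0_x$, with $\{\CV\le a\}$ identified with ``argmax has index $\le j$.'' Depending on which sign convention for $\CV$ you carry through (the verbal definition in the introduction versus the convention implicit in the Bhattacharya formula being cited), the inequality you wrote can come out reversed; this is exactly the ``keeping signs straight'' obstacle you anticipate, and it is the only place your sketch needs real care.
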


The proof of \Cref{prop:CV_cdf} is in \Cref{proof:CV_cdf}. These expressions simplify when we restrict attention to price changes for a single good. 

\begin{corollary}\label{cor:single_price_cdf}
    If the price of good $Z$ increases and all other prices remain unchanged under $p'$, then
    \begin{equation*}
    G_{\hat{\pi}_\alpha}(a|p^0,p') =
    \begin{cases}
        1, \quad &\text{if } a \geq 0 \\
        Q^{\alpha}_Z(p^0_1,\dots ,p^0_{Z-1}, p^0_{Z} - a, y-a)
          & \text{if } 0 > a \geq -\Delta p_{Z}\\
        0, &\text{if } a < - \Delta p_Z.
    \end{cases}
\end{equation*}
 If the price of good $1$ decreases and all other prices remain unchanged under $p'$ then
 \begin{equation*}
    G_{\hat{\pi}_\alpha}(a|p^0,p') =
    \begin{cases}
        1, \quad &\text{if } a \geq - \Delta p_1 \\
        1 - Q^{\alpha}_1(p'_1 + a, p^0_{2}, \dots, p^0_{Z}, y)
          & \text{if } -\Delta p_1 > a \geq 0, \\
        0, &\text{if } a < 0.
    \end{cases}
\end{equation*}
\end{corollary}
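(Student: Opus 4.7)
The plan is to show that Corollary~\ref{cor:single_price_cdf} is a direct specialization of Proposition~\ref{prop:CV_cdf}, obtained by plugging in the specific structure of each price change and, in the first case, using the fact that demand probabilities sum to one.

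For the first case (only $p_Z$ increases), the ordering convention~\eqref{eq:price_order} gives $\Delta p_1 = \Delta p_2 = \cdots = \Delta p_{Z-1} = 0$ and $\Delta p_Z > 0$. The piecewise formula of Proposition~\ref{prop:CV_cdf} therefore collapses into three regions. For $a < \Delta p_1 = 0$ we immediately get $G_{\hat{\pi}_\alpha}(a|p^0,p') = 0$, and for $a \geq \Delta p_Z$ we get $1$. For $0 \leq a < \Delta p_Z$, the only nondegenerate interval $[\Delta p_j, \Delta p_{j+1})$ containing $a$ corresponds to $j = Z-1$, and $a \geq 0$, so the relevant line of the proposition gives
\begin{equation*}
G_{\hat{\pi}_\alpha}(a|p^0,p') = \sum_{x=1}^{Z-1} Q^\alpha_x\bigl(p'_1,\dots,p'_{Z-1},\, p^0_Z + a,\, y + a\bigr).
\end{equation*}
Since $p'_x = p^0_x$ for $x < Z$ by hypothesis and $\sum_{x=1}^{Z} Q^\alpha_x(p,y) = 1$, this equals $1 - Q^\alpha_Z(p'_1,\dots,p'_{Z-1}, p^0_Z + a, y + a)$, as claimed.

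For the second case (only $p_1$ decreases), the ordering gives $\Delta p_1 < 0 = \Delta p_2 = \cdots = \Delta p_Z$. For $a < \Delta p_1$ and $a \geq 0$ the answer is $0$ and $1$ respectively, directly from the proposition. For $\Delta p_1 \leq a < 0$, the only relevant interval is $[\Delta p_1, \Delta p_2) = [\Delta p_1, 0)$, corresponding to $j = 1$, and $a < 0$, so the proposition's line for negative $a$ gives
\begin{equation*}
G_{\hat{\pi}_\alpha}(a|p^0,p') = \sum_{x=1}^{1} Q^\alpha_x\bigl(p'_1 - a, p^0_2,\dots, p^0_Z,\, y\bigr) = Q^\alpha_1\bigl(p'_1 - a, p^0_2,\dots, p^0_Z,\, y\bigr),
\end{equation*}
as claimed.

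There is no real obstacle here: the only conceptual step is the rewrite in the first case using $\sum_x Q^\alpha_x = 1$, which is how a CDF built from ``the consumer stays with one of goods $1,\dots,Z-1$'' becomes ``the consumer is not choosing good $Z$.'' Everything else is a pure specialization. I would take a little care to confirm that the endpoints of the intervals match (in particular, that ``$a < \Delta p_{j+1}$'' and ``$a \geq \Delta p_j$'' still uniquely select the correct branch when several $\Delta p_j$ coincide), so that no double-counting occurs at the boundaries $a = 0$ in case one or $a = \Delta p_1$ in case two.
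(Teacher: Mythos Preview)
Your proposal is correct and is exactly the intended argument: the paper states this result as a corollary of Proposition~\ref{prop:CV_cdf} without giving any separate proof, so the specialization you carry out (together with the identity $\sum_{x} Q^\alpha_x = 1$ in the price-increase case) is precisely what fills the gap.
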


We can use these expressions to compare the median compensating variation to the mean. The proof of \Cref{prop:single_price_median} is in \Cref{proof:single_price_median}.

\begin{proposition}\label{prop:single_price_median}
    If the price of good $Z$ increases and all other prices remain unchanged under $p'$, then 
    \begin{itemize}
        \item $\CV_{\alpha,\frac{1}{2}}(p^0,p') \geq \CV_{avg}(p^0,p')$ if
        \begin{enumerate}[i.]
            \item $a \mapsto Q_Z^{\alpha}(p^0_{-Z}, p_x^0 - a, y-a)$ is convex, and
            \item at prices $p'$ and income $y + \Delta p_Z$ the share of consumers purchasing good $Z$ is sufficiently low.
        \end{enumerate}
        \item $\CV_{\alpha,\frac{1}{2}}(p^0,p') \leq \CV_{avg}(p^0,p')$ if 
        \begin{enumerate}[i.]
            \item $a \mapsto Q_Z^{\alpha}(p^0_{-Z}, p_x^0 - a, y-a)$ is concave, and
            \item at prices $p^0$ the share of consumers purchasing good $Z$ is sufficiently high.
        \end{enumerate}
    \end{itemize}
    If the price of good $1$ decreases and all other prices remain unchanged under $p'$ then 
    \begin{itemize}
        \item $\CV_{\alpha,\frac{1}{2}}(p^0,p') \geq \CV_{avg}(p^0,p')$ if 
        \begin{enumerate}[i.]
            \item $a \mapsto Q_1^{\alpha}(p'_1 + a, p^0_{-1}, y)$ is concave, and
            \item at prices $p'$ the share of consumers purchasing good 1 is sufficiently high.
        \end{enumerate}
        \item $\CV_{\alpha,\frac{1}{2}}(p^0,p') \leq \CV_{avg}(p^0,p')$ if 
        \begin{enumerate}[i.]
            \item $a \mapsto Q_1^{\alpha}(p'_1 + a, p^0_{-1}, y)$ is convex, and
            \item at prices $p^0$ the share of consumers purchasing good 1 is sufficiently low.
        \end{enumerate}
    \end{itemize}
\end{proposition}

\Cref{prop:single_price_median} highlights two important features of distributional welfare analysis. First, the median may be systematically above or below the mean depending on the shape of demand. Second, the relevant population of consumers matters. If, for example, the good whose price has increased was purchased by very few consumers under the original prices, then there will be very little effect. The median compensating variation will be zero, regardless of the shape of demand.

An important special case is when demand takes the widely-used Logit form, given by
\begin{equation*}
    \hat{q}_x^i(p,y) = \dfrac{\exp\big(W^i_x(y - p_x)\big)}{\sum_{z=1}^X \exp\big(W^i_z(y - p_z)\big)}.
\end{equation*}
for some increasing functions $W^i_x$. It turns out that if the marginal value of money does not decrease too fast, i.e. income effects are not too strong, then the CV distribution for a price increase will be skewed to the right, such that the median is below the mean.

\begin{proposition}\label{prop:logit}
    Under logit demand, consider a price increase for good $Z$, holding all other prices fixed. If $\exp(W_x^i(\cdot))$ is convex for all $i$ and $x \neq Z$, and at prices $p^0$ the share of consumers purchasing good $Z$ is sufficiently high, then $\CV_{\alpha,\frac{1}{2}}(p^0,p') \leq \CV_{avg}(p^0,p')$ for any $\alpha \in \Delta(I)$. 
\end{proposition}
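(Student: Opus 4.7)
The plan is to reduce \Cref{prop:logit} to the convexity condition in \Cref{prop:single_price_median}: under a price increase for good $Z$, I need to show that the map
\[
a \mapsto Q_Z^{\alpha}(p'_1,\ldots,p'_{Z-1},\, p^0_Z + a,\, y + a)
\]
is convex in $a$. The key algebraic observation is that when we simultaneously raise the price of good $Z$ by $a$ and income by $a$, the quantity $y + a - (p^0_Z + a) = y - p^0_Z$ is \emph{independent} of $a$. Hence in the logit formula, both the numerator term $\exp(W^i_Z(y - p^0_Z))$ and the $z=Z$ summand in the denominator are positive constants in $a$; call their common value $N^i > 0$. For $z \neq Z$, the $z$-th denominator summand is $\exp(W^i_z(y + a - p'_z))$, i.e.\ $\exp(W^i_z(\cdot))$ composed with an affine function of $a$, which is concave by hypothesis. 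Thus
\[
\hat q_Z^{i}(p'_1,\ldots,p'_{Z-1},\, p^0_Z + a,\, y + a) = \frac{N^i}{D^i(a)}, \qquad D^i(a) := N^i + \sum_{z\neq Z} \exp\bigl(W^i_z(y + a - p'_z)\bigr),
\]
and $D^i(a)$ is concave (a sum of a positive constant and concave functions) and strictly positive.

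Next I would invoke the standard fact that the composition of a convex decreasing function with a concave function is convex. Applied to the convex decreasing map $u \mapsto 1/u$ on $(0,\infty)$ and to the concave positive $D^i(a)$, this yields that $1/D^i(a)$ is convex in $a$. Multiplying by the positive scalar $N^i$ preserves convexity, so each type's contribution $\hat q_Z^i(\cdot)$ is convex in $a$. Taking the non-negative combination $\sum_i \alpha_i \hat q_Z^i(\cdot) = Q_Z^{\alpha}(\cdot)$ preserves convexity, giving what we need.

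Finally, applying the second bullet of the first case of \Cref{prop:single_price_median} (price of good $Z$ increases, and $a \mapsto Q_Z^{\alpha}(p'_{-Z},p^0_Z+a,y+a)$ is convex) delivers $\CV_{\alpha,\frac{1}{2}}(p^0,p') \leq \CV_{avg}(p^0,p')$, as desired.

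I do not anticipate a real obstacle here: the only non-mechanical step is noticing that the compensated-income parametrization makes the numerator of $\hat q_Z^i$ (and its $z=Z$ denominator term) \emph{constant} in $a$. Without this cancellation, the quotient's convexity would require controlling cross terms from differentiating both numerator and denominator, and the clean concavity hypothesis on $\exp(W^i_z)$ would not suffice. Once this simplification is in hand, the rest is an application of the ``convex decreasing $\circ$ concave is convex'' lemma and linearity of the expectation over types.
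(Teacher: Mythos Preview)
Your proposal is correct and follows essentially the same route as the paper's proof: reduce to the convexity criterion in \Cref{prop:single_price_median}, observe that the compensated-income parametrization makes the numerator (and the $z=Z$ denominator term) constant in $a$, note that the remaining denominator summands are concave by hypothesis, and conclude that the reciprocal of a positive concave function is convex. You are slightly more explicit than the paper in spelling out the composition lemma and the passage from type-level to $\alpha$-weighted demand, but the argument is the same.
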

The proof of \Cref{prop:logit} is in \Cref{proof:logit}.

\Cref{prop:logit} makes the case for considering more than one measure of welfare. The average compensating variation masks disparate policy impacts, and may lead to overly optimistic evaluations of the aggregate welfare loss. The conditions of \Cref{prop:logit} are met, for example, for the Cobb-Douglas utility specification (e.g. \citealt{berry1995automobile}) where $W_x^i(y - p_x) = a \log(y - p_x) + L(i,x)$ for some parameter $a$ and function $L(i,x)$ of individual and product characteristics. To avoid numerical issues, this specification is often replaced in practice with the local approximation $\tilde{W}_x^i(y - p_x) = -\frac{a}{y} p_x + L(i,x)$ which, while not precisely of the form in \Cref{prop:logit}, yields similar estimates for demand, and thus for the distribution of compensating variation \citep{berry1999voluntary}.

\section{\texorpdfstring{Empirical application: the voluntary export restraint \\ policy of the 1980s}{Empirical application: the voluntary export restraint policy of the 1980s}}\label{sec:empirical}

We develop an empirical application to illustrate the practical implications of the main points in our paper (further details are presented in \Cref{sec:empiricalappendix}). We focus on one of the first uses of the ``BLP methodology'' \citep{berry1995automobile} to conduct policy analysis: the evaluation of the ``voluntary export restraint'' policy of the 1980s in \cite{berry1999voluntary}. In our discussion, we refer to the first paper as BLP, and to the second as ``the VER paper.''

 The voluntary export restraint policy sought to limit Japanese car imports to the United States by asking the Japanese government to impose a quota on its exports to the US. The policy was implemented by the Reagan government as protectionist support for the US automobile industry, which had been adversely affected by the recession in the late 70's and struggled to compete with efficient, cheap, Japanese imports. The policy was viewed as a less-aggressive alternative to imposing a tariff on Japanese cars and remained in place, essentially, from 1981 to 1992. The voluntary export restraints were implemented by the Japanese Ministry of Trade and Industry, which imposed quotas on the exports from Japanese manufacturers to the US. 

The VER paper's stated ``primary goal'' was ``to provide some econometric evidence on the welfare implications of VER.'' To this end, the VER paper estimated a structural BLP-style model and used counterfactual simulations to compute the mean compensating variation. In our version of the exercise, we focus on alternative measures of welfare. 

Following the VER paper, the utility to consumer $i$ from buying car $j$ is assumed to be
 \[ u_{ij}=\underbrace{x_j'\bar\beta+\xi_j}_{\delta_j}\underbrace{-\alpha\frac{p_j}{y_i}+\sum_k \sigma_k x_{jk}v_{ik}}_{\mu_{ij}}+\epsilon_{ij}:=u(j,y_i,p_j|\epsilon_i),
 \]
where $p_j$ and $x_j$ are, respectively, the price and a vector of attributes of car $j$.  The random error $\epsilon_{ij}$ is assumed to be drawn from a type-I error distribution. The parameters $v_{ik}$ are random coefficients (see BLP) and reflect individual-specific preferences for the different attributes a car may have.  The difference between this specification and the one in BLP is in the term $\alpha\frac{p_j}{y_i}$. The VER paper justifies the assumption as coming from the log-normality of the income distribution, but they also remark that it corresponds to a linear approximation to the earlier specification in BLP.

The VER paper argues that voluntary export restraints should be modeled, at the firm level, as an implicit tax on the car models that are subject to export restraints. In particular, they assume a constant marginal cost $mc_j$ and thus obtain the first-order condition for the firm producing car model $j$ as:
\begin{align*}
    \ln(mc_j) = \ln(p_j-b_j(p,x,\xi,\theta)-\lambda VER_j)=w_j'\gamma+\omega_j,
\end{align*}
where $b_j$ captures the firm's mark-up (see \Cref{sec:empiricalappendix}), $\lambda$ is the implied tax rate attributed to voluntary export restraints, $\theta = (\bar\beta', \alpha, \sigma)$ represents the demand-side coefficients, and $VER_j$ is a dummy for whether the car is subject to restraint. The $b_j$ mark-up must account for the effect of the price of model $j$ on the rest of the models produced by the same firm. 

We simulate the model in the VER paper with the voluntary export restraint policy present, effectively reproducing the results in the VER paper. Following the VER paper, we then use their estimates to obtain counterfactual market outcomes: the outcomes that would have been obtained in the absence of voluntary export restraints. The compensating variation is calculated taking the observed prices (under the voluntary export restraint policy) as baseline, and the counterfactual prices (without the policy) as the alternative. Table 8 in the VER paper displays an average per-household compensating variation of \$317 ``for those who purchased a car.'' We take the latter qualification to mean those consumers who were likely to purchase a car. In our case, we get an average of \$319 when we consider the consumers who have a probability of choosing the outside option (meaning not buying a car) below the maximum, and
we get an average of \$331 when considering the consumers who have a probability of at least 70\% of purchasing some car. These numbers should be interpreted as welfare losses (the VER paper presents it as a negative number, but according to our convention, they are positive). They are similar to what the VER paper reports in Table 8, and seem to be the largest welfare losses in any of the years analyzed in the VER paper (see their Table 9).\footnote{The BLP can be easily reproduce by following \cite{conlon2020best}. The VER paper, however, presents some challenges that are not present in BLP, and we are not aware of any existing efforts to reproduce their findings. In appendix~\ref{sec:empiricalappendix} we provide additional details on how we reproduced the VER paper.}

Figure~\ref{fig:CV1987} shows the distribution of compensating variation in the population for the counterfactual shutting down of the voluntary export restraint policy. The two graphs correspond to the assumption of non-maximal probability of choosing the outside option and to a probability of buying a car of at least 70\%. The conclusion that emerges from these pictures differs from the story told by the average CV. The median CV is, in particular, zero (when rounding to the nearest integer). In fact, the VER paper remarks that the policy's effects on prices were modest, which is probably why most agents did not face a significant welfare loss as a consequence of voluntary export restraints. This stands in contrast with the conclusion drawn by the VER paper from their Table 8. By looking at the whole distribution of CV, and in particular at the median, we can therefore qualify the conclusion in the VER paper that ``auto purchasers were adversely affected by a significant amount.''

\begin{center}
\begin{figure}
   \includegraphics[width=0.45\textwidth]{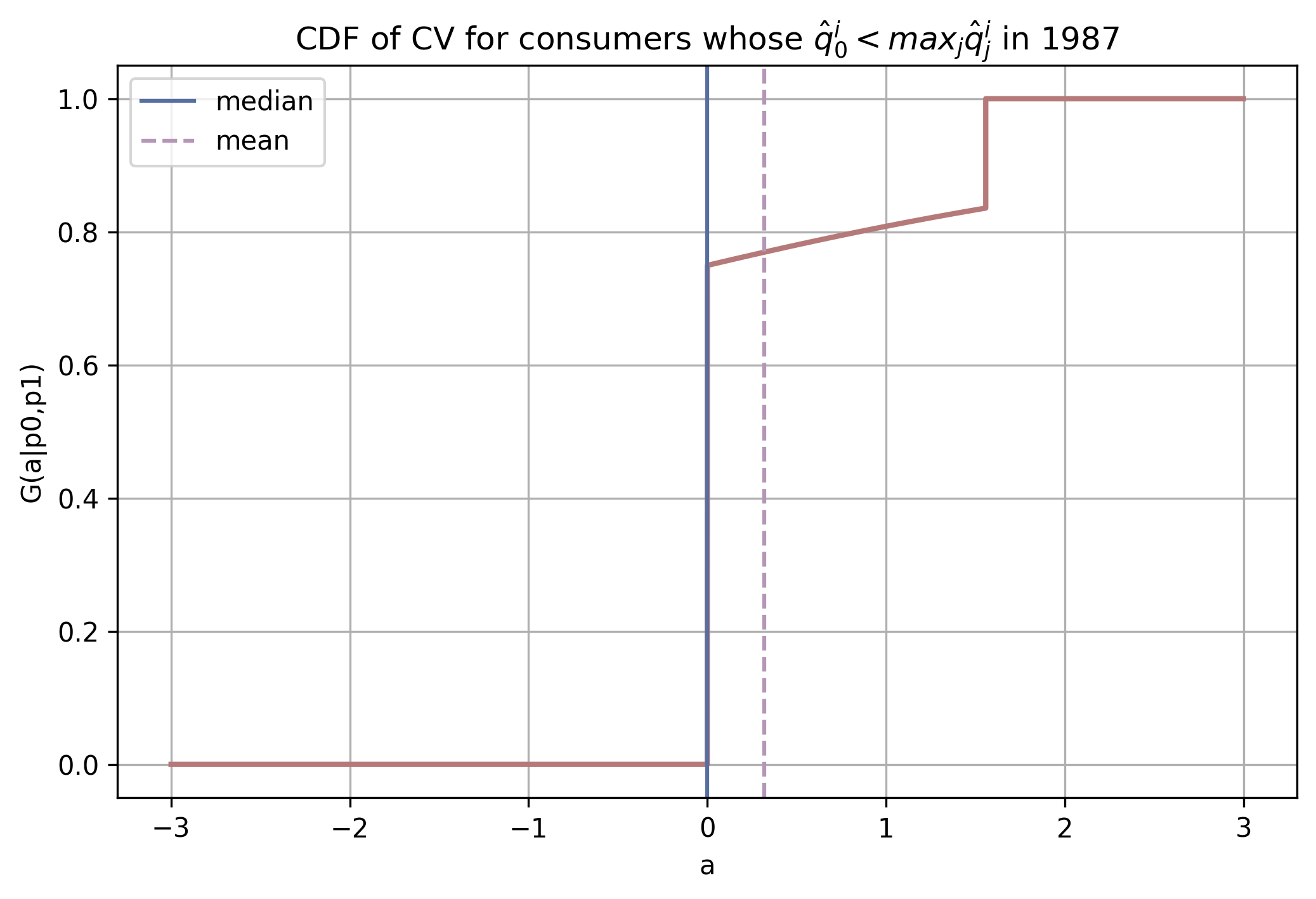}
   \includegraphics[width=0.45\textwidth]{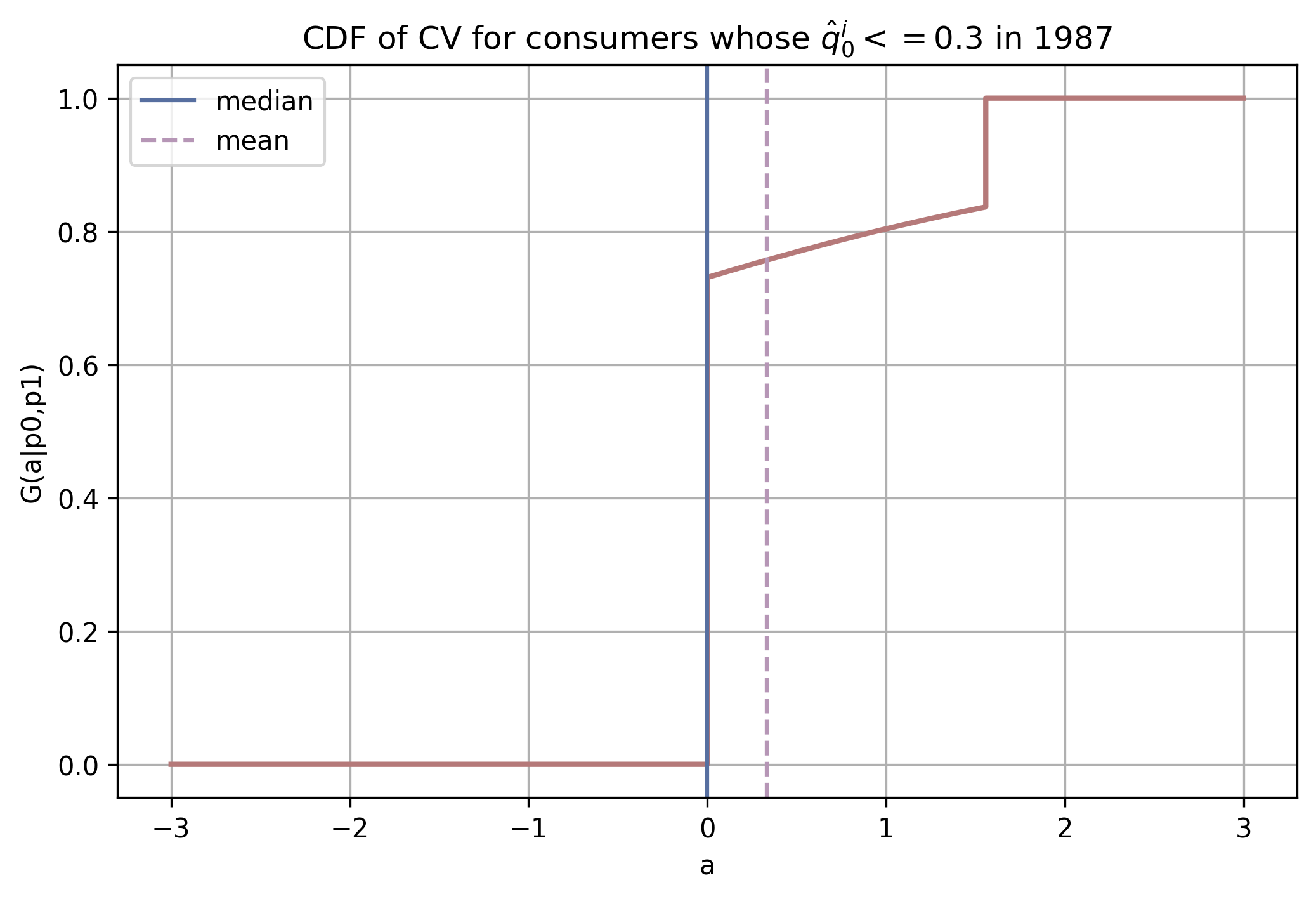}
\caption{CV distribution voluntary export restrictions.}\label{fig:CV1987}
\end{figure}
\end{center}

\section{Why random social choice?}\label{sec:further_discussion}

Besides using social choice as a basis for applied  welfare analysis, we may also be directly interested in the choice function. Our main proposal, weighted utilitarian social choice, involves a planner that makes random choices. We have already emphasized that our main result, \Cref{thm:main}, implies that the planner has to randomize or violate the normative properties in the theorem. In other words, if the planner does not randomize, then they cannot respect Pareto. In addition, in our discussion of applied welfare analysis in \Cref{sec:distributional}, we have argued that a random choice model can better deal with distributional issues than a deterministic criterion. Here we present three additional arguments in favor of a planner that chooses at random.

First, consider the critique of Harsanyi's theorem in \cite{diamondcardinal}. Diamond proposes the following example (slightly adapted to fit with our model). Suppose that there are two agents and two policies, or collective outcomes. Consider the choice between two lotteries, $x$ and $y$: $x$ delivers policy 1 for sure while $y$ delivers each policy with probability $1/2$. The two agents' vNM utilities are $u_1=(1,0)$ and $u_2=(0,1)$. Diamond argues that a utilitarian objective with equal weights for both agents would be indifferent between $x$ and $y$, while a basic fairness consideration should favor lottery $y$. 

Our model does what Diamond argues for. By randomizing between the two agents, and implementing the chosen agent's optimal policy, our model would choose each policy with equal probability. 

Second, we argue that random social choice is a natural solution when agents' individual choices exhibit a Condorcet cycle. Consider an example with $\nagents=3$ agents and $\abs{\prizes}=3$. The details of the example are spelled out below, but the overall idea is simple. Each agent $i$'s random expected utility is concentrated around $u_i\in\Re^3$, with $u_1+u_2+u_3=0$. A small perturbation is added to $u_i$ so that the random utility conform with the regular REU model: two of the agents' perturbations have mean zero, but one of them has a (small) non-zero mean. By a standard normalization, we may regard $u_i$, and any lottery, as a vector in $\Re^2$ (for details, see the proof of \Cref{thm:main}). There are three lotteries, $x,y,z$ that we represent as vectors in $\Re^2$. These lotteries are chosen so that
$u_1\cdot x>u_1\cdot y> u_1\cdot z$,
$u_2\cdot z>u_2\cdot x> u_2\cdot y$, and
$u_3\cdot y>u_3\cdot z> u_3\cdot x$.
The random utilities are concentrated around $u_i$; it turns out, then, that the probability of choosing from each pair of lotteries conforms to the following table:
\[\begin{array}{l|ccc}
	& (x,\{x,y\}) & (y,\{y,z\}) & (z,\{x,z\})  \\ \hline
\rho_1	& \geq 1-\eta & \geq 1-\eta & <\eta \\
\rho_2	& \geq 1-\eta & <\eta	 & \geq 1-\eta \\
\rho_3	& <\eta & \geq 1-\eta & \geq 1-\eta, \\ 
\end{array}
\] with $\eta\in (0,1)$ being small. 

Now the question is: what should be chosen from $\{x,y,z\}$? There is (with high probability) a Condorcet cycle in majority voting over pairs of alternatives. The expected value of the sum of agents' vNM utilities (the planner's utility that would result from applying Equation~\eqref{eq:expU}) is $(-8\da/\pi,0)$, for small $\da>0$; and this expected vNM index will rank the lotteries as $z\succ y\succ x$.

In consequence, a planner that obeys the utilitarian aggregation of  expected utilities will deterministically choose $z$ from $\{x,y,z\}$. With high probability, this choice will coincide with 3's choice, but differ from what 1 or 2 would choose. This seems unfair: the intransitivity in a Condorcet cycle should, arguably, be resolved by randomizing between the favorite choices of each of the individual agents.

\begin{example}
We proceed with the details of the example. 
Suppose that $i$'s vNM utility is the random variable $\tilde u_i = (1-\eta)\bar u_i + \eta \tilde v_i$, where $\bar u_i$ is a deterministic utility and $\tilde v_i$ is random. Think of $\eta\in (0,1)$ as small. Suppose that $u_1=(1,0)$, $u_2=(-1/2,\frac{\sqrt{3}}{2})$ and $u_3=(-1/2,\frac{-\sqrt{3}}{2})$.

Denote by $w(\ta)=(\cos(\ta),\sin(\ta))$ the vector on the unit circle corresponding to angle $\ta$. Suppose that $\tilde v_i$, $i=1,2$, is obtained from $w(\tilde \ta)$, where $\tilde \ta$ is drawn uniformly at random from $[0,\pi/2]$. For $\tilde v_3=w(\tilde \ta)$ suppose that $\tilde \ta$ is drawn from a ``triangular'' density $f$: piecewise linear with $f(\pi)=\frac{1}{2\pi}+\da$ and $f(0)=f(2\pi)=1-\da$ (a linear interpolation in between). Note then that $\sum_i u_i=0$, while $\E \tilde v_i=0$ for $i=1,2$; and $\E\tilde v_3 = (-8\da/\pi,0)$.

Now consider three lotteries, $x,y$ and $z$. Let $\ta_0=\pi/6$, so that $u_i=w(2(i-1)\ta_0)$, $i=1,2,3$. Let $x=w(\ta_0-\ep)$, $y=w(5\ta_0-\ep)$ and $z=w(3\ta_0-\ep)$. Then we see that
\[x\cdot u_1 = \cos (\ta_0-\ep)> y\cdot u_1 = \cos (\ta_0+\ep)>z\cdot u_1 = \cos (3\ta_0-\ep). \] The rankings induced by $u_2$ and $u_3$ can be similarly verified. The expected value $\E\sum_i \tilde u_i$ is equal to $(-8\da/\pi,0)$, and we can verify that $(-8\da/\pi,0)\cdot z>(-8\da/\pi,0)\cdot y> (-8\da/\pi,0)\cdot x$. So $\E\sum_i \tilde u_i$ chooses deterministically $z$ from $\{x,y,z\}$.
\end{example}

Third, the meaning of a social preference is not self-evident, and we argue that a random utility model has some clear advantages over the model that postulates a unique social preference. 

\cite{harsanyi1955cardinal} discusses the possibility that each individual agent has their own social preference, which corresponds to their ethical value judgments over social outcomes: ``each individual is supposed to have a social welfare function of his own, expressing his own individual values --in the same way as each individual has a utility function of his own, expressing his own individual taste.'' Yet, in his model, Harsanyi then insists of the emergence of a single social preference that any individual should hold. Our model allows for a distribution over social preferences in the population, thus respecting the principle that different individuals could hold different ethical value judgments.

Our notion of weighted utilitarianism  can be equivalently understood as random dictatorship \citep{zeckhauser1969majority,zeckhauser1973voting,gibbard1977manipulation}. Standard (deterministic) dictatorships are, of course, undesirable. But random dictatorships are, arguably, sensible solutions to many of the problems we have discussed above. This point is emphasized by \cite{barbera1979majority}. A deterministic dictatorship is unfair because it ignores the preferences of all but one agent. A random dictatorship gives all agents a chance of having their preferences respected. One can also imagine a repeated implementation, in which agents take turns in being selected as the dictator for the group. In the long run, then, all agents get to choose an outcome for the collective.

\section{Alternative interpretation of utilitarian choice rules}\label{sec:alignment}

The axiom of respecting Pareto, which characterizes utilitarian social choice as defined by \Cref{thm:main}, is intuitively related to alignment between the choices of the planner and those of the agents. In this section we provide an alternative interpretation of this condition based directly an a notion of alignment between stochastic choice rules. 

\begin{definition}
    Say that choice rule $\rho'$ is \df{more aligned with the agents} at $D \in \mathcal{D}$ than $\rho$ if for every $i \in \{1,\dots,m\}$, the agreement between $\rho'$ and $\pi_i$ is at least as large as the agreement between $\rho$ and $\pi_i$.
\end{definition}

We make use of the following additional definitions
\begin{itemize}
    \item A choice rule $\rho$ is \df{maximal} at $D$ if there is no other choice rule which is more aligned with the agents at $D$ than $\rho$. 
    \item A choice rule $\rho$ \df{dominates} $\rho'$ if $\rho$ is more aligned with the agents than $\rho'$ at all $D \in \mathcal{D}$.
    \item A choice rule $\rho$ is \df{undominated} if there is no other choice rule that dominates it. A random expected utility $\pi$ is undominated if the induced choice rule $\rho_{\pi}$ is.\footnote{Note that in order to be undominated, an REU must also be undominated by choice rules that are not REU.} 
    \item The planner $\pi$ is a \df{full-support weighted utilitarian} for $\pi_1, \dots, \pi_m$ if there exist weights $\alpha_1, \dots, \alpha_m > 0$, $\sum_{i=1}^{m} \alpha_i = 1$, so that $\pi = \sum_{i = 1}^m \alpha_i \pi_i$.
    \item Given a choice set $D$ and a subset $B \subseteq D$, let $p^u(\cdot |B) \in \Delta(D)$ be the distribution that is uniform over $B$, and zero elsewhere. 
    \item A choice rule $\rho$ is \df{responsive} to $\pi_1, \dots, \pi_m$ if for any $D \in \mathcal{D}$, $x \in D$, and $1 \leq i \leq m$, if $\rho_{\pi_i}(x,D) > 0$ then $\rho(x,D) > 0$.
    \item A choice rule $\rho$ is \df{consistent} with $\pi_1, \dots, \pi_m$ if for any $D \in \mathcal{D}$ and $x \in D$, if $\rho_{\pi_i}(x,D) = 0$ for all $1 \leq i \leq m$ then $\rho(x,D) = 0$.
\end{itemize}

Given a choice set $D$, let $S(D) := \cup_{i=1}^m supp\left(\rho_{\pi_i}(\cdot,D)\right)$ be the union of the supports of each agent's choice rule on $D$.

The following provides an alternative characterization of the social choice rule from \Cref{thm:main}. 

\begin{proposition}\label{prop:undominated}
    The following are equivalent
    \begin{enumerate}
        \item There exists a choice rule that is responsive and undominated.
        \item Every responsive and undominated choice rule is consistent. 
        \item Every consistent choice rule is undominated.
        \item The choice rule defined by $D \mapsto p^u(\cdot| S(D))$ is local behavioral utilitarian for $\pi_1, \dots, \pi_m$.
    \end{enumerate}
\end{proposition}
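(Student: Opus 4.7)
The plan is to reduce each condition to a per-decision-problem geometric statement and then prove the equivalences via a cycle and a supporting-hyperplane argument. Fix $D\in\D$, set $v_x := (\rho_{\pi_i}(x,D))_{i=1}^m \in \R_+^m$, and let $A(D) := \cvh\{v_x : x \in D\}$. Because a choice rule can be modified at each $D$ independently, $\rho$ is undominated if and only if, for every $D$, the agreement vector $v(\rho;D) := \sum_x \rho(x,D)\,v_x$ is Pareto-optimal in $A(D)\subseteq\R_+^m$; equivalently, $\rho$ is maximal at every $D$. Each of (1)--(4) thus becomes an assertion about the position of specific agreement vectors on the Pareto frontier of $A(D)$.

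I would establish the cycle $(3)\Rightarrow(1)\Rightarrow(2)\Rightarrow(4)\Rightarrow(3)$. First, $\rho^*:= D\mapsto p^u(\cdot\mid S(D))$ is responsive and consistent by construction, so (3) makes it undominated, witnessing (1). For $(1)\Rightarrow(2)$: if a responsive, undominated $\rho$ had $\rho(x_0, D_0) > 0$ for some $x_0 \notin S(D_0)$, moving that mass at $D_0$ to any $y_0 \in S(D_0)$ would weakly improve every agreement and strictly improve the agreement with any $\pi_i$ charging $y_0$, contradicting undominatedness.

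For $(2)\Rightarrow(4)$: pick a responsive undominated rule $\rho$ (available via (1), which (2) implies through its responsive, consistent witness $\rho^*$). By (2), $\rho$ is consistent, hence supported exactly on $S(D)$; its Pareto-optimal agreement vector $v(\rho;D)$ is supported over $A(D)$ by some $\beta^D \ge 0$, $\beta^D \neq 0$, with $v(\rho;D)$ maximizing $\langle \beta^D,\cdot\rangle$. Because $v(\rho;D)$ is a strict convex combination of $\{v_x : x\in S(D)\}$, each such $v_x$ must attain the common maximum value $c := \langle \beta^D, v_x\rangle$. Summing this equality over $x \in S(D)$ and using $\sum_{x \in S(D)}\rho_{\pi_i}(x,D) = 1$ identifies $c = \tfrac{1}{|S(D)|}\sum_i \beta^D_i$, so the normalization $\alpha^D := \beta^D/\sum_i \beta^D_i$ satisfies $\sum_i \alpha^D_i \rho_{\pi_i}(x,D) = 1/|S(D)|$ on $S(D)$ (and $0$ outside)---exactly the LBU representation of $\rho^*$.

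The main obstacle is $(4)\Rightarrow(3)$. Given LBU weights $\alpha^D$, every $x\in S(D)$ maximizes $\langle \alpha^D,\cdot\rangle$ over $A(D)$ at value $1/|S(D)|$, so any consistent $\rho$ satisfies $\langle \alpha^D, v(\rho;D)\rangle = 1/|S(D)|$. A hypothetical dominator $\rho'$ with strict improvement at some $D$ in coordinate $i^*$ would force $\alpha^D_{i^*}=0$; closing the argument therefore requires showing that the LBU weights can always be chosen strictly positive. I would attack this by exploiting the structural fact that $\rho^*$ is uniform on $S(D)$: the coincidence of the $|S(D)|$ distinct popularity vectors $\{v_x : x \in S(D)\}$ on a common supporting face of $A(D)$ constrains the normal cone enough that a strictly positive supporting direction can be constructed---for instance, by averaging the nonnegative supporting directions associated with individual $v_x$'s---thereby ruling out the dominator and yielding undominatedness of every consistent $\rho$.
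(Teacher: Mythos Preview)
Your plan is close in spirit to the paper's: both reduce everything to a per-$D$ supporting-hyperplane statement and then cycle through the four conditions. The paper, however, isolates the key step as an explicit lemma: $\rho$ is maximal at $D$ if and only if there exists $y\in\Delta(\{1,\dots,m\})$ with $\operatorname{supp}(\rho(\cdot,D))\subseteq\argmax_{x\in D}\,y^{T}A_{D}[x]$, proved by a one-line minimax argument. With that lemma, $(4)\Rightarrow(3)$ is immediate: the LBU weights $y_D$ make $y_D^{T}A_D[\cdot]$ equal to $1/|S(D)|$ on $S(D)$ and $0$ elsewhere, so $\argmax_x y_D^{T}A_D[x]=S(D)$; any consistent $\rho$ therefore satisfies the support inclusion at every $D$ and is maximal everywhere, hence undominated. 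No strict positivity of the weights is required.

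Your route to $(4)\Rightarrow(3)$, by contrast, depends on being able to choose the LBU weights strictly positive, so that a hypothetical dominator improving in coordinate $i^\ast$ is ruled out. This step fails. Take $m=2$, $D=\{a,b\}$, $\rho_{\pi_1}(\cdot,D)=(\tfrac12,\tfrac12)$ and $\rho_{\pi_2}(\cdot,D)=(0.9,0.1)$. Then $S(D)=\{a,b\}$ and $p^u(\cdot\mid S(D))=(\tfrac12,\tfrac12)$ is LBU only with $\alpha^D=(1,0)$; no strictly positive LBU weights exist, and your ``averaging of supporting directions'' cannot manufacture them. The mismatch comes from implicitly reading ``dominated'' as strong Pareto (weakly better in every coordinate, strictly in one). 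The paper's minimax computation effectively encodes the weaker notion (no $p'$ is strictly better in every coordinate), and under that reading the argmax inclusion alone delivers maximality.

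A second, smaller gap: in $(2)\Rightarrow(4)$ you invoke the existence of a responsive undominated rule by asserting that ``(2) implies (1) through its responsive, consistent witness $\rho^\ast$.'' But (2) is a universal statement; if it holds vacuously it says nothing about $\rho^\ast$ being undominated. The paper's write-up shares this looseness---it recycles the weight vector $y$ obtained while proving $(1)\Rightarrow(2)$---so this is a presentational issue inherited from the paper rather than a defect unique to your argument, but your explicit attempt to deduce (1) from (2) is a non sequitur.
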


\begin{proof}
    Proof in \Cref{proof:undominated}. 
\end{proof}

One interpretation of \Cref{prop:undominated} is the following. Suppose we want to use a choice rule that is responsive. We can think of responsiveness as guaranteeing that the planner takes into account the preferences of every agent, at least to a minimal extent. Given this, we would like to avoid choice rules that are dominated, if possible. We might even be willing to use a choice rule that is not weighted utilitarian, or even not REU. \Cref{prop:undominated} tells us that this is not necessary. Either the criteria of full support and undominated cannot be jointly satisfied, or they are satisfied by any choice rule that is responsive and undominated. In particular, any full-support weighted utilitarian choice rule meets these criteria. Intuitively, the condition that $D \mapsto p^u(\cdot| S(D))$ is local behavioral utilitarian for $\pi_1, \dots, \pi_m$ should be satisfied when there are sufficiently many agents, with sufficient disagreement among them.

\section{Literature}\label{sec:literature}

On the applied side, our work is conceptually related to the literature on revealed welfare weights in public finance. The basic idea in this literature is to assume that the social planner's preferences admit a utilitarian representation (the Bergson-Samuelson approach), and then use observed policy choices to recover the implied weights. For example, \cite{christiansen1977theoretical} inverts the optimal taxation problem of \cite{mirrlees1971exploration} to recover welfare weights. This is sometimes referred to as the ``inverse-optimum'' approach; see \cite{hendren2020measuring} for a recent contribution and review. A complementary approach to revealed social preference is taken by \cite{backus2024surplus}. Unlike the inverse-optimum approach, \cite{backus2024surplus} assumes that the planner maximizes consumer surplus (similar to the average compensating variation approach) and then recovers the implied weights on individual welfare using consumer choice data. In line with \Cref{prop:logit_intro} and \Cref{prop:logit}, \cite{backus2024surplus} show that the implied welfare weights are higher for high-income individuals. 

Conversely, \cite{saez2016generalized} study a model in which individual welfare weights are a primitive which define social preferences. They show how individual welfare weights can be used to capture distributional considerations of the planner, and study optimal tax policies under various specifications. This approach is appealing when there is a clear metric by which to make interpersonal comparisons, e.g. money.

Our work differs from the inverse-optimum approach, \cite{backus2024surplus}, and \cite{saez2016generalized} in that we do not assume that social preferences involve linear aggregation of individual's utilities. Our representation of utilitarian social choice, and the implied distributional welfare measures, do not involve interpersonal comparisons of cardinal utility. Moreover, since our distributional welfare measures are based on quantiles, rather than averages, they capture a different set of distributional considerations. 

The distributional welfare criteria which we derive from the characterization of utilitarian social choice relate to the literatures which study quantile-based representations of individual and collective choice. As early as \cite{hanemann1996welfare} the median compensating variation was recognized as a potential criterion for welfare analysis. However, this measure has lacked a conceptual foundation. \cite{manski1988ordinal} and \cite{rostek2010quantile} provide decision-theoretic foundations for a representation of individual choice under uncertainty via quantile maximization. In the context of collective choice, \cite{roberts1980possibility} studies a model with interpersonally-comparable utilities and derives a ``positional dictatorship'' rule under which the social value of an alternative is the welfare of the $q^{th}$-worst-off individual. \cite{bhattacharya2009inferring} adopts this approach to study socially-optimal group formation. \cite{manski2014quantile} study a version of  the \cite{wald1949statistical} treatment assignment problem in which the mean is replaced by the median for evaluating performance.

Most closely related to our work is \cite{chambers2007ordinal}. Like \cite{roberts1980possibility}, the objects of choice in Chambers' model are distributions over individual welfare. \cite{chambers2007ordinal} characterizes the set of statistics on these distributions which satisfy ordinal invariance and monotonicity conditions. These statistics turn out to be (essentially) the quantiles of the welfare distribution. 

Our work differs from this literature in that our characterization is based on quantiles of the compensating variation distribution, which is a measure of welfare changes, as opposed to quantiles of the welfare levels. To understand this distinction, note that under \cite{roberts1980possibility}, the $q^{th}$-worst-off individual under alternative $x$ may differ from that under alternative $y$, whereas our approach would be analogous to looking at the individual whose welfare change when going from $x$ to $y$ is the $q^{th}$-highest. 

Additionally, unlike \cite{chambers2007ordinal} and \cite{rostek2010quantile}, we begin with individuals' choices over alternatives (e.g. price vectors ) and derive an aggregation of these into a social choice function. This function represents the (potentially stochastic) choices of a utilitarian planner. We then show that the planner's (stochastic) compensating variation corresponds exactly to the median of individual compensating variation, under a potentially distorted distribution. Thus rather than starting from an individual welfare measure and deriving a quantile-based representation of social choice, we provide a unified derivation of both the welfare measure (compensating variation) and the quantile rule directly from the social choice function.

Our characterization of utilitarian social choice belongs to the large literature on utilitarian preference aggregation building on \cite{harsanyi1955cardinal}. \cite{hammond1992harsanyi} offers a discussion of Harsanyi's theorem and puts the result in context. \cite{border1985more} provides one of the first rigorous proofs of Harsanyi's theorem, clarifying the role of some assumptions in Harsanyi's original paper that turn out to be superfluous. See also \cite{duggan1996note}. Most proofs use a separation argument that was absent in \cite{harsanyi1955cardinal}, but \cite{zhou1997harsanyi} shows that a very general version of the theorem may be obtained along the lines of Harsanyi's original argument (see also the proof in \cite{camacho1974cardinal}). 

A related issue concerns settings with subjective uncertainty and the simultaneous aggregation of utilities and beliefs: \cite{gilboa2004utilitarian} offer a solution (by weakening the Pareto axiom) to impossibility results discussed by \cite{hylland1979impossibility} and \cite{mongin1995consistent}. Our model has focused on the original setting in Harsanyi which concerns objective lotteries.

We also contribute to the literature on random collective choice. Examples include \cite{fishburn77}, \cite{fishburn1984probabilistic}, and \cite{brandl2016consistent}. \cite{chambers2024correlated} considers correlation in individual random utility, a topic that we ignore altogether. \cite{gibbard1977manipulation} characterized the strategy-proof random social choice mechanisms and obtains essentially a characterization of random dictatorship. See also \cite{barbera1979majority}, who argues that random dictatorships do not share the negative message of deterministic dictatorships as they have some degree of fairness.

{
\singlespacing
\bibliographystyle{ecta}
\bibliography{references}
}

\appendix

\section{Proof of \texorpdfstring{\Cref{thm:main}}{}}\label{proof:main}

\subsection{Preliminary definitions and notation.}
Let $\Re^n_+$ denote the non-negative vectors in $\Re^n$. If $A,C\subseteq \Re^n$ and $\ta\in\Re$ then $\ta A = \{\ta x:x\in A \}$, and $A+C=\{x+y:x\in A, y\in C\}$. We abuse notation and write $A+\{x\}$ as $A+x$. The ball of radius $\ep$ and center $x$ in $\Re^n$ is denoted by $B_\ep(x)$. The unit ball, $B_1(0)$, is simply written as $B$, and its boundary, the unit sphere, as $S$. When $A\subseteq \Re^n$, $A^o$ denotes the interior of $A$.

Let $M(D,u)=\{x\in \cvh (D):u\cdot x\geq u\cdot y \ \ \forall \ y\in \cvh(D) \}$ denote the set of maximizers of the von Neumann-Morgenstern (vNM) index $u$ in $\cvh(D)$, and  $N(D,x)=\{u\in U:u\cdot x\geq u\cdot y \ \ \forall\  y\in \cvh(D)\}$ the set of vNM $u$ for which $x\in M(D,u)$. Let $N^+(D,x)=\{u\in U:u\cdot x > u\cdot y \ \ \forall \ y\in  \cvh(D)\setminus \{x\}\}$. 

A \df{cone} is a subset of $\Re^n$ that is closed under positive scalar multiplication. So $C$ is a cone if $x\in C$ and $\la>0$ implies $\la x\in C$. 
A \df{convex cone} is a subset of $\Re^n$ that is closed under positive linear combinations of its elements. A cone is \df{polyhedral} if it consists of all the positive linear combinations of a finite set. Thus a polyhedral cone is always a convex cone. Observe that $N(D,x)$ is a convex cone, for any $x\in D$, and that $N(D,x)$ is polyhedral when $D\in \D$. A convex cone is \df{pointed} if 0 is one of its extreme points.

For any set $A\subseteq \Re^{n+1}$, denote by $A'\subseteq \Re^n$ the set $\{(x_1,\ldots,x_n):(x_1,\ldots,x_n,x_{n+1})\in A\}$ obtained by projecting $A$ onto its first $n$ coordinates. Observe that if $C$ is a cone and open in $\Re^{n+1}$, then so is $C'$ in $\Re^n$.  

A subset $P$ of $\R^n$ is called a \textit{\textbf{polyhedron}} if there exists an $\ell \times n$ matrix $A$ and a vector $b \in \R^{\ell}$ such that $P = \{x \in \R^n : Ax \leq b \}$. A \textbf{\textit{polytope}} is a subset of $\R^n$ that is the convex hull of a finite set $A \subset \R^n$. So $P$ is a polytope if $P = \cvh(A)$ for some finite set $A \subset \R^n$. Equivalently, $P$ is a polytope if and only if it is a bounded polyhedron \citep{minkowski1896geometrie}.

Let $P$ be a polytope in $\R^n$. If $c$ is a non-zero vector and $\delta = \max \{c\cdot x : x\in P \}$ then the hyperplane $H = \{x \in \R^n : c \cdot x  = \delta\} $ is called a \textit{\textbf{supporting hyperplane}} of $P$. A subset $F$ of $P$ is called a \textit{\textbf{facet}} of $P$ if $F = H \cap P \neq P$ for some supporting hyperplane $H$ of $P$, and there does not exist another supporting hyperplane $H' \neq H$ of $P$ such that $F \subset H' \cap P$. 

Let $\K^*$ be the set of all polyhedral cones in $U$, and $\K$ the set of all pointed polyhedral cones. \cite{gul2006random} show that if $K\in \K^*$ then there exists $D\in \D$ with $K=N(D,0)$ and $0\in D$. Let $\F$ be the smallest field that contains $\K^*$, and $\H=\{\textrm{ri} K : K\in \K \}$.

\subsection{Preliminary lemmas}

\begin{lemma}\label{lem:GPprop4} If $C$ is a polyhedral cone in $U$, then there exists a finite set $D$ and $x\in D$ so that $C=N(D,x)$.
\end{lemma}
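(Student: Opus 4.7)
The plan is to realize $C$ as the normal cone at a point $x$ in a finite subset $D \subseteq X$. Starting point: by the Minkowski--Weyl theorem, every polyhedral cone in $U$ has an inequality representation
\[
C = \{u \in U : \langle u, a_j\rangle \geq 0, \; j = 1, \ldots, k\}
\]
for some vectors $a_1, \ldots, a_k \in \Re^{\dimU+1}$. Because every $u \in U$ has zero in its $(\dimU+1)$-st coordinate, the inner product $\langle u, a_j\rangle$ is insensitive to $(a_j)_{\dimU+1}$, which is the freedom I will exploit.

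For each $j$, I would replace $a_j$ by a zero-sum representative $\bar a_j$: set $(\bar a_j)_i = (a_j)_i$ for $i \leq \dimU$ and $(\bar a_j)_{\dimU+1} = -\sum_{i=1}^{\dimU} (a_j)_i$, so that $\sum_i (\bar a_j)_i = 0$ and $\langle u, \bar a_j\rangle = \langle u, a_j\rangle$ for all $u \in U$. The zero-sum property is exactly what keeps perturbations of the form $x - \epsilon \bar a_j$ on the affine hyperplane $\{z \in \Re^{\dimU+1} : \sum_i z_i = 1\}$ that contains the simplex. Choose $x = \frac{1}{\dimU+1}\one$ (the barycenter of $X$) and, for a small $\epsilon > 0$ to be determined, define $y_j := x - \epsilon \bar a_j$ and $D := \{x, y_1, \ldots, y_k\}$. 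For $\epsilon$ small enough relative to $\max_{i,j} |(\bar a_j)_i|$, each $y_j$ has nonnegative coordinates summing to one, so $D \subseteq X$ is a legitimate decision problem.

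The verification is then a short linear calculation. For any $u \in U$,
\[
u \cdot (x - y_j) = \epsilon\, \langle u, \bar a_j\rangle = \epsilon\, \langle u, a_j\rangle,
\]
so by linearity of $u \cdot {}\cdot{}$ on $\cvh(D)$,
\[
N(D, x) = \{u \in U : u \cdot (x - y_j) \geq 0,\; j = 1, \ldots, k\} = \{u \in U : \langle u, a_j\rangle \geq 0\} = C.
\]
Since $x \in D$ by construction, this is the desired decision problem. Degenerate cases are handled at the outset: if $C = U$, take $k = 0$ and $D = \{x\}$; any redundant $a_j = 0$ can be dropped before the construction begins.

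The only real obstacle I foresee is conceptual rather than technical: recognizing that the normalization $u_{\dimU+1} = 0$ is precisely what lets the supporting vectors of $C$ be translated parallel to the simplex by adjusting their last coordinate, so that the perturbed points $x - \epsilon \bar a_j$ remain valid lotteries. Once that observation is in hand, the proof reduces to choosing $\epsilon$ small enough to enforce nonnegativity and a one-line inner-product identity.
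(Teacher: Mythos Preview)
Your argument is correct: the Minkowski--Weyl half-space description of $C$, the zero-sum adjustment of the $a_j$ so that the perturbed points $y_j = x - \epsilon \bar a_j$ stay on the affine hull of the simplex, and the one-line verification that $N(D,x)=C$ all go through exactly as you describe. The only thing to note is that the paper does not actually prove this lemma at all---it is quoted verbatim as Proposition~4 of \cite{gul2006random} (see the sentence immediately following Lemma~\ref{lem:GPprop6}). So rather than matching or diverging from the paper's proof, you have supplied a self-contained argument where the paper relies on a citation; your construction has the additional virtue of producing $D\subseteq X$ directly, which is what the downstream applications (e.g.\ Lemma~\ref{lem:lowdim}) in fact require.
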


\begin{lemma}\label{lem:GPprop6} $\H$ is a semiring and $\F$ consists of finite unions of elements of $\H$.
  \end{lemma}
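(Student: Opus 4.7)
The plan is to prove the two claims in sequence, first that $\H$ is a semiring (closed under finite intersections, with set differences decomposable into finite disjoint unions of elements of $\H$), and then identifying $\F$ with the collection of finite unions of elements of $\H$. The main workhorse will be the hyperplane-arrangement structure of polyhedral cones, together with two standard facts from convex analysis: that $\text{ri}(A) \cap \text{ri}(B) = \text{ri}(A \cap B)$ whenever the left-hand side is nonempty, and that a sub-cone of a pointed polyhedral cone is itself pointed.

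For the semiring part, I would first handle intersections: given $\text{ri}(K_1), \text{ri}(K_2) \in \H$, either their intersection is empty, or the relative-interior identity gives $\text{ri}(K_1) \cap \text{ri}(K_2) = \text{ri}(K_1 \cap K_2)$, and since $K_1 \cap K_2$ is a polyhedral cone contained in the pointed $K_1$, it lies in $\K$. For set differences, the key step is to build a common polyhedral subdivision of $K_1$: take the finite collection of hyperplanes through $0$ bounding $K_1$ and $K_2$, and consider the induced arrangement. Each (relatively open) cell of this arrangement lying in $K_1$ is the relative interior of a polyhedral sub-cone of $K_1$; since $K_1$ is pointed, every such sub-cone is pointed, so each cell lies in $\H$. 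Now $\text{ri}(K_1)$ is the disjoint union of the cells it contains, and $\text{ri}(K_2) \cap K_1$ is a union of some of these, so $\text{ri}(K_1) \setminus \text{ri}(K_2)$ is the disjoint union of the remaining cells in $\text{ri}(K_1)$.

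For the second part, let $\mathcal{G}$ denote the collection of finite unions of elements of $\H$. Because $\H$ is a semiring, $\mathcal{G}$ equals the collection of finite disjoint unions of $\H$-elements and is closed under intersection and set difference. To show $\mathcal{G}$ is a field containing $\K^*$, I would first verify $\K^* \subseteq \mathcal{G}$: given any (not necessarily pointed) $K \in \K^*$, subdivide using the bounding hyperplanes of $K$ together with the $n$ coordinate hyperplanes of $U$. The auxiliary coordinate hyperplanes force every cell to lie inside some orthant, hence inside a pointed polyhedral cone, so every cell of the arrangement restricted to $K$ is the relative interior of a pointed polyhedral cone. Taking $K = U$ shows $U \in \mathcal{G}$, so $\mathcal{G}$ is a field containing $\K^*$, and therefore $\F \subseteq \mathcal{G}$. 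The reverse inclusion follows from the face decomposition: for any $K \in \K$, $\text{ri}(K) = K \setminus \bigcup_F F$ where $F$ ranges over the proper faces of $K$, and both $K$ and its faces belong to $\K^* \subseteq \F$, so $\text{ri}(K) \in \F$.

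The main technical obstacle will be rigorously justifying the claim that every cell of a hyperplane arrangement through the origin, restricted to a pointed polyhedral cone, is the relative interior of a pointed polyhedral cone, and that these cells provide a finite disjoint partition. This is essentially the polyhedral complex structure of a central hyperplane arrangement; I expect to import it by citing the standard polyhedral complex literature rather than building it from scratch. A smaller bookkeeping issue is the treatment of $\emptyset$ in the semiring: either take the convention that $\emptyset \in \H$ via the trivial case of an empty intersection, or handle it by the vacuous finite union, and note that $\{0\}$ itself lies in $\H$ as the relative interior of the pointed cone $\{0\}$.
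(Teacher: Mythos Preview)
The paper does not supply a proof of this lemma: immediately after stating it, the authors write that Lemmas~\ref{lem:GPprop4} and~\ref{lem:GPprop6} are Propositions~4 and~6 in \cite{gul2006random}, and proceed to use the result as a black box. So there is no argument in the paper against which to compare your sketch.

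That said, your outline is a sound reconstruction of the underlying polyhedral argument, and is essentially what Gul and Pesendorfer do. The two technical ingredients you isolate---the relative-interior identity $\mathrm{ri}(K_1)\cap\mathrm{ri}(K_2)=\mathrm{ri}(K_1\cap K_2)$ whenever the left side is nonempty, and the fan decomposition induced by a central hyperplane arrangement---are exactly the right tools, and the trick of throwing in the coordinate hyperplanes to force every cell into a closed orthant (hence into a pointed cone) is the clean way to get $\K^*\subseteq\mathcal G$ and $U\in\mathcal G$ simultaneously. The reverse inclusion via $\mathrm{ri}(K)=K\setminus\bigcup_F F$ is also correct. The only step that would need to be written out carefully in a self-contained proof is the one you flag yourself: that the relatively open cells of a central arrangement, restricted to a polyhedral cone whose bounding hyperplanes are included in the arrangement, give a finite disjoint partition of that cone by relative interiors of polyhedral sub-cones. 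This is standard fan/polyhedral-complex material, so citing it is appropriate.
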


Lemmas~\ref{lem:GPprop4} and~\ref{lem:GPprop6} are, respectively, Propositions~4 and~6 in \cite{gul2006random}.

\begin{lemma}\label{lem:lowdim} If $C$ is a polyhedral cone in $U$ of dimension less than $\dimU$, and $\hat\pi$ is a regular random expected utility, then $\hat\pi(C)=0$. 
\end{lemma}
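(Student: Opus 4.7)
The strategy is to use Lemma~\ref{lem:GPprop4} to realize $C$ as some $N(D,x)$ and then exploit regularity, since the ``nice'' subset $N^+(D,x)$ will turn out to be empty whenever the ambient cone is low-dimensional.

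First, apply Lemma~\ref{lem:GPprop4} to the polyhedral cone $C$: there exists a decision problem $D \in \D$ and an alternative $x \in D$ such that $C = N(D,x)$. Next, observe that
\[
N^+(D,x) = \{u \in U : u\cdot x > u\cdot y \text{ for all } y \in D \setminus\{x\}\}
\]
is open in $U$. Indeed, it is defined by finitely many strict linear inequalities, so it is the intersection of finitely many open half-spaces in $U$; and the strict inequality $u\cdot x > u\cdot y$ automatically extends from $D\setminus\{x\}$ to $\cvh(D)\setminus\{x\}$ because any point of $\cvh(D)$ is a convex combination of elements of $D$.

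Now the key observation: since $\dim(C) < \dimU = \dim(U)$, the cone $C$ is contained in a proper affine subspace of $U$, so it has empty interior in $U$. But $N^+(D,x) \subseteq N(D,x) = C$, and $N^+(D,x)$ is open in $U$. Any open subset of a set with empty interior must itself be empty, so $N^+(D,x) = \emptyset$.

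Finally, invoke the regularity of $\hat\pi$: by definition $\hat\pi(N(D,x)) = \hat\pi(N^+(D,x))$, and since $N^+(D,x)$ is empty this common value is $0$. Therefore $\hat\pi(C) = \hat\pi(N(D,x)) = 0$, which is the desired conclusion. There is no substantive obstacle here; the only thing to verify carefully is that $N^+(D,x)$ really is open in $U$ (not merely in $N(D,x)$), which is immediate from the strict-inequality description, and that it lies in $\F^*$ (it does, since it equals $N(D,x) \cap \bigcap_{y \in D\setminus\{x\}}\bigl(U \setminus N(\{x,y\},y)\bigr)$) so that the regularity identity applies.
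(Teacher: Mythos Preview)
Your argument is correct and, in fact, slightly slicker than the paper's. Both proofs start from Lemma~\ref{lem:GPprop4} and finish by invoking regularity, but they differ in the middle step. The paper embeds $C$ in a proper linear subspace $L\subset U$, applies Lemma~\ref{lem:GPprop4} \emph{inside} $L$ to obtain $D\subset L$ with $C=N(D,x)\cap L$, and then enlarges $D$ to $D'=D\cup(D+z)$ for some $z$ orthogonal to $L$, so that every $u\in C$ is explicitly indifferent between $x$ and $x+z$; regularity then gives $\hat\pi(C)=0$. You instead apply Lemma~\ref{lem:GPprop4} directly in $U$ and use the purely topological observation that $N^+(D,x)$ is open in $U$ while $C=N(D,x)$ has empty interior, forcing $N^+(D,x)=\emptyset$. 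Your route avoids the auxiliary construction of $D'$ and the need to reinterpret Lemma~\ref{lem:GPprop4} within the lower-dimensional subspace $L$; the paper's route has the mild advantage of exhibiting a concrete tie rather than appealing to an emptiness-by-dimension argument. Either way, the content is the same: low dimension of $N(D,x)$ means no strict maximizer exists, and regularity kills the probability.
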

\begin{proof}Let $L$ be an affine subspace of dimension less than $\dimU$ that contains $C$. By an additive translation, we may assume that $L$ is a linear subspace (such a translation will not affect $\hat\pi(C)$).

By regarding $C$ as a subset of $L$, and applying Lemma~\ref{lem:GPprop4}, there exists a finite $D\subseteq L$ and $x\in D$ with $C=N(D,x)\cap L$. Choose $z\in X$ in the orthogonal complement of $L$, and consider $D'=\{y+z:y\in D\}\cup D$. Such $z$ exists because the dimension of $L$ is less than $\dimU$. Then for $y\in D$ and $u \in C$, $u\cdot(y+z)=u\cdot y$, as $u$ is in $L$ and therefore orthogonal to $z$. This means that any vNM index in $C$ is tied between $x$ and $x+z$ in $D'$. Hence $\hat\pi(C)=0$ because $\hat\pi$ is a regular random expected utility.
  \end{proof}

\begin{lemma}\label{lem:Nint}If $D\in \D$, and $x$ is an extreme point of $D$, then $N(D,x)$ has non-empty interior. 
\end{lemma}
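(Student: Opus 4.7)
The plan is to reduce the statement to the standard fact that any extreme point of a polytope is strictly separated from the rest of the polytope, and then perturb the separating vNM index inside $U$.

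First I would argue that since $D$ is finite and $x$ is an extreme point of $D$ (and therefore also an extreme point of $\cvh(D)$), we have $x \notin \cvh(D \setminus \{x\})$: otherwise $x$ would be a proper convex combination of other points in $D$, contradicting extremeness. Since $\{x\}$ and $\cvh(D\setminus\{x\})$ are disjoint, compact, and convex in $\Re^{\dimU+1}$, a standard separation argument yields $v \in \Re^{\dimU+1}$ with
\[
v \cdot x > v \cdot y \quad \text{for every } y \in D \setminus \{x\}.
\]

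Next I would normalize $v$ to lie in $U$. Set $v' \coloneqq v - v_{\dimU+1}\mathbf{1}$, so that $v'_{\dimU+1} = 0$, i.e.\ $v' \in U$. Because every element of $\cvh(D) \subseteq X$ satisfies $\sum_i w_i = 1$, we have $v' \cdot w = v\cdot w - v_{\dimU+1}$ for all such $w$, so the strict inequality is preserved: $v' \cdot x > v'\cdot y$ for every $y \in D\setminus\{x\}$. In particular, $v' \in N^+(D,x) \subseteq N(D,x)$.

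Finally I would produce an open neighborhood of $v'$ in $U$ that remains in $N(D,x)$. Let
\[
\delta \coloneqq \min_{y \in D\setminus\{x\}}\bigl(v'\cdot x - v'\cdot y\bigr) > 0, \qquad M \coloneqq \max_{y \in D} \| y - x\|.
\]
For any $w\in U$ with $\|w\| < \delta/(M+1)$ and $u = v' + w$, the linearity of $u$ on $\cvh(D)$ combined with the finite minimum above gives
\[
u\cdot x - u \cdot y = (v'\cdot x - v'\cdot y) + w\cdot(x-y) \geq \delta - \|w\|\,\|x-y\| > 0
\]
for all $y\in D\setminus\{x\}$, and hence for all $y\in\cvh(D)\setminus\{x\}$ by linearity. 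Thus an open ball around $v'$ in $U$ is contained in $N(D,x)$, proving that $N(D,x)$ has nonempty interior.

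The only mildly delicate step is the second one, making sure that the normalization to $U$ preserves strict separation; this works precisely because $\cvh(D)$ lies in the simplex, so subtracting a multiple of $\mathbf{1}$ is a harmless shift of the comparison. Once $v'\in U$ is in hand, the perturbation argument is routine from finiteness of $D$.
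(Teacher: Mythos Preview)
Your proof is correct and takes a genuinely different route from the paper's. The paper argues via the facet structure of the polytope $\cvh(D)$: it collects the normals $u_1,\ldots,u_k\in U$ of the facets through $x$ and observes that any strictly positive combination $\sum_j\la_j u_j$ lies in $N^+(D,x)$, which, being defined by finitely many strict linear inequalities, is open. Because this argument uses facets, the paper splits into the case $\dim\cvh(D)=n$ and the lower-dimensional case, the latter handled by appending the orthogonal complement of the affine hull. Your approach instead strictly separates the extreme point $x$ from $\cvh(D\setminus\{x\})$ in one shot, normalizes the separating functional into $U$ using that all lotteries lie in the simplex, and then perturbs. This is more elementary (no polytope combinatorics) and treats all dimensions uniformly, at the cost of the small extra step of normalizing into $U$. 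One cosmetic point: the final perturbation step is not strictly needed, since $N^+(D,x)=\{u\in U: u\cdot x>u\cdot y\text{ for all }y\in D\setminus\{x\}\}$ is already open as a finite intersection of open half-spaces, so exhibiting $v'\in N^+(D,x)$ already finishes the proof; and you may want to note the trivial case $D=\{x\}$, where $N(D,x)=U$.
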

\begin{proof} First, assume that $\cvh(D)$ has dimension $n$. Let $F_1,\ldots,F_k$ be the facets of $\cvh(D)$ that $x$ belongs to. Let $u_i\in U$ and $\ta_i$ define the facet $F_i$, so $\cvh(D)\subseteq \{y\in\Re^{n+1} : u_i\cdot y\leq \ta_i\}$ while $F_i=\cvh(D)\cap \{y\in\Re^{n+1} : u_i\cdot y= \ta_i\}$ for all $i$. Fix $\la_j>0$, $1\leq j\leq k$.

For any $y\in \cvh(D)$, $y\neq x$, there is some facet $F_i$ that $y$ does not belong to; thus $u_i\cdot y< u_i\cdot x=\ta_i$. Hence, $y\cdot \sum_{j=1}^k \la_j u_j<x\cdot \sum_{j=1}^k \la_j u_j$. So for any $\la_j>0$, $\sum_{j=1}^k \la_j u_j\in N^{+}(D,x)$, and therefore $N^{+}(D,x)$ is open in $U$.

If $\cvh(D)$ has dimension less than $n$, then let $L$ be an affine subspace of dimension less than $n$ that contains $C$. By additive translation, we may assume that $L$ is a linear subspace. Then by the argument above, we conclude that $N^+(D,x)$ is open in $L$. Moreover, for any $u \in U$ in the orthogonal complement of $L$, any $z \in N^+(D,x) \cap L$, and any $x \neq y \in \cvh(D)$, we have $(u + z) \cdot y < (u + z) \cdot x$. Thus $u + z \in N^+(D,x)$, and so $N^+(D,x)$ is open in $\mathcal{D}$. 
\end{proof}

\subsection{A separation argument}

We first prove the result under the assumption that $\pi$ is regular. It is then easy to show that regularity is necessary to respect Pareto, which we do in \Cref{sec:regular}.

Let $\ba(U,\F)$ be the space of \df{charges} (finitely additive signed measures) on $(U,\F)$ endowed with the total variation norm -- a Banach space. The topological dual of $\ba(U,\F)$ is well-known to be intractable. We follow instead the approach of \cite{rao1983theory} in working with a related space. Let $\mu = \frac{1}{m+1}(\pi+\sum_{i=1}^m\pi_i)$. Then $\hat\pi\ll \mu$ (denoting absolute continuity) for all $\hat\pi=\pi,\pi_1,\ldots,\pi_m$. Denote by $\mathbf{P}$ the set of all $\F$-measurable partitions of $U$, and let
\[ 
\norm{\la} \coloneqq \lim_{\{F_1,\ldots,F_J \}\in\mathbf{P}}\sum_{j=1}^J \abs{\la(F_j)},
\] where the limit is taken over the net defined by partition refinement: see Proposition 7.1.2 in \cite{rao1983theory}, who show that this is well defined (partitions are finite). 

Let $V(U,\F,\mu) \coloneqq \{\la\in \ba(U,\F): \la\ll\mu \text{ and } \norm{\la}<\infty \}.$ Observe that, for each $\hat\pi\in\{\pi,\pi_1,\ldots,\pi_m\}$, $\hat \pi \ll\mu$ and $\sum_{j=1}^J \hat\pi(F_j)$ is bounded by $\hat\pi(U)$ for all $\{F_1,\ldots,F_J \}\in\mathbf{P}$. Thus we may regard  each of the countably additive measures $\pi$, $\pi_1,\ldots,\pi_n$ as an element of $V(U,\F,\mu)$.

By Theorem 7.2.3 in \cite{rao1983theory}, $(V(U,\F,\mu),\norm{\cdot})$ is a Banach space. By the proof of Theorem 9.2.3 in \cite{rao1983theory}, for any continuous linear functional $T$ on $V(U,\F,\mu)$ there is a bounded charge $\la\in\ba(U,\F)$ with 
\[ 
T(\nu) = \lim_{\{F_1,\ldots,F_J \}\in\mathbf{P}}\sum_{j=1}^J \nu(F_j)\frac{\la(F_j)}{\mu(F_j)},
\] where again the limit is obtained with the partitions in $\mathbf{P}$ directed by refinement. 

Suppose that $\pi$ is not a weighted utilitarian aggregation of $\pi_1,\ldots,\pi_n$. So $\pi$ is not an element of the convex hull of $\{\pi_1,\ldots,\pi_n\}$ in $V(U,\F,\mu)$. Since the space $V(U,\F,\mu)$ is Banach, \cite{schaeffer} Theorem II.10.2 implies that the convex hull, $\cvh (\{\pi_1,\ldots,\pi_n\})$, is a compact set. Then, by a strict version of the separating hyperplane theorem (\cite{schaeffer} Theorem II.9.2), the exists a continuous linear functional $T:V(U,\F,\mu)\to\Re$ strictly separating $\pi$ from $\cvh (\{\pi_1,\ldots,\pi_n\})$. Thus $T(\pi-\pi_i)>0$ for all $i=1,\ldots,m$. Let $\la$ be the bounded charge associated with $T$.

Now, because $i$ ranges over a finite set, by taking a fine enough partition $\{F_1,\ldots,F_J \}\in\mathbf{P}$, we obtain that 
\[
\sum_{j=1}^J [\pi(F_j)-\pi_i(F_j)]\frac{\la(F_j)}{\mu(F_j)} >0.
\] for all $i$. By Lemma~\ref{lem:GPprop6}, each measurable set $F_j\in \F$ is the union of elements of $\H$, and for any $H\in \H$ with $H\subseteq F_j$, the semiring property implies that $F_j\setminus H$ is a finite union of disjoint elements of $\H$. So by considering a further partition refinement we may without loss of generality assume that each $F_j\in \H$.

Note that $\la$ may take negative values. So define $\la'(F_j) =\la(F_j)+\gamma \mu(F_j)$ when $\mu(F_j)>0$ and $\la'(F_j)=1$ when $\mu(F_j)=0$, and choose $\gamma>0$ so that $\la'(F_j)>0$. Note that $\mu(F_j)>0$ for at least one $F_j$, and that when $\mu(F_j)=0$ we know that $\pi(F_j)=\pi_i(F_j)=0$. Using the convention that $0=0/0$ (which underlies the analysis in \cite{rao1983theory}) we have that, for any~$i$,
\begin{align*}
    \sum_{j=1}^J [\pi(F_j)-\pi_i(F_j)]\frac{\la'(F_j)}{\mu(F_j)} & = 
    \sum_{j:\mu(F_j)>0} [\pi(F_j)-\pi_i(F_j)]\frac{\la'(F_j)}{\mu(F_j)} \\
    & = \sum_{j=1}^J [\pi(F_j)-\pi_i(F_j)]\frac{\la(F_j)}{\mu(F_j)} + \gamma\sum_{j=1}^J [\pi(F_j)-\pi_i(F_j)] \\
    & = \sum_{j=1}^J [\pi(F_j)-\pi_i(F_j)]\frac{\la(F_j)}{\mu(F_j)}>0,
\end{align*} as $\sum_{j=1}^J [\pi(F_j)-\pi_i(F_j)]=0$ because the $F_j$ are a partition of $U$, and $\pi$ and $\pi_i$ are probability measures.

Now if we first define  $c_j=\frac{\la'(F_j)}{\mu(F_j)}>0$ for $\mu(F_j)>0$, and $c_j=0$ when $\mu(F_j)=0$ then we have $c_j\geq 0$ and $\sum_j c_j>0$. So we may normalize these numbers and obtain that $\sum_j c_j=1$ while $\sum_{j=1}^J [\pi(F_j)-\pi_i(F_j)]c_j >0$ for all $i=1,\ldots,m$.

Finally by Lemma~\ref{lem:lowdim}, we may take each of the cones $F_j$ to be of dimension $\dimU$ at the expense that the collection $F_j$ may no longer exhaust $U$, while remaining pairwise disjoint.

\subsection{Connecting the separation to the axiom}\label{sec:sphere_approx}

Recall that all the vectors $u$ in $U$ have $u_{\dimU+1}=0$, so we may identify a cone $C$ in $U$ with $C'$ in $\Re^\dimU$. In consequence, if $\hat \pi$ is one of the measures $\pi$, $\pi_1,\ldots, \pi_m$ we denote by $\hat \pi'$ the unique measure obtained by setting $\hat\pi'(C')=\hat\pi(C)$ for all measurable $C$ in $U$.

Let $B\coloneqq\{x\in\Re^n:\norm{x}\leq 1\}$ be the unit ball, and $S\coloneqq\{x\in\Re^n:\norm{x}=1 \}$ the unit sphere, in $\Re^n$. We will show that for any finite subset, $\{x'_1, \dots, x'_J\}$, of the unit sphere in $S \subset \Re^n$, there is a decision problem $D = \{x_1, \dots, x_J\} \subset X$ such that $\rho_{\hat{\pi}}(x_j,D) = \hat{\pi}'(\{u \in \Re^n : u \cdot x_j \geq u \cdot x_l \ \forall \ 1 \leq l \leq J \})$. We first construct a suitable subset of $S$, and then prove this claim, which allows us to move back into the spaces $X$ and $U$ which are subsets of $\Re^{n+1}$.

Choose finite sets $D_k\subseteq S$, so that $D_k \subset D_{k+1}$ and $\cup_{k=1}^\infty D_k$ is dense in $S$. Then $\cvh(D_k)\subseteq B$, and $\cvh(D_k)\to B$ in the Hausdorff metric. We claim that, for  $k$ large enough, 
\begin{equation}\label{eq:convDk}
\sum_{j=1}^J c_j \pi'(\{u : M(D_k,u)\cap F'_j\neq \os\}) > \sum_{j=1}^J c_j \pi'_i(\{u : M(D_k,u)\cap F'_j\neq \os\}) \quad \forall \ i.    
\end{equation}

To this end, fix $j\in \{1,\ldots,J\}$ and define $f^k: B \rightarrow \Re$ by 
\begin{equation*}
  f^k(u) =\begin{cases}
  1 & \text{ if } M(D_k,u)\cap F'_j\neq \os \\
  0 & \text{ if } M(D_k,u)\cap F'_j= \os \\
  \end{cases}
\end{equation*}

\textit{Claim:} $f^k$ converges pointwise to $f \coloneqq \one_{\{F'_j\}}$ $\hat\pi$-almost everywhere, for any $\hat\pi\in\{\pi,\pi_1,\ldots,\pi_m\}$. 

For any $u\in B$, $M(B,u)=\{u\}$. Let $x_k\in M(D_k,u)$ be arbitrary, and note that $M(D_k,u)\subseteq B$ for all $k$, and $B$ is compact, so $\{x_k\}$ has at least one limit point. Now $\cvh(D_k)\to B$ and the maximum theorem imply that $x_k\to u$ for any convergent subsequence of $\{x_k\}$; hence we conclude that $x_k\to u$.

The cone $F'_j$ is open, so for any $u\in F'_j\cap B$, there is $k$ large enough so that $M(D_k,u)\cap F'_j$ is nonempty. Conversely, suppose that $u\in B$ is not in the closure of $F'_j$.  If it were the case that  $M(D_k,u)\cap F'_j\neq\os$ infinitely often, then we could choose $x_k\in M(D_k,u)$ that has a limit point in the closure of $F'_j$, which would contradict $x_k\to u$. So $M(D_k,u)\cap F'_j=\os$, for $k$ large enough, whenever $u$ is not in the closure of $F'_j$. By regularity, $\hat\pi'(F'_j) = \hat\pi'(\bar F'_j)$. So $f^k\to  \one_{\{F'_j\}}$ $\hat\pi$-a.e. We may now prove~\eqref{eq:convDk}. Since $f_k$ is bounded, by the Bounded Convergence Theorem we have
\begin{align*}
  \hat \pi'(\{u : M(D_k,u)\cap F'_j\neq \os\}) & = \int f_k \diff \hat\pi'  
  \to \int f\diff \hat\pi' \\
   & = \hat \pi' (\{u : M(B,u)\cap F'_j\neq \os \} \\
   & = \hat \pi'(F'_j)
\end{align*}
for any $\hat{\pi}' \in \{\pi',\pi'_1, \dots,\pi'_m\}$. There are finitely many agents and sets $F_j$, so we have uniform convergence across $\pi'$, $\pi'_i$, and $F_j$.

We have that $\sum_{j=1}^J c_j \pi'(F'_j) > \sum_{j=1}^J c_j \pi'_i(F'_j)$ for $i=1,\ldots,m$ from our previous argument. Hence we may fix $k$ large enough so that
\[\begin{split}
\sum_{j=1}^J c_j \pi'(\{u : M(D_k,u)\cap F'_j\neq \os\}) 
> 
\sum_{j=1}^J c_j \pi'_i(\{u :  M(D_k,u)\cap F'_j\neq \os\}) 
\end{split}\] for $i=1,\ldots,m$.

Finally, we bring the construction to $X$ and $U$, which are subsets of $\Re^{\dimU+1}$. Note that if $\ta>0$ is a scalar and $z\in\Re^n$, then $x\in M(D_k,u)$ if and only if $\ta x + z\in  M(\ta D_k+z,u)$, as linear preferences are both homothetic and translation invariant. Hence, 
\[
\hat \pi'(\{u : M(\ta D_k+z,u)\cap (\ta F'_j + z) \neq \os\}) 
= \hat \pi'(\{u : M(D_k,u)\cap F'_j \neq \os\}).
\] Moreover $\ta F'_j=F'_j$ as $F'_j$ is a cone. 

Now we may choose $\ta>0$ and $z$ so that $\ta D_k + z\subseteq \{x\in\Re^n_+:\sum_{i=1}^n x_i\leq 1 \}$. Define $C_k\subseteq \Re^{n+1}$ by \[
C_k = \{(x_1,\ldots,x_n,1-\sum_{i=1}^n x_i): x\in \ta D_k + z \}.
\] Note that $C'_k=\ta D_k+z$ and $C_k\subseteq X=\Delta(\prizes,2^\prizes)$.

Recall that $u_{n+1}=0$ for all $u\in U$, so we may identify $u\in \Re^n$ with $(u,0)\in U$. Then $x\in M(\ta D_k+z,u)$ iff $(x_1,\ldots,x_n,1-\sum_{i=1}^n x_i) \in M(C_k,(u,0))$.

To sum up, then, we have
\[\begin{split}
\sum_{j=1}^J c_j \pi(\{u\in U : M(C_k,u)\cap F_j\neq \os\})  >
\sum_{j=1}^J c_j \pi_i(\{u\in U : M(C_k,u)\cap F_j\neq \os\}) 
\end{split}\]

Finally, note that
\begin{align*}
\hat \pi(\{u\in U : M(C_k,u)\cap F_j\neq \os\}) 
& = \sum_{\{x\in C_k:x\in F_j\}} \hat \pi(\{u\in U: M(C_k,u)=x \}) \\
& =\rho_{\hat\pi}(F_j\cap C_k | C_k).
  \end{align*}
for any $\hat\pi=\pi,\pi_1,\ldots,\pi_m$. We conclude that $\sum_{j=1}^J c_j \rho_{\pi}(F_j\cap C_k | C_k)>
\sum_{j=1}^J c_j \rho_{\pi_i}(F_j\cap C_k | C_k)$ for $i=1,\ldots,m$.

We prove that $\pi$ respects Pareto iff for every decision problem $D\in\D$, $(\rho_\pi(x,D))_{x\in D}$ is a convex combination of $A=\{ (\rho_{\pi_1}(x,D))_{x\in D},\ldots,(\rho_{\pi_m}(x,D))_{x\in D} \}$. Indeed one direction of this statement is obvious. So suppose that $(\rho_\pi(x,D))_{x\in D}$ is not in $\cvh (A)$. Since the latter set is compact, by a version of the separating hyperplane theorem, there exists $c\in\Re^D$, $c\neq 0$, so that $\sum_{x\in D} c_x\rho_{\pi}(x,D)>\sum_{x\in D} c_x\rho_{\pi_i}(x,D)$ for $i=1,\ldots,m$. Thus $\pi$ does not respect Pareto.

\subsection{Necessity of regularity}\label{sec:regular}
We now show that if $\pi$ is not regular then it does not respect Pareto. In \Cref{sec:sphere_approx} above we have shown the following: each $\hat{\pi} \in \{\pi, \pi_1,\dots,\pi_m\}$ is identified with a unique measure $\hat{\pi}'$ on $\Re^n$, and for any finite subset, $\{x'_1, \dots, x'_J\}$, of the unit sphere in $S \subset \Re^n$, there is a decision problem $D = \{x_1, \dots, x_J\} \subset X$ such that $\rho_{\hat{\pi}}(x_j,D) = \hat{\pi}'(\{u \in \Re^n : u \cdot x_j \geq u \cdot x_l \ \forall \ 1 \leq l \leq J \})$. We therefore work directly in $S$ and $\Re^n$ here, and do not repeat the translation back to the spaces $X$ and $U$. 

If $\pi$ is not regular, then it must have at least one atom $u_0\in U$, which we may take to have unit norm. Thus no matter what tie-breaking rule is used to define choice from $D \subset S$ in case of indifference, if $u_0 \in D$ is an extreme point of $D$, then $\rho_{\pi}(u_0, D) \geq \theta>0$ for some $\theta$. This holds since $u_0 \cdot u_0 > u_0 \cdot x$ for all $u_0 \neq x \in S$. Consider an enumeration of a countable dense set $u_0,u_1,\ldots$ in $S$. And let the decision problem $D_k$ equal $D_k=\{u_0,\ldots,u_k \}$. Then, for any $i$, $\pi_i(N(D_k,u_0))\to 0$, as $\cap_k N(D_k,u_0) = \{ \la u_0:\la >0\}$ and $\pi_i$ is regular. Choose then $k$ with $\pi_i(N(D_k,u_0))<\theta \leq \rho_{\pi}(u_0, D) \text{ for all } i=1,\ldots, m.$ Define $c:D_k\to \Re$ to be $c(x)=0$ for $x\neq u_0$ and $c(u_0)=1$. Then $\E_{\rho_{\pi}(.,D_n)}c = \theta>\E_{\rho_{\pi_i}(.,D_n)}c$ for all $i$. Hence $\rho_{\pi}$ does not respect Pareto. 

\section{Other omitted proofs}

\subsection{Proof of \texorpdfstring{\Cref{lem:median_negative}}{}} \label{proof:median_negative}
\begin{proof}
    As a counterexample, suppose there are two possible preference types, $\varepsilon_1$ and $\varepsilon_2$, which are equally likely in the population. Assume that all consumers have no wealth effects and are risk neutral with respect to monetary lotteries (i.e. their utilities are quasilinear). Let $x$ be a binary 50-50 lottery supported on $\{p^1,p^2\}$ such that 
    \begin{equation*}
    \begin{split}
        \CV(p^0,p^1|\varepsilon_1) = 2 \quad & \quad \CV(p^0,p^1|\varepsilon_2) = -1 \\
        \CV(p^0,p^2|\varepsilon_1) = -1 \quad & \quad \CV(p^0,p^2|\varepsilon_2) = 2.
    \end{split}    
    \end{equation*} 
    Then all consumers prefer $x$ to $p^0$ since $\frac{1}{2} 2 - \frac{1}{2}1 > 0$, so by the Pareto property we should have $x \succsim_{med} p^0$. However $\CV_{med}(p^1) = CV_{med}(p^2) = -1$, so under weak consistency we must have $p^0 \succ_{med} x$.
\end{proof}

\subsection{Proof of \texorpdfstring{\Cref{prop:foundation}}{}}\label{proof:foundation}

\begin{proof}
    Given the random utility $\hat{\pi}_{\alpha}$, the choice probability $\rho^{\sigma}\big((p^0,y),\{(p^0,y), (p',y-x) \}\big)$ can be written as
    \begin{equation*}
      \sum_{i \in I} \sigma_i \int \one\{ v(p^0,y) \geq v(p',y-x|\varepsilon) \} \hat{F}_i(d\varepsilon) = \int \one\{v(p^0,y) \geq v(p',y-x|\varepsilon) \} \hat{F}(d\varepsilon) := d(x). 
    \end{equation*}
    where the equality follows from the definition of $\hat{F}$. Then by the definition of compensating variation, if $x > \CV_{med}(p^0,p')$ then there exists $x' < x$ such that $d(x') \geq \frac{1}{2}$. Similarly, if $x < \CV_{med}(p^0,p')$ then $d(x) < \frac{1}{2}$.
\end{proof}

\subsection{Proof of \texorpdfstring{\Cref{prop:CV_cdf}}{}}\label{proof:CV_cdf}

\begin{proof}
    By \cite{bhattacharya2018empirical} Theorem 1, for a single type, $i$, the CDF of -CV is given by
\begin{equation*}
    F_i(a|p^0,p') =
    \begin{cases}
        0, \quad &\text{if } a < \Delta p_1 \\
        \sum_{x = 1}^j &\hat{q}^i_x(p'_1,\dots ,p'_j, p^0_{j+1} + a, p^0_{j+2} + a, \dots, p^0_{Z} + a, y+a)\\
          & \text{if } \Delta p_j \leq a < \Delta p_{j+1}, a \geq 0, 1 \leq j \leq Z-1\\
        \sum_{x = 1}^j &\hat{q}^i_x(p'_1 - a,\dots ,p'_j - a, p^0_{j+1}, p^0_{j+2}, \dots, p^0_{Z}, y)\\
          & \text{if } \Delta p_j \leq a < \Delta p_{j+1}, a < 0, 1 \leq j \leq Z-1\\
        1, &\text{if } a \geq \Delta p_Z.
    \end{cases}
\end{equation*}
(Bhattacharya defines compensating variation so that it is negative for price decreases). By summing across types, we can replace $\hat{q}^i_x$ with $Q^{\alpha}_x$ in the above expression to yield the CDF of the negative of compensating variation in the population. To derive the expression for $G_{\hat{\pi}_\alpha}$ we use the fact that $\Pr(-x \leq a) = 1- \Pr(x < -a) = 1 - \lim_{b \nearrow -a}\Pr(x \leq -b)$. 
\end{proof}

\subsection{Proof of \texorpdfstring{\Cref{prop:single_price_median}}{}}\label{proof:single_price_median}

\begin{proof}
     Consider the case of a price decrease for good 1, so compensating variation is supported on $[0,-\Delta p_1]$. Suppose the CDF $G_{\hat{\pi}_{\alpha}}$ is convex, as depicted in \Cref{fig:med_proof}. For a non-negative random variable the mean is the area $B + C + D$ (this can be seen by integrating by parts). The median is $A + B + C$. By drawing a subgradient to the CDF at the median it is easy to see that $A > D$ must hold if there is no mass point at zero. In this case the median is above the mean. This continues to hold as long as the mass at zero is sufficiently low which, by \Cref{cor:single_price_cdf}, occurs when the share of consumers who purchase good 1 at prices $p'$ is sufficiently high. The other cases are symmetric. 
\end{proof}

    \begin{figure}[ht]
        \centering
    \begin{tikzpicture}[scale = 0.8]
        \draw[thick] (0,0) rectangle (6,4);
    \node at (-0.2,4) {$1$};
    \node at (-0.2,0) {$0$};
    \node at (6.2,0) {$-\Delta p_1$};
    
        \draw[thick, domain=0:6, smooth, variable=\x] plot ({\x}, {(\x^2)/9});

        \draw[thick, dashed] (4.24, 2) -- (0,2) node[anchor = east] {$\frac{1}{2}$};
    \draw[thick, dashed] (4.24, 4) -- (4.2,0) node[anchor = north] {median};

        \node at (3,.5) {A};
    \node at (2,1) {B};
    \node at (2,3) {C};
    \node at (5,3.5) {D};
    
\end{tikzpicture}
    \caption{Median above mean for convex CDF}
    \label{fig:med_proof}
    \end{figure}

\subsection{Proof of \texorpdfstring{\Cref{prop:logit}}{}}\label{proof:logit}

\begin{proof}
    From \Cref{prop:single_price_median}, we just need to show that $a \mapsto \hat{q}_Z^i(p_{-Z}', p^0_{Z} + a, y + a)$ is convex. Under the logit specification
    \begin{equation*}
        \hat{q}_Z^i(p_{-Z}', p^0_{Z} + a, y + a) = \dfrac{\exp\big(W^i_Z(y - p^0_Z)\big)}{\exp\big(W^i_Z(y - p^0_Z)\big) + \sum_{z=1}^{Z-1} \exp\big(W^i_z(y + a - p'_z)\big)}.
    \end{equation*}
    Note that $a \mapsto \frac{1}{f(a)}$ is decreasing and convex if $f$ is increasing and concave. The denominator in the above expression is concave in $a$ under the assumption that $\exp(W_x^i(\cdot))$ is concave for all $i$ and $x$.  
\end{proof}

\subsection{Proof of \texorpdfstring{\Cref{prop:undominated}}{}}\label{proof:undominated}

We first establish an intermediate result characterizing when a choice rule is maximal at some $D \in \mathcal{D}$. 

    \begin{lemma}\label{lem:maximal}
    $\rho$ is maximal at $D$ iff there exists $y \in \Delta(\{1,\dots,m\})$ such that $supp(\rho(\cdot,D)) \subset \argmax\{ y^TA_D [x] : x \in D \}$
    \end{lemma}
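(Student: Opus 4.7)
The plan is to prove both directions by identifying the agreement vector as a linear image of $\rho(\cdot,D)$ and invoking a separating hyperplane argument between a polytope and a suitable cone in $\R^m$. Write $A_D$ for the $m\times|D|$ matrix with entries $(A_D)_{ix}=\rho_{\pi_i}(x,D)$, so that the $i$-th coordinate of $A_D q$ equals the agreement between a distribution $q\in\Delta(D)$ and agent $\pi_i$. Because mass placed on interior points of $\cvh(D)$ contributes zero to every agreement, I identify $\rho(\cdot,D)$ with its restriction $q\in\Delta(D)$ without loss. The achievable set $V := A_D\,\Delta(D)\subset\R^m$ is then a compact convex polytope, and maximality of $\rho$ translates into the geometric condition that $V$ contains no strict coordinatewise improvement over $A_D q$.

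For the ``if'' direction, I would fix $y\in\Delta(\{1,\dots,m\})$ with the stated support condition. Then $q$ maximizes the linear functional $q'\mapsto y^\top A_D q'$ over $\Delta(D)$, since its support lies in the argmax of $y^\top A_D[\cdot]$. Hence $y^\top(A_D q'-A_D q)\leq 0$ for every $q'\in\Delta(D)$. Because $y\geq 0$ sums to $1$, no $q'$ can produce a strict improvement in every coordinate of $A_D q$ without contradicting this inequality, which gives maximality of $\rho$.

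For the ``only if'' direction, I would first argue that $V - A_D q$ is a compact convex polytope disjoint from the open positive orthant $\R^m_{++}$, this disjointness being precisely the content of maximality. I would then separate the two by a hyperplane and normalize its normal $y$ to the simplex. The separation inequality reads $y^\top(A_D q' - A_D q)\leq 0$ for every $q'\in\Delta(D)$, which is exactly the statement that $q$ maximizes $y^\top A_D(\cdot)$ on the simplex. By linearity, this is equivalent to $\mathrm{supp}(q)\subset\argmax\{y^\top A_D[x]:x\in D\}$.

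The hardest part will be the extraction of a \emph{nonnegative} separating direction rather than an arbitrary one; a generic separating hyperplane need not have nonnegative normal. This is the content of a classical theorem of the alternative (Gordan's or Motzkin's), and it relies on the asymmetry of the situation: $\R^m_{++}$ is a cone with apex at $0$, and since $0\in V-A_D q$ (take $q'=q$), any supporting hyperplane of the separation is forced to pass through the origin with nonnegative normal. Everything else reduces to elementary finite-dimensional convex geometry and the basic fact that maximizers of a linear functional on $\Delta(D)$ are exactly those distributions supported on the argmax over $D$.
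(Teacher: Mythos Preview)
Your proposal is correct and essentially equivalent to the paper's proof. The paper compresses both directions into a single minimax identity: it writes maximality as $\min_{p'}\max_{y}y^{\top}A_D(\rho(\cdot,D)-p')\geq 0$, invokes the minimax theorem to swap the order, and reads off the existence of a $y\in\Delta(\{1,\dots,m\})$ for which $\rho(\cdot,D)$ maximizes $y^{\top}A_D(\cdot)$ over $\Delta(D)$. Your argument unpacks the minimax step into its underlying separation (a Gordan-type alternative between $V-A_D q$ and $\R^m_{++}$), and correctly isolates the one nontrivial point, namely that the separating normal can be taken nonnegative because $0$ lies in both $V-A_D q$ and the closure of the positive orthant. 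The two arguments are the same idea viewed through different lenses; the paper's version is shorter because it treats minimax as a black box, while yours makes the convex-geometric content explicit.
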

    \begin{proof}
        At a choice set $D$, $\rho$ is maximial iff $\min_{p' \in \Delta(D)} \max_{y \in \Delta(\{1,\dots,m\})} y^T A_D (\rho(\cdot,D) - p') \geq 0,$ where $A_D$ is a matrix with the $i^{th}$ row equal to the vector $\rho_{\pi_i}(\cdot,D)$. By the minimax theorem, this is equivalent to $\max_{y \in \Delta(\{1,\dots,m\})} \min_{p' \in \Delta(D)} y^T A_D (\rho(\cdot,D) - p') \geq 0.$ In other words, there must exist a $y \in \Delta(\{1,\dots,m\})$ such that $\max_{p' \in \Delta(D)} y^T A_D p' \leq y^T A_D \rho(\cdot,D).$ The product $y^T A_D$ is an element of $\cvh(\{\rho_{\pi_1}, \dots, \rho_{\pi_m}\})$. The maximization problem on the left hand side of the previous equation is solved by setting $p'(x) = 1$ for any $x \in \argmax\{ y^TA_D [x] : x \in D \}$. Thus for $\rho$ to be maximal, it must be supported only on $ \argmax\{ y^TA_D [x] : x \in D \}$. 
\end{proof}

\noindent\textit{Proof of \Cref{prop:undominated}.}
    $(1) \Rightarrow (2)$. Let $\rho$ be a responsive and undominated choice rule. Fix an arbitrary choice set $D$. Since $\rho$ is responsive, $supp( \rho(\cdot,D)) \supseteq S(D)$. Then by \Cref{lem:maximal}, there must exist $y \in \Delta(\{1,\dots,m\})$ such that $supp( \rho(\cdot,D)) \subseteq \argmax \{ y^T A_D [x] : x \in D \}$. Since $y \in \Delta(\{1,\dots,m\})$, we have $\argmax \{ y^T A_D [x] : x \in D \} \subset S(D)$, so it must be that $\argmax \{ y^T A_D [x] : x \in D \} = S(D) = supp(\rho(\cdot,D))$, which implies that $\rho$ is consistent.

    $(2) \Rightarrow (4)$. From above, we have that for each $D$ there exists a $y$ such that $\argmax \{ y^T A_D [x] : x \in D \} = S(D)$. In other words, $y^T A_D [x]$ must be uniform on $S(D)$. The existence of such a $y$ for each $D$ is exactly the definition of $D \mapsto p^u(\cdot| S(D))$ being local behavioral utilitarian. 

    $(4) \Rightarrow (3)$. For each $D$, let $y_D$ be the weights such that $\sum_{i=1}^m y_D(i) \rho_{\pi_i}(\cdot,D) = p^u(\cdot,S(D))$. Then $\argmax\{y_D^T A_D[x] : x \in D \} = S(D)$ by construction. Since $supp(\rho(\cdot,D)) \subseteq S(D)$ for any consistent $\rho$, the condition of \Cref{lem:maximal} is satisfied. Since this holds for all $D$, any consistent $\rho$ is undominated. 

    $(3) \Rightarrow (1)$. This follows from the fact that there always exist responsive and consistent (and REU) choice rules: any full-support weighted utilitarian planner does the trick. \hfill\qed

\newpage
\section{Online Appendix: Omitted Details of Empirical Application}
\label{sec:empiricalappendix}

Our discussion in Section~\ref{sec:empirical} focuses on our results for the VER paper, \cite{berry1999voluntary}. In applying our proposals to the BLP methodology, we use the estimates from the BLP paper to simulate counterfactual related to VER restrictions, as described below. The \texttt{pyblp} package of \cite{conlon2020best} simplifies this task substantially and contains the data from BLP. In Figure~\ref{fig:CV1990}, we report one set of results about general price changes in the BLP model: unrelated to the voluntary export restraint policy. We bring this up as another illustration of our results. 

Using the data for 1990, we obtained the CV distribution for an artificial 10\% increase in the price of all cars, compared to a 10\% increase in the price of ``luxury cars.'' The price increase for all cars results in a median CV of  \$765 with a mean of \$746. In contrast, the increase in the price of luxury cars results in a median CV of  \$0 with a mean of \$168 (we have rounded here to the nearest integers as there are some instabilities inherent in our simulations). So again we find a distributional effect that would seem to exaggerate the welfare impact of a price increase when we focus on the mean CV. The choice of the year 1990 is arbitrary. We obtain similar results for other years.
\begin{figure}[h]
\centering
   \includegraphics[width=0.45\textwidth]{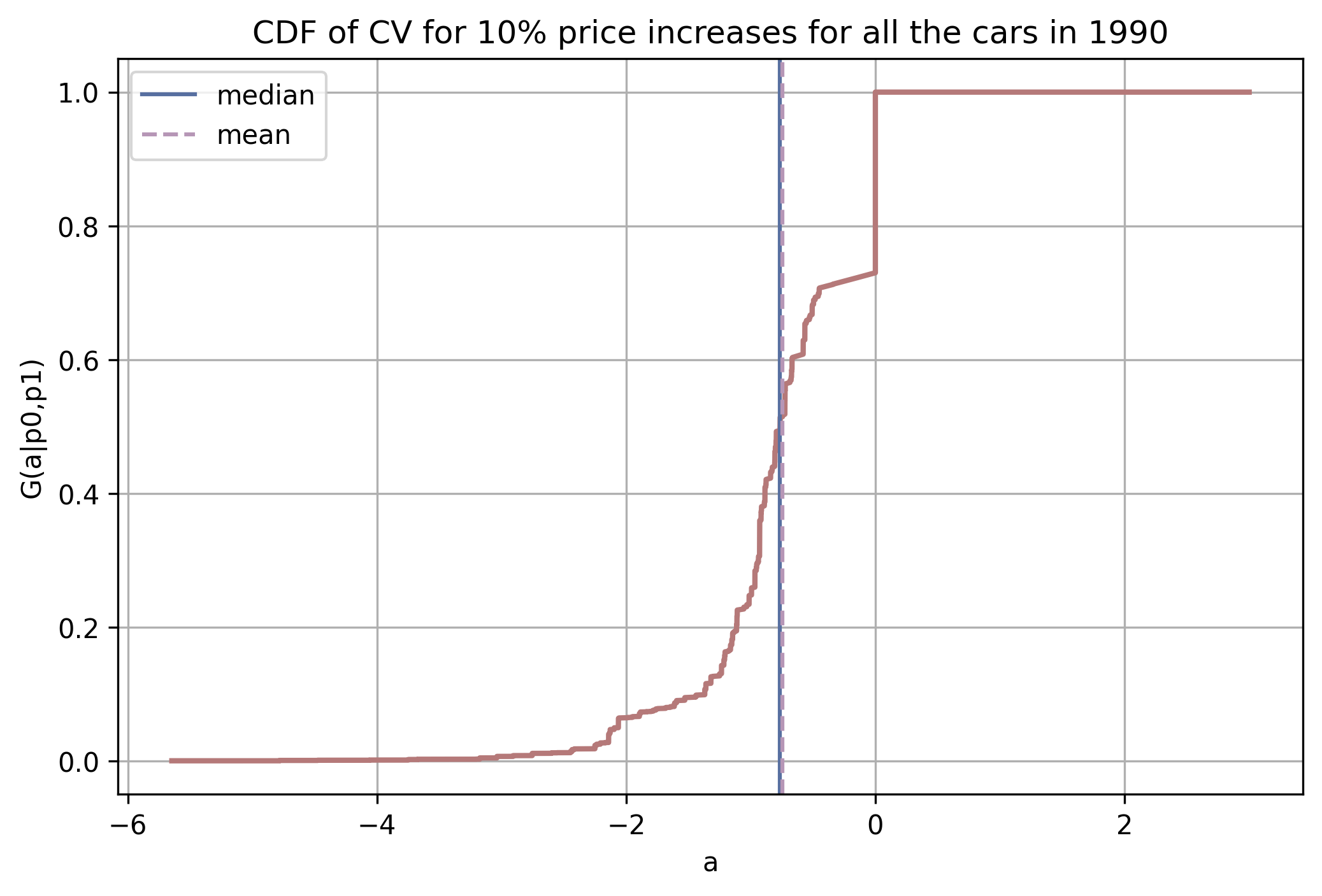}
   \includegraphics[width=0.45\textwidth]{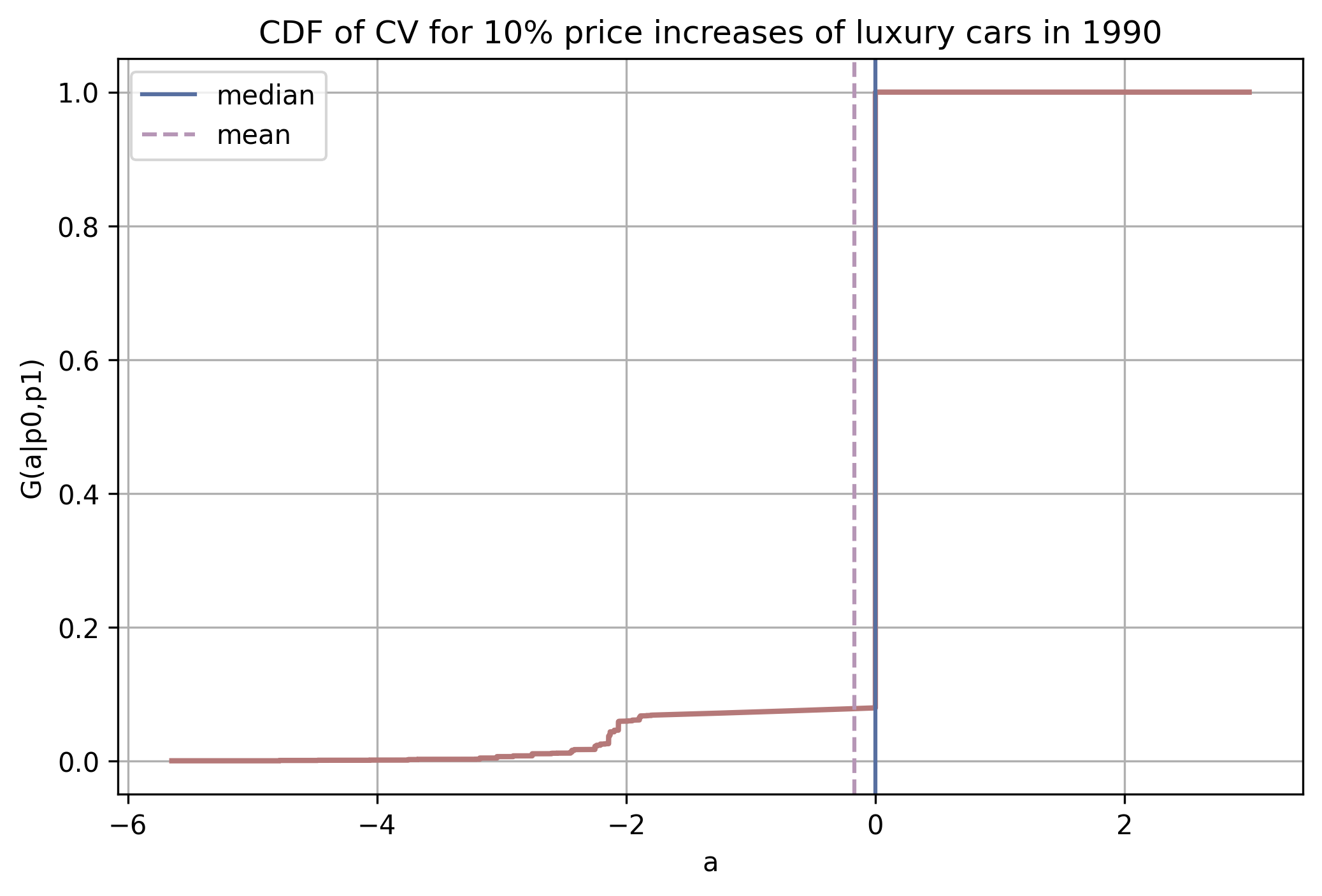}
\caption{CV distribution luxury car price increases vs.\ all cars.}\label{fig:CV1990}
\end{figure}

The rest of this appendix provides some details on how we obtained the numbers reported in Section~\ref{sec:empirical}. As said, reproducing the BLP paper is greatly simplified by the methodology of \cite{conlon2020best}. For the VER paper, the situation is more complicated and we could not just apply the \texttt{pyblp} package ``off the shelf.'' With the assistance of our RA, Jinglin Yang, we developed the following approach. 

Our general strategy is to take the estimates from the BLP paper for the distribution of preferences and use these to calculate the welfare effects of the price changes induced by the VER restrictions. First, note that the demand side of the model, as described in the body of our paper, coincides with the original BLP model with the addition of a non-linear dependence on income. The supply side is different and accounts for the presence of the VER policy, but the \texttt{pyblp} package is not compatible with their specification of marginal costs. This necessitates a different method for recovering $\xi_j$ and $\omega_j$ from $\delta_j$. 

The $\delta_j$ parameters are recovered from the demand-side of the model. If $s_i$ are market shares, we obtain 
    \begin{equation*}
        s_j=\sum_{i\in I} w_i \hat q_j^i(p,y)=\sum_{i\in I}w_i \frac{\exp(\delta_j+\mu_{ij})}{1+\sum_{j=1}^J\exp(\delta_j+\mu_{ij})}:=s_j(\delta),\text{ where }\delta=(\delta_1,\ldots,\delta_J)',\label{equ: share}
    \end{equation*} 
with $w_i$ being the weight of agents $i$ used for importance sampling (which may not sum to 1). We obtain the weights used in BLP (1995) from \texttt{pyblp} package.

We take the estimated values of $\alpha,\sigma_k$, and $v$ from the paper, from which it is easy to construct $\mu_{ij},\forall i, j$. Then we back out $\delta_j$ using a simple algorithm \citep{berry1995automobile}: 
\begin{enumerate}
        \item Choose starting value $\delta^0=\ln(s)-\ln(s_{0})$, where $s=(s_1,...,s_J)$
        \item Calculate the next value $\delta^n = \delta^{n-1}+\ln(s)-\ln\left(s(\delta^{n-1})\right)$, where $s$ are observed market shares and $s(\delta^{n-1})$ are the market shares implied by the model, which are given by equation \ref{equ: share}.
        \item Iterate until $||\delta^n-\delta^{n-1}||_{\infty}=\max\{|\delta_{jct}^n-\delta_{jct}^{n-1}|\}<10^{-14}$\vspace*{10pt}
    \end{enumerate}
With the $\delta_j$'s in hand, we back up the mark-up function $b_j(p,x,\xi;\theta):=b_j(p)$. The  firm's maximization problem is $\max_{{p_j:j\in J_f}}\sum_{j\in J_f}(p_j-mc_j-\lambda VER_j)M s_j(p)-C_f$. So the FOC implies that 
    \begin{align*}
        &\hspace{40pt}s_j(p)+\sum_{r\in J_f}(p_r-mc_r-\lambda VER_r)\frac{\partial s_r(p)}{\partial p_j}=0,\forall j\in J_f\\
        &{\implies \underbrace{s(p)}_{J\times 1}-\underbrace{\Omega}_{J\times J} \underbrace{(p-mc-VER\cdot \lambda )}_{J\times 1}=0\implies p-mc- VER\cdot \lambda= \Omega^{-1}s(p)
        }
    \end{align*}
    where $\Omega=S\otimes O$. 
    $S$ denotes the $J\times J$ matrix of demand derivatives whose $(j,k)$ entry is given by $S_{jk}=-\partial s_k/\partial p_j$, and $O$ denotes the $J\times J$ ownership matrix whose $(j,k)$ entry is given by $O_{jk}=\mathds{1}(\exists f,j,k\in J_f)$.

It turns out that \[S_{jk}=-\frac{\partial s_k(p)}{\partial p_j}=-\sum_i w_i \frac{\partial \hat q_k^i(p)}{\partial p_j}=\alpha \sum_i w_i\frac{\hat q_k^i(p)\left[
            \mathds{1}(j=k)-\hat q_j^i(p)
        \right]}{y_i}\]
Therefore, 
    \begin{align*}
        \Omega_{jk}(p)=S_{jk}\times O_{jk}=\alpha \sum_i w_i\frac{\hat q_k^i(p)\left[
            \mathds{1}(j=k)-\hat q_j^i(p)
        \right]}{y_i}\times \mathds{1}(\exists f,j,k\in J_f)
    \end{align*}
    which implies that 
    \begin{align*}
        b_j(p)=\sum_{k=1}^J \Omega^{-1}_{jk}(p)s_k(p)=\sum_{k=1}^J \Omega^{-1}_{jk}(p)\sum_i w_i \hat q_k^i(p),\forall j = 1,2...,J
    \end{align*}

In sum, with estimates for $\delta_j$'s, $\alpha, \sigma$ and with data on weights $w_i$'s and incomes $y_i$'s, we construct the function $q_k^i(p),\forall i,\forall k=1,2,\ldots,J$, which records the probability of consumer $i$ choosing goods $k$ given prices $p$. Then, combining with information on the ownership matrix, we  construct $\Omega_{jk}(p)$ and thus $b_j(p)$ for each good. 

Now, given the marginal cost specification $\ln(mc_j)=\ln(p_j-b_j(p)-\lambda VER_j)=w_j'\gamma+\omega_j,$ with the parameter estimates on $\lambda,\gamma$ from the paper and data on the VER dummies $VER_j$, we back out the unobservable factors in the marginal cost specification, $\omega_j = \ln(p_j-b_j(p)-\lambda VER_j)-w_j'\gamma$. Given these estimates for $\omega_j$'s, to determine the counterfactual prices without the VER restriction, we set $\lambda$ (the implicit tax) to zero, and solve for the vector of prices $p^*$ which satisfies the following equations, $\ln(p_j^*-b_j(p^*))=w_j'\gamma+\omega_j,\forall j= 1,2,..,J$. This gives us the price changes implied by the VER restrictions, to which we then apply the results of \Cref{sec:distributional}.

\end{document}